\documentclass[10pt,Journal]{IEEEtran}

\usepackage{amsmath,amssymb,amsfonts,amsthm,mathtools,bm}
\usepackage{algorithm}
\usepackage{algpseudocode}
\usepackage{graphicx}
\usepackage{booktabs}
\usepackage{multirow}
\usepackage{enumitem}
\usepackage{xspace}
\usepackage{hyperref}
\usepackage{subcaption}
\usepackage{array}
\usepackage{color}
\usepackage{cite}
\usepackage{tikz}
\usepackage{stmaryrd}
\usepackage{physics}
\graphicspath{{figures/}} 
\usepackage{caption}
\usepackage{bm}          

\DeclareMathOperator{\Ext}{Ext}
\newtheorem{definition}{Definition}[section]
\newtheorem{theorem}[definition]{Theorem}
\newtheorem{lemma}[definition]{Lemma}
\newtheorem{proposition}[definition]{Proposition}
\newtheorem{corollary}[definition]{Corollary}
\newtheorem{remark}[definition]{Remark}
\newtheorem{example}[definition]{Example}

\newcommand{\Qsys}{\mathcal{Q}}
\newcommand{\PC}{P_C}
\newcommand{\PQ}{P_Q}
\newcommand{\HQ}{\mathcal{H}_Q}
\newcommand{\HA}{\mathcal{H}_A}
\newcommand{\viewA}{\mathrm{view}_A}
\newcommand{\Acc}{\mathrm{Acc}}
\newcommand{\Sec}{\mathsf{Sec}}

\newcommand{\set}[1]{\left\{ #1 \right\}}
\newcommand{\llb}{\llbracket}
\newcommand{\rrb}{\rrbracket}
\newcommand{\Obs}{\mathsf{Obs}}
\newcommand{\SF}{\mathsf{SF}}

\tikzset{
    z spider/.style={circle, draw=black, thick, fill=green!40, minimum size=6mm, inner sep=0pt},
    x spider/.style={circle, draw=black, thick, fill=red!40, minimum size=6mm, inner sep=0pt}
}

\newcommand{\zxdiscard}[1]{
    \draw[thick] (#1) -- ++(0, -0.25) coordinate (#1_g);
    \draw[thick] ([xshift=-0.25cm]#1_g) -- ([xshift=0.25cm]#1_g);
    \draw[thick] ([xshift=-0.15cm, yshift=-0.12cm]#1_g) -- ([xshift=0.15cm, yshift=-0.12cm]#1_g);
    \draw[thick] ([xshift=-0.05cm, yshift=-0.24cm]#1_g) -- ([xshift=0.05cm, yshift=-0.24cm]#1_g);
}

\begin{document}

\title{Current-State Opacity in Safe Partially Observed Quantum Petri Nets: 
True-Concurrency Semantics and Exact Symbolic Verification}

\author{Sichen Ding and Zhiwu Li,~\IEEEmembership{Fellow,~IEEE}
\thanks{Sichen Ding and Zhiwu Li are with the Institute of Systems Engineering, 
Macau University of Science and Technology, Taipa 999078, Macau SAR, 
China (e-mail: 3230006283@student.must.edu.mo; zwli@must.edu.mo).}}

\maketitle

\begin{abstract}
Classical opacity theory for discrete-event systems relies strictly on observable event sequences, 
fundamentally failing to capture security breaches in hybrid architectures where 
an attacker exploits both classical traces and localized quantum correlations. 
To address this gap, we formalize current-state opacity within the framework of 
safe partially observed quantum Petri nets by introducing a true-concurrency semantics that 
represents classical observations as partially ordered multisets via unfolding configurations.
Building upon this, we define quantitative posterior-state leakage as the trace distance between 
the attacker's localized quantum states, evaluated conditionally on whether 
the underlying system has reached a secret or non-secret marking. 
This formulation strictly preserves classical opacity definitions. 
To achieve computational tractability, we apply the stabilizer formalism and develop 
an exact symbolic verification algorithm. By combining targeted unfolding exploration, 
state aggregation exclusively at maximal unobservable reach, and stabilizer-tableau propagation, 
this procedure circumvents both concurrent interleaving explosions and 
exponential density-matrix overhead. Finally, an entanglement-swapping case study validates 
the exact leakage evaluation, demonstrates substantial computational gains, 
and establishes a rigorous interface for counterexample-guided leakage enforcement.
\end{abstract}

\begin{IEEEkeywords}
Current-state opacity, partial observation, quantum Petri net, stabilizer formalism, 
supervisory control, symbolic verification.
\end{IEEEkeywords}

\section{Introduction}
\label{sec:introduction}

Opacity is a fundamental confidentiality property for partially observed discrete-event systems (DESs), 
formally requiring that an external observer cannot infer with certainty whether 
secret behavior has occurred or if the current state belongs to a secret set. 
Since the lanuch of foundational state-based formulations of opacity for DESs, 
an extensive body of research has touched upon the modeling, verification, comparison, 
and enforcement of this property across both automata and Petri-net frameworks 
\cite{saboori2007notions,lin2011opacity,saboori2014prob,jacob2016overview,lafortune2018history,wu2013comparative,balun2021comparing,wintenberg2022framework,yin2019stochastic}. 
Concurrently, synthesis-oriented studies have developed a variety of active mitigation strategies, 
including supervisory control, insertion functions, edit functions, 
and dynamic-observation mechanisms for strict opacity enforcement 
\cite{dubreil2010supervisory,wu2014insertion,wu2016optimal,ji2018enforcement,yin2020dynamic,ji2019edit,xie2024optimal}. 
Within this landscape, Petri nets are particularly attractive when concurrency, synchronization, 
and resource sharing dictate system dynamics. 
Consequently, opacity verification for labeled Petri nets has progressively advanced from 
static state-based reachability to sophisticated online and partially observed settings 
\cite{tong2017verification,cong2018online,saadaoui2020current,han2023strong}.

A more demanding setting arises in hybrid quantum-classical networked systems, including 
quantum repeaters, purification-based long-distance communication architectures, and 
quantum-internet platforms 
\cite{briegel1998repeater,dur1999repeater,kimble2008internet,munro2015repeaters,wehner2018internet,cacciapuoti2020internet}. 
In these architectures, asynchronous classical communication and event-driven control logic, such as 
heralding signals and feed-forward corrections, strictly orchestrate the evolution of fragile, 
distributed quantum resources. However, classical opacity theory is fundamentally insufficient for 
this paradigm because it strictly limits the attacker's observation capability 
to classical event logs. 
The hybrid nature of these systems profoundly expands the threat surface: 
An eavesdropper may physically 
possess a subset of the persistent quantum registers, formally designated as the 
\emph{attacker interface}, while 
simultaneously monitoring the public classical control traffic.

Given that routine quantum operations inherently 
distribute correlations, the execution of unobservable internal control branches can establish hidden 
quantum entanglement with the attacker's local interface. This creates 
a critical verification blind spot: 
A classical DES verifier might declare a system perfectly opaque because the secret and non-secret 
executions emit the identical public event sequence. Yet, physically, 
these two structurally equivalent 
executions may leave the attacker's local quantum registers 
in mathematically distinguishable density operators. 
This paradigm shift is profound. Classical observation equivalence at 
the control layer provides 
a false sense of security; to guarantee true confidentiality, 
the criterion for current-state opacity must advance from the set-theoretic ambiguity of 
classical state estimates to the quantitative indistinguishability of localized density operators. 
By rigorously evaluating these posterior density operators conditioned on the 
publicly observable classical event traces, this formulation provides 
a critical theoretical advantage: It successfully unifies the causal rigor of 
discrete-event control theory with the quantitative precision of quantum information theory, 
yielding an exact, physically meaningful measure of information leakage \cite{ying2013quantum}.

Concurrency exacerbates this fundamental challenge. A purely interleaving semantics expands 
a concurrent execution into all of its possible linearizations, thereby duplicating behavior 
that differs only by arbitrary permutations of independent actions. This state-space explosion 
is well recognized in classical concurrency theory, motivating the development of 
partial-order models, event structures, and unfolding-based verification to capture causality, 
conflict, and independence without redundant interleavings 
\cite{pratt1986partialorders,winskel1987event,mcmillan1992using,esparza2002improvement}. 
In the context of opacity, the consequences of interleaving are both conceptual and algorithmic: 
If public observations are represented by raw linear interleavings, the executions 
that are semantically equivalent up to structural independence may be treated as 
distinct observations, even though their divergence carries no security-relevant significance. 
To resolve this, an appropriate mathematical abstraction should combine a true-concurrency semantics 
with an observation model that inherently factors out independent reorderings.
This requirement is particularly critical for quantum Petri nets, 
where a rigorously grounded concurrent semantics has only recently begun to emerge 
\cite{joachim2025qpn}.

To transition from abstract theoretical semantics to a constructive verification framework, 
it is crucial to identify a fragment of the underlying quantum dynamics 
where exact symbolic reasoning remains computationally tractable.
The stabilizer fragment provides an ideal mathematical substrate for this purpose: 
Clifford unitary evolution and projective Pauli measurements admit highly efficient simulation 
and canonical tableau representations, circumventing the exponential overhead of 
global density matrices by confining dense computations strictly to the localized attacker interface 
\cite{aaronson2004improved}. Complementing this algorithmic tractability, 
the ZX-calculus offers a rigorously complete diagrammatic proof language for 
stabilizer quantum mechanics~\cite{backens2014zx,coecke2017picturing}. 
Through its mixed-state extensions, 
it provides a sound topological calculus for completely positive maps and process equivalence 
~\cite{carette2019completeness,wetering2020zx,selinger2007dagger}.
By synthesizing these algorithmic and diagrammatic foundations, true-concurrency opacity analysis for 
quantum Petri nets can transcend purely abstract formalisms to become both semantically rigorous 
and explicitly executable.

To bridge these theoretical and algorithmic strands, we introduce 
\emph{safe partially observed quantum Petri nets} (SPO-QPNs), 
an integrated modeling framework that combines a safe control net with a persistent quantum-register 
layer and a strictly fixed attacker interface. The safe control layer governs causality, 
conflict, synchronization, and resource exclusion, while the persistent quantum-register layer 
ensures that all posterior attacker states are mathematically well-posed over a common, 
invariant Hilbert space physically corresponding to the attacker's accessible interface. 
This architectural design is essential: Without anchoring the local states to 
a fixed physical subset of registers, trace-distance-based leakage quantification is 
mathematically ill-posed across different concurrent executions.


Building upon this structural foundation, we define an event-structure semantics 
in which public observations are formulated not as raw linear interleavings, 
but as rigorously defined observation pomsets (partially ordered multisets), 
which naturally capture concurrent events without imposing artificial ordering.
This yields a true-concurrency semantics for partial observation that perfectly reflects 
the exact commutation behavior of independent quantum branch updates. Within this semantics, 
we formulate qualitative structural current-state opacity and 
quantitative quantum leakage over \emph{normalized} posterior attacker states 
conditioned on an observation pomset and the secret condition. 
The resulting trace-distance leakage metric has a direct operational meaning: Under equal priors, 
it uniquely quantifies the optimal distinguishing advantage between the secret and non-secret 
normalized posteriors, thereby rigorously isolating the genuinely quantum footprint 
from classical probabilistic biases.

To make this theory computationally constructive, we restrict the underlying dynamics to 
the safe stabilizer fragment and develop an exact symbolic verification architecture. 
This true-concurrency engine propagates canonical stabilizer-tableau representations rather than 
exponentially large density matrices, aggregates concurrent executions strictly by 
observation pomsets rather than raw linear interleavings, and computes posterior attacker states 
directly in symbolic form. Complementing this algorithmic layer, we introduce a ZX-calculus 
certificate framework. While it does not drive the primary state-space exploration, 
it distills the symbolic posteriors into compact, rewrite-stable diagrammatic normal forms, 
providing independently verifiable proof objects for posterior-state equivalence and leakage witnesses.

A further contribution is moving beyond exact verification toward automated opacity enforcement. 
Once an observation pomset causing an opacity violation is identified, 
the framework leverages causal hitting sets to synthesize local policy updates. 
This provides a mathematically sound combination of classical flow restriction ($\delta$-updates) 
to prune specific configuration subsets, and invisible quantum state obfuscation ($\mu$-updates) to 
inject local depolarizing noise. The foundational analytical principle driving this synthesis is 
the contractivity of trace distance under completely positive trace-preserving maps, 
which mathematically guarantees that the localized invisible masking provably bounds distinguishability 
without altering the classical visibility semantics. In this light, rather than pursuing a globally 
optimal supervisory-control synthesis, we establish a rigorous, counterexample-guided 
enforcement loop that naturally aligns with the semantic and symbolic machinery developed above.

By systematically integrating these structural modeling, symbolic verification, and 
policy enforcement paradigms, this paper establishes a comprehensive mathematical framework 
for analyzing and securing concurrent quantum discrete-event systems. 
Specifically, the main theoretical and algorithmic contributions are summarized as follows:
\begin{enumerate}[leftmargin=*]
    \item Introduction of safe partially observed quantum Petri nets (SPO-QPNs) 
    as a verification-oriented model for concurrent quantum systems, 
    integrating a safe control net, a persistent quantum-register layer, 
    and a fixed attacker interface.

    \item Establishment of operational and true-concurrency semantics for SPO-QPNs based on 
    unfolding configurations and observable pomsets, coupled with a formal proof that 
    the induced configuration denotations are well-defined and invariant under permutations of 
    independent events.

    \item Formulation of structural opacity and posterior-state leakage under partial observation 
    via attacker posteriors conditioned on observation pomsets, 
    accompanied by a classical conservativity result demonstrating that the proposed framework 
    properly extends classical Petri-net opacity.

    \item Design of an exact symbolic verification procedure for the safe stabilizer fragment, 
    leveraging targeted unfolding exploration, aggregation at maximal unobservable reach, 
    and stabilizer-tableau propagation, supported by a rigorous analysis confirming 
    its correctness, decidability, and computational complexity.

    \item Construction of ZX-calculus certificates for stabilizer-fragment posteriors 
    to serve as an auxiliary certification layer, alongside formal proofs 
    demonstrating that sound diagrammatic rewrites preserve the posterior equivalences 
    and leakage witnesses used in opacity verification.
    
    \item Development of a systematic approach to verification-guided leakage reduction via 
    supervisory restriction and unobservable masking, featuring formal proofs of correctness 
    for the resulting closed-loop enforcement.
    
    \item Application of the theoretical framework to a representative case study, 
    demonstrating that the proposed semantics accurately captures leakage mechanisms 
    induced by true concurrency and quantum side information. Numerical analysis 
    of this case further confirms that the symbolic procedure provides significant 
    computational advantages over interleaving-based density-matrix exploration, 
    and that the exact verification results can successfully guide cost-aware leakage mitigation.
\end{enumerate}

The remainder of this paper is organized as follows. 
Section~\ref{sec:preliminaries} reviews the required mathematical background. 
The SPO-QPN framework is formally introduced in Section~\ref{sec:model}, 
followed by the development of its operational and 
event-structure semantics in Section~\ref{sec:semantics}. 
Based on these theoretical foundations, Section~\ref{sec:opacity} defines quantum opacity under 
partial observation. 
Section~\ref{sec:algorithm} details the exact symbolic verification algorithm tailored to 
the stabilizer fragment. 
As an auxiliary certification layer, the ZX-calculus framework is presented 
in Section~\ref{sec:zx_certificates}, 
while Section~\ref{sec:enforcement} addresses the problem of automated opacity enforcement. 
The theoretical and algorithmic contributions are then validated through 
an entanglement-swapping case study in Section~\ref{sec:case_repeater}. 
Finally, Section~\ref{sec:conclusion} concludes the paper.

%

\section{Preliminaries}
\label{sec:preliminaries}

This section fixes the Petri-net, observation, and quantum-information notations used 
throughout the paper. The presentation is organized in the same order as the later development: 
the underlying safe control-net semantics, 
the observation object compatible with true concurrency, 
the quantum-process semantics, and finally the metric and certificate layer used 
in the opacity analysis.

\subsection{Safe Petri Nets and Their Unfoldings}

The control layer of the model is formalized as a safe place/transition net. 
To this end, we first review the standard safe-net notation, 
followed by the unfolding semantics which characterizes system executions modulo interleavings.

A safe place/transition net is defined as a quadruple $N = (P, T, F, M_0)$, 
where $P$ and $T$ are finite disjoint sets of places and transitions, respectively; 
$F \subseteq (P \times T) \cup (T \times P)$ is the flow relation; 
$M_0 \subseteq P$ is the initial marking. 
Since we restrict our attention to the safe setting, 
every reachable marking $M$ is uniquely characterized by the set of places it marks, 
and thus is identified with a subset $M \subseteq P$. 
For any $x \in P \cup T$, the preset and postset of $x$ are defined by 
${}^{\bullet}x = \{y \in P \cup T \mid (y, x) \in F\}$ 
and $x^{\bullet} = \{y \in P \cup T \mid (x, y) \in F\}$, respectively. 
Adopting the standard contact-free firing rule, 
a transition $t \in T$ is enabled at a marking $M \subseteq P$, 
denoted by $M[t\rangle$, if
$${}^{\bullet}t \subseteq M \quad \text{and} \quad (M \setminus {}^{\bullet}t) \cap t^{\bullet} = \emptyset.$$
The firing of such a transition evolves the marking $M$ into $M'$, 
denoted by $M[t\rangle M'$, where
$$M' = (M \setminus {}^{\bullet}t) \cup t^{\bullet}.$$

Given that the subsequent opacity semantics must distinguish causality from mere interleaving order, 
the concurrent executions of $N$ are modeled via unfoldings rather than sequences. 
Consequently, we utilize occurrence nets to structurally encode the causal history 
and branching choices of these executions.
Formally, an occurrence net is an acyclic net $\mathcal{N}=(B,E,G)$, 
where $B$ is the set of conditions (representing specific occurrences of places in $P$), 
$E$ is the set of events (representing specific occurrences of transitions in $T$), 
and $G \subseteq (B \times E) \cup (E \times B)$ is the flow relation, 
such that each condition $b \in B$ has at most one pre-event (i.e., $|{}^{\bullet}b| \le 1$), 
and the partial order induced by the transitive closure of $G$ is well-founded.
The induced causality relation on events is denoted by \(<\). 
Conflict is the least hereditary symmetric relation generated by immediate conflict, 
where two distinct events are in immediate conflict whenever they share a precondition.

A configuration of \(\mathcal N\) is a finite set \(C\subseteq E\) that is conflict-free 
and downward closed with respect to \(<\). 
Configurations represent finite partially ordered executions. The unfolding of 
a safe net \(N\), denoted by \(\mathcal U(N)\), 
is the canonical branching process associated with \(N\), 
unique up to isomorphism; its configurations represent the executions of \(N\) 
modulo interleavings \cite{mcmillan1992using,esparza2002improvement}.

\subsection{Partially Ordered Multisets and Observation Classes}

The formalization of opacity relies on partial observation. 
However, conventional linear sequences (or observation strings) 
are too restrictive for the concurrent setting investigated in this work. 
Specifically, two executions that differ solely by 
the interleaving of independent observable events should be regarded as observationally equivalent. 
Consequently, partially ordered multisets are adopted as the fundamental observation objects 
to capture this equivalence.

To capture partial observation, the system's events are partitioned via a global labeling mechanism. 
Let $\Sigma_{\mathrm{o}}$ be a finite set of observable labels, 
and $\tau\notin \Sigma_{\mathrm{o}}$ be the unobservable (silent) label. 
Formally, this mechanism is given by a labeling map 
$\lambda: E \to \Sigma_{\mathrm{o}} \cup \{\tau\}$. 
An event $e \in E$ is thus deemed observable if $\lambda(e) \in \Sigma_{\mathrm{o}}$, 
and unobservable if $\lambda(e) = \tau$.
A pomset over \(\Sigma_{\mathrm{o}}\) is defined as an isomorphism class of 
finite labeled partial orders. 
Abstractly, given a finite set \(V\), a partial order \(\preceq\) on \(V\), 
and a labeling map \(\ell:V\to\Sigma_{\mathrm{o}}\), the corresponding pomset is denoted by
\[
[(V,\preceq,\ell)]_{\cong}.
\]
Such two triples \((V,\preceq,\ell)\) and \((V',\preceq',\ell')\) are isomorphic 
if there exists a bijection \(f:V\to V'\) that preserves both labels and order, namely,
\[
\ell'(f(v)) = \ell(v) \quad \text{for all } v \in V,
\]
and
\[
v_1 \preceq v_2 \iff f(v_1) \preceq' f(v_2) \quad \text{for all } v_1, v_2 \in V.
\]

To apply this abstraction to the system's behavior, 
we formalize an individual execution as an event structure. 
Specifically, any configuration \(C\) of the unfolding induces 
a finite labeled event structure \(\mathcal E_C=(C,\preceq,\lambda|_C)\), 
where the partial order \(\preceq\) is the reflexive closure of the strict causality relation \(<\) 
restricted to \(C\), 
and \(\lambda|_C: C \to \Sigma_{\mathrm{o}}\cup\{\tau\}\) is 
the labeling mapping applied to the events in \(C\). 
The set of observable events within this specific execution is defined as
\[
C_{\mathrm{o}}=\{e\in C:\lambda(e)\neq \tau\}.
\]
The observation class induced by the configuration \(C\) is the pomset
\[
\Obs(C)
\coloneqq 
\big[(C_{\mathrm{o}},\,\preceq|_{C_{\mathrm{o}}\times C_{\mathrm{o}}},\,\lambda|_{C_{\mathrm{o}}})\big]_{\cong}.
\]
Two configurations are considered observation-equivalent if they induce the same pomset under \(\Obs\). 
By definition, \(\Obs(C)\) abstracts away unobservable events and identifies the executions 
that differ only by a reordering of concurrent observable events, 
while preserving both observable labels and the observable causal order. 
Crucially, this formalism ensures that the observation semantics, 
and the subsequent opacity analysis, are evaluated locally on individual concurrent executions 
(i.e., configurations) rather than globally on the monolithic unfolding graph. 
In the sequel, the observation class of any configuration \(C\) 
is denoted uniformly by \(\Obs(C)\).

\subsection{Quantum States, Channels, and Instruments}

We next fix the quantum-process notation used for branch semantics. 
Throughout the paper, all Hilbert spaces are finite-dimensional complex Hilbert spaces.

For Hilbert spaces \(\mathcal H\) and \(\mathcal K\), let \(\mathcal L(\mathcal H,\mathcal K)\) 
denote the space of linear operators from \(\mathcal H\) to \(\mathcal K\), 
and write \(\mathcal L(\mathcal H)= \mathcal L(\mathcal H,\mathcal H)\). 
The set of density operators on \(\mathcal H\) is
\[
\mathcal D(\mathcal H)
=
\{\rho\in\mathcal L(\mathcal H): \rho\ge 0,\ \Tr(\rho)=1\},
\]
stands for \(\rho\ge 0\) denotes positive semidefiniteness, 
and \(\Tr(\cdot)\) denotes the trace operation~\cite{nielsen2010quantum}.

A linear map \(\Phi:\mathcal L(\mathcal H)\to\mathcal L(\mathcal K)\) 
is called completely positive (CP) if, for every positive integer \(n\), the ampliation
\[
\Phi\otimes \mathrm{id}_{\mathbb C^n}:
\mathcal L(\mathcal H\otimes\mathbb C^n)\to
\mathcal L(\mathcal K\otimes\mathbb C^n)
\]
maps positive semidefinite operators to positive semidefinite operators, 
where \(\mathbb C^n\) denotes the \(n\)-dimensional complex Hilbert space, 
and \(\mathrm{id}_{\mathbb C^n}\) is the identity map on \(\mathcal L(\mathbb C^n)\).
Furthermore, the map $\Phi$ is said to be trace preserving (TP) if $\Tr(\Phi(X))=\Tr(X)$ 
for every $X\in\mathcal L(\mathcal H)$, 
and trace non-increasing (TNI) if $\Tr(\Phi(X))\le \Tr(X)$ 
for every positive semidefinite operator $X\in\mathcal L(\mathcal H)$.

A quantum channel is a CP and TP map. 
A finite quantum instrument with outcome set \(R\) is a family
\[
\mathcal I=\{\Phi^{(r)}\}_{r\in R}
\]
of CP and TNI maps \(\Phi^{(r)}:\mathcal L(\mathcal H)\to\mathcal L(\mathcal K)\) 
whose sum \(\sum_{r\in R}\Phi^{(r)}\) is a quantum channel~\cite{heinosaari2011mathematical,wilde2013quantum}. 
For an input state \(\rho\in\mathcal D(\mathcal H)\), the probability of outcome \(r\) is
\[
p(r\mid \rho)=\Tr\!\big(\Phi^{(r)}(\rho)\big).
\]
Whenever this probability is strictly positive, the corresponding posterior state 
$\rho_r \in \mathcal D(\mathcal K)$ is given by
$$
\rho_r=
\frac{\Phi^{(r)}(\rho)}{\Tr\!\big(\Phi^{(r)}(\rho)\big)}.
$$

These notions formalize the branch of labelled executions introduced later: 
each branch of a transition instrument induces a CP and TNI map, 
while the sum over all branch outcomes forms a quantum channel.

\subsection{Trace Distance}

The quantitative measure of opacity employed in the subsequent analysis relies on the trace distance 
between the attacker's posterior states. To formalize this metric, 
we first recall the definition of the trace norm. 
For any linear operator $X\in\mathcal L(\mathcal H)$, its trace norm is defined as
\[
\|X\|_1=\Tr\!\big(\sqrt{X^\dagger X}\big).
\]
For \(\rho,\sigma\in\mathcal D(\mathcal H)\), the trace distance between \(\rho\) and \(\sigma\) is
\[
D(\rho,\sigma)=\frac{1}{2}\|\rho-\sigma\|_1.
\]

The trace distance takes values in the interval \([0,1]\) and is contractive under quantum channels: 
if \(\Phi\) is a quantum channel, then
\(D\big(\Phi(\rho),\Phi(\sigma)\big)\le D(\rho,\sigma)\)~\cite{nielsen2010quantum}.
Moreover, when \(\rho\) and \(\sigma\) occur with equal prior probability, 
the optimal success probability for distinguishing them is
\[
P_{\mathrm{succ}}^{\ast}(\rho,\sigma)=\frac{1+D(\rho,\sigma)}{2}.
\]
This operational interpretation motivates the use of trace distance 
as the quantitative leakage measure in Section~\ref{sec:opacity}~\cite{helstrom1976quantum}. 

\subsection{Stabilizer Fragment and ZX Certificates}

While the general opacity framework naturally accommodates arbitrary quantum processes, 
the algorithmic verification developed in this work is deliberately restricted to 
the stabilizer fragment. This restriction guarantees that both symbolic state propagation 
and diagrammatic certification remain exactly computable and scalable, 
avoiding the exponential overhead associated with universal quantum simulation.

Let $\mathcal R$ be a finite set of indexed quantum subsystems. 
The stabilizer fragment on $\mathcal R$, denoted by $\SF(\mathcal R)$, 
is formally defined as the smallest class of quantum processes generated by 
stabilizer-state preparations, Clifford unitary channels, Pauli-basis measurements, 
and the partial trace operation, completely 
closed under sequential and parallel composition \cite{aaronson2004improved}. 
Semantically, every element of $\SF(\mathcal R)$ resolves to either a deterministic quantum channel 
(if all classical measurement outcomes are traced out) or a finite quantum instrument 
(if classical outcomes are explicitly retained for branching).

To rigorously prove observational equivalence without exhaustive state-space enumeration, 
we employ the ZX-calculus as a formal certificate layer. 
Standard translations map stabilizer-fragment processes into mixed-state ZX-diagrams, 
where diagrammatic rewrite rules are strictly sound with respect to their 
underlying CP semantics 
\cite{backens2014zx,wetering2020zx,carette2019completeness,selinger2007dagger}. 
Consequently, if two stabilizer processes can be transformed into each other 
via a sequence of verified ZX-rewrites, this sequence serves as a mathematical certificate 
that they represent identical quantum channels or instruments.

\section{Safe Partially Observed Quantum Petri Nets}
\label{sec:model}

This section establishes the formal definition of the proposed system model.

\begin{definition}[SPO-QPN]
\label{def:spos_qpn}
A safe partially observed quantum Petri net is an 11-tuple
\begin{equation}
\Qsys=(\PC,\PQ,T,F_C,M_0,\rho_0,\{\mathcal{I}_t\}_{t\in T},\Acc,\Lambda,A,\Sec),
\end{equation}
where
\begin{enumerate}[leftmargin=*]
    \item $\PC$ is a finite set of control places;
    \item $\PQ$ is a finite set of persistent quantum-register places, where each $q\in \PQ$ 
    is associated with a fixed finite-dimensional complex Hilbert space $\mathcal{H}_q$, 
    forming the global quantum space
    \begin{equation}
    \HQ = \bigotimes_{q\in \PQ} \mathcal{H}_q;
    \end{equation}
    \item $T$ is a finite set of transitions;
    \item $F_C\subseteq (\PC\times T)\cup(T\times \PC)$ is the control-layer flow relation;
    \item $M_0\subseteq \PC$ is the initial control marking;
    \item $\rho_0 \in \mathcal{D}(\HQ)$ is the global initial quantum state;
    \item $\{\mathcal{I}_t\}_{t\in T}$ is a family of finite quantum instruments. 
    To ensure algorithmic tractability within this verification framework, we hereafter 
    restrict these instruments to the \emph{stabilizer fragment}. Each 
    $\mathcal{I}_t=\{\Phi_t^{(r)}\}_{r\in R_t}$ acts non-trivially only on 
    $\bigotimes_{q\in \Acc(t)} \mathcal{H}_q$ and is implicitly lifted to 
    $\mathcal{L}(\HQ)$ via the tensor product with the identity map on unaccessed quantum registers;
    \item $\Acc:T\to 2^{\PQ}$ assigns to each transition the subset of persistent quantum 
    registers it accesses;
    \item $\Lambda: \{(t,r) \mid t \in T, r \in R_t\} \to \Sigma_{\mathrm{o}}\cup\{\tau\}$ is 
    the joint branch-labeling function, assigning either an observable label 
    from $\Sigma_{\mathrm{o}}$ or the unobservable (silent) label $\tau$ 
    to each specific branch;
    \item $A\subseteq \PQ$ is the fixed attacker interface, defining 
    the attacker's localized Hilbert space
    \begin{equation}
    \HA = \bigotimes_{q\in A} \mathcal{H}_q;
    \end{equation}
    \item $\Sec \subseteq 2^{\PC}$ is the set of markings designated as secret.
\end{enumerate}
\end{definition}

This formulation deliberately separates the classical control and quantum operational layers. 
Control tokens are consumed and produced strictly within the safe net, 
while the quantum state evolves monotonically on a fixed register architecture. 
This separation avoids any mathematical ambiguity between token flow and subsystem creation, 
guaranteeing that all attacker posterior states reside in the strictly invariant Hilbert space $\HA$.

\subsection{Operational Firing Rule}

The global execution state of the system is strictly represented by a pair $(M, \rho)$, 
where $M \subseteq \PC$ is the current control marking and $\rho \in \mathcal{D}(\HQ)$ 
is the global quantum state. A transition $t \in T$ is classically enabled at $M$ if
\begin{equation}
{}^{\bullet}t \subseteq M \qquad \text{and} \qquad (M \setminus {}^{\bullet}t) \cap t^{\bullet} = \emptyset.
\end{equation}
Suppose that $t$ is enabled. The execution of $t$ along a specific measurement branch $r \in R_t$ 
is realizable if and only if its corresponding outcome probability is strictly positive, i.e,
\begin{equation}
\Pr[(t,r)\mid M,\rho] = \Tr\!\big(\Phi_t^{(r)}(\rho)\big) > 0.
\end{equation}
Conditioned on this positive outcome, the system deterministically evolves 
to a new state $(M',\rho')$, where the updated control marking is
\begin{equation}
M'=(M\setminus{}^{\bullet}t)\cup t^{\bullet},
\end{equation}
and the normalized posterior quantum state is
\begin{equation}
\rho' = \frac{\Phi_t^{(r)}(\rho)}{\Tr\!\big(\Phi_t^{(r)}(\rho)\big)}.
\end{equation}
At any point during the execution, the local state accessible to the attacker 
is obtained by taking the partial trace over all quantum registers outside the attacker interface:
\begin{equation}
\viewA(\rho)=\Tr_{\PQ\setminus A}(\rho).
\end{equation}

\begin{definition}[Structural Independence]
\label{def:independence}
Two distinct transitions $t_1,t_2\in T$ are said to be structurally independent if they satisfy 
the following two conditions:
\begin{enumerate}[leftmargin=*]
    \item $({}^{\bullet}t_1 \cup t_1^{\bullet}) \cap ({}^{\bullet}t_2 \cup t_2^{\bullet}) = \emptyset$;
    \item $\Acc(t_1)\cap \Acc(t_2)=\emptyset$.
\end{enumerate}
\end{definition}

Condition 1 ensures that the transitions are structurally conflict-free and 
causally independent within the classical control layer, while Condition 2 restricts that 
they operate on strictly disjoint quantum registers. This joint formulation strictly anchors 
the true-concurrency semantics: it guarantees that any two independent branch operations 
map disjoint tensor factors and inherently commute up to the canonical symmetry of 
tensor products, thereby allowing their corresponding events to be completely unordered 
in a configuration.

\section{Operational and Event-Structure Semantics}
\label{sec:semantics}

\subsection{Linear Executions and Quantum Denotations}

This subsection establishes the semantics of a fully sequentialized system run. 
A branch-labelled execution of $\Qsys$ is a finite alternating sequence of states 
and branch firings, denoted by
\begin{equation}
\pi = (M_0,\rho_0) \xrightarrow{(t_1,r_1)} (M_1,\rho_1) \xrightarrow{(t_2,r_2)} 
\cdots \xrightarrow{(t_n,r_n)} (M_n,\rho_n),
\end{equation}
where each step adheres to the operational firing rule defined in Section~\ref{sec:model}. 

The quantum denotation of the execution $\pi$, 
representing the unnormalized cumulative state evolution, is the composed CP map
\begin{equation}
\llb \pi \rrb = \Phi_{t_n}^{(r_n)} \circ \cdots \circ \Phi_{t_1}^{(r_1)}.
\end{equation}
By convention, the denotation of an empty execution (where $n=0$) is 
the identity superoperator $\mathrm{id}_{\mathcal{L}(\HQ)}$. 
The probability of realizing the execution $\pi$ from the initial state $\rho_0$ is
\begin{equation}
\Pr(\pi) = \Tr\!\big(\llb \pi \rrb(\rho_0)\big).
\end{equation}
Whenever $\Pr(\pi) > 0$, the exact normalized posterior quantum state reached 
at the end of the execution is
\begin{equation}
\rho_\pi = \frac{\llb \pi \rrb(\rho_0)}{\Pr(\pi)}.
\end{equation}

\subsection{Configuration Semantics and Commutation}

Linear executions impose an artificial total order on concurrent events, 
which obfuscates observational equivalence. 
To capture the true-concurrency semantics of $\Qsys$, 
we evaluate the executions over the branch-expanded control net. 
Specifically, let us define the expanded transition set as 
$T_{\mathrm{branch}} = \set{(t, r) \mid t \in T, r \in R_t}$, 
where the pre-set and post-set of each $(t, r)$ are strictly given by 
${}^{\bullet}(t,r) \coloneqq {}^{\bullet}t$ and $(t,r)^{\bullet} \coloneqq t^{\bullet}$, respectively.

Let $\mathcal{U}(\Qsys) = (B, E, G)$ denote the occurrence net representing 
the unfolding of this branch-expanded structure. 
By definition, each event $e \in E$ represents a specific occurrence of a branch, 
denoted by $(t_e, r_e) \in T_{\mathrm{branch}}$. 
Since distinct branches of a given transition possess identical structural pre-sets, 
any distinct events in $E$ representing mutually exclusive measurement outcomes of 
the same transition occurrence necessarily share pre-conditions in $B$, 
which constitutes an immediate conflict by definition.
Consequently, the event is unambiguously assigned the observable label 
$\lambda(e) \coloneqq \Lambda(t_e, r_e)$, 
and its individual quantum denotation is defined exactly as 
the CP map of this underlying branch: $\llb e \rrb \coloneqq \Phi_{t_e}^{(r_e)}$.

As formalized in Section~\ref{sec:preliminaries}, a configuration $C$ of $\mathcal{U}(\Qsys)$ 
is a finite, causally closed, and conflict-free set of event occurrences. 
The observation class of the configuration is exactly the pomset $\Obs(C)$ derived from 
its observable events. 
To define the quantum denotation of the configuration itself, 
let $e_1, e_2, \dots, e_k$ be any linearization (topological sort) of 
the events in $C$ that is compatible with the causal partial order $\preceq$. 
We define:
\begin{equation}
\llb C \rrb = \llb e_k \rrb \circ \cdots \circ \llb e_1 \rrb.
\end{equation}

The algebraic soundness of this definition is predicated on the fact that any two concurrent events 
in a configuration are structurally independent.

\begin{lemma}[Commutation of Independent Branches]
\label{lem:commutation}
If two enabled branches $(t_1,r_1)$ and $(t_2,r_2)$ are structurally independent 
(as per Definition~\ref{def:independence}), their corresponding quantum maps strictly commute:
\begin{equation}
\Phi_{t_2}^{(r_2)} \circ \Phi_{t_1}^{(r_1)} = \Phi_{t_1}^{(r_1)} \circ \Phi_{t_2}^{(r_2)}.
\end{equation}
\end{lemma}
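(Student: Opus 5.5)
The plan is to reduce the statement to the elementary fact that superoperators acting on disjoint tensor factors commute. By Definition~\ref{def:spos_qpn}(7), each $\Phi_{t_i}^{(r_i)}$ acts non-trivially only on $\mathcal H_{S_i}:=\bigotimes_{q\in \Acc(t_i)}\mathcal H_q$. It is lifted to $\mathcal L(\HQ)$ by tensoring with the identity map on the remaining registers. Condition~2 of Definition~\ref{def:independence} gives $S_1\cap S_2=\emptyset$. I will therefore fix the bipartition-into-three $\PQ = S_1 \sqcup S_2 \sqcup S_3$ with $S_3=\PQ\setminus(S_1\cup S_2)$. Using the canonical symmetry isomorphism of the tensor product, I identify $\HQ\cong \mathcal H_{S_1}\otimes\mathcal H_{S_2}\otimes\mathcal H_{S_3}$. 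Under this identification the lifted maps are
\[
\widehat\Phi_1=\phi_1\otimes \mathrm{id}_{S_2}\otimes\mathrm{id}_{S_3},\qquad
\widehat\Phi_2=\mathrm{id}_{S_1}\otimes \phi_2\otimes\mathrm{id}_{S_3},
\]
where $\phi_i:\mathcal L(\mathcal H_{S_i})\to\mathcal L(\mathcal H_{S_i})$ are the local branch maps. Condition~1 plays no role in the algebra itself; it only guarantees that both firing orders are classically admissible in the control net.

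The key computation is on product operators. Since $\mathcal L(\mathcal H_{S_1}\otimes\mathcal H_{S_2}\otimes\mathcal H_{S_3})\cong \mathcal L(\mathcal H_{S_1})\otimes\mathcal L(\mathcal H_{S_2})\otimes\mathcal L(\mathcal H_{S_3})$ in finite dimension, every operator is a finite linear combination of elementary tensors $X_1\otimes X_2\otimes X_3$. On such a tensor, both composites evaluate to $\phi_1(X_1)\otimes\phi_2(X_2)\otimes X_3$, using only the mixed-product rule for tensor products of linear maps. Both composites are linear, so they agree on a spanning set and hence everywhere. Equivalently, both equal $\phi_1\otimes\phi_2\otimes\mathrm{id}_{S_3}$ by functoriality, $(f\otimes g)\circ(f'\otimes g')=(f\circ f')\otimes(g\circ g')$. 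One could alternatively argue via Kraus operators, $\{K_a\otimes I\}$ and $\{I\otimes L_b\}$, whose products commute. However, branches of a stabilizer instrument need not be presented in Kraus form with a single output space, so I prefer the basis-free argument.

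The main obstacle is bookkeeping rather than mathematics. The definition only says that the instrument ``acts non-trivially only on'' $\Acc(t)$, and I must make sure the input and output spaces of each local map coincide with $\mathcal H_{S_i}$. Registers are persistent, so no subsystem is created or discarded, and the lifting is genuinely of the form $\phi_i\otimes\mathrm{id}$ on a fixed factor. I will state this explicitly, including the case where $\Acc(t_i)=\emptyset$, in which $\phi_i$ is scalar multiplication and commutation is trivial.

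I will also remark that the identity holds as equality of CP maps. In particular it holds irrespective of the input state, so the realizability probabilities and the normalized posteriors obtained by firing in either order coincide. This is what the subsequent well-definedness of $\llb C\rrb$ requires.
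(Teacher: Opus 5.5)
Your proposal is correct and follows essentially the same route as the paper. Like the paper, you use only Condition~2 of Definition~\ref{def:independence} to identify $\HQ\cong\mathcal H_{S_1}\otimes\mathcal H_{S_2}\otimes\mathcal H_{S_3}$, write the lifted maps as $\phi_1\otimes\mathrm{id}\otimes\mathrm{id}$ and $\mathrm{id}\otimes\phi_2\otimes\mathrm{id}$, and conclude by the mixed-product property that both composites equal $\phi_1\otimes\phi_2\otimes\mathrm{id}_{S_3}$; your extra bookkeeping (the check on elementary tensors, the case $\Acc(t_i)=\emptyset$, and noting that persistent registers keep input and output spaces equal) only makes explicit what the paper leaves implicit.
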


\begin{proof}[Proof of Lemma~\ref{lem:commutation}]
By Definition~\ref{def:independence}, the structural independence of transitions $t_1$ and $t_2$ 
dictates that they operate on strictly disjoint sets of quantum registers, 
i.e., $\Acc(t_1) \cap \Acc(t_2) = \emptyset$. 

Let $Q_1 = \Acc(t_1)$ and $Q_2 = \Acc(t_2)$, and 
let $Q_{\mathrm{rest}} = \PQ \setminus (Q_1 \cup Q_2)$ 
denote the set of all remaining unaccessed registers. 
Up to a canonical permutation of tensor factors, the global Hilbert space 
can be orthogonally decomposed as
\begin{equation*}
\HQ \cong \mathcal{H}_{Q_1} \otimes \mathcal{H}_{Q_2} \otimes \mathcal{H}_{Q_{\mathrm{rest}}},
\end{equation*}
where $\mathcal{H}_{Q_i} = \bigotimes_{q \in Q_i} \mathcal{H}_q$.

According to the operational semantics defined in Section~\ref{sec:model}, 
the local CP map associated with branch $(t_1, r_1)$, 
denoted by $\tilde{\Phi}_{t_1}^{(r_1)}$, 
acts non-trivially only on $\mathcal{L}(\mathcal{H}_{Q_1})$. 
Its implicit lifting to the global operator space $\mathcal{L}(\HQ)$ is exactly the tensor product
\begin{equation*}
\Phi_{t_1}^{(r_1)} = \tilde{\Phi}_{t_1}^{(r_1)} \otimes \mathrm{id}_{Q_2} \otimes 
\mathrm{id}_{Q_{\mathrm{rest}}},
\end{equation*}
where $\mathrm{id}_{Q}$ denotes the identity superoperator on 
$\mathcal{L}(\mathcal{H}_Q)$. Similarly, the lifted quantum map for branch $(t_2, r_2)$ is
\begin{equation*}
\Phi_{t_2}^{(r_2)} = \mathrm{id}_{Q_1} \otimes \tilde{\Phi}_{t_2}^{(r_2)} \otimes 
\mathrm{id}_{Q_{\mathrm{rest}}}.
\end{equation*}

Applying the composition of these two global maps yields:
\begin{align*}
&\Phi_{t_2}^{(r_2)} \circ \Phi_{t_1}^{(r_1)} \\
&= \big( \mathrm{id}_{Q_1} \otimes \tilde{\Phi}_{t_2}^{(r_2)} \otimes \mathrm{id}_{Q_{\mathrm{rest}}} \big) \circ \big( \tilde{\Phi}_{t_1}^{(r_1)} \otimes \mathrm{id}_{Q_2} \otimes \mathrm{id}_{Q_{\mathrm{rest}}} \big) \\
&= (\mathrm{id}_{Q_1} \circ \tilde{\Phi}_{t_1}^{(r_1)}) \otimes (\tilde{\Phi}_{t_2}^{(r_2)} \circ \mathrm{id}_{Q_2}) \otimes (\mathrm{id}_{Q_{\mathrm{rest}}} \circ \mathrm{id}_{Q_{\mathrm{rest}}}) \\
&= \tilde{\Phi}_{t_1}^{(r_1)} \otimes \tilde{\Phi}_{t_2}^{(r_2)} \otimes \mathrm{id}_{Q_{\mathrm{rest}}},
\end{align*}
where the second equality follows directly from the mixed-product property of 
tensor products of linear maps. 
By symmetry, evaluating the composition in the reverse order 
$\Phi_{t_1}^{(r_1)} \circ \Phi_{t_2}^{(r_2)}$ yields the exact same tensor product. 
Consequently, the two global superoperators strictly commute on $\mathcal{L}(\HQ)$.
\end{proof}

\begin{theorem}[Well-Definedness of Configuration Denotation]
\label{thm:config_welldefined}
For every finite configuration $C$ of $\mathcal{U}(\Qsys)$, 
the quantum denotation $\llb C \rrb$ is invariant under the choice of linearization. 
\end{theorem}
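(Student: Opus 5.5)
The plan is to reduce invariance under linearization to the invariance of the composite under adjacent transpositions of concurrent events, and then invoke Lemma~\ref{lem:commutation}. The backbone is the classical combinatorial fact that any two linear extensions of a finite partial order $(C,\preceq)$ are connected by a finite sequence of adjacent swaps. Each swap exchanges two consecutive elements that are incomparable under $\preceq$, and every intermediate sequence is again a linear extension. I would prove this by induction on the number of inversions between a fixed linearization $\sigma$ and another one $\sigma'$. If $\sigma\neq\sigma'$, there is an adjacent pair $e_i,e_{i+1}$ in $\sigma$ whose relative order is reversed in $\sigma'$. Such a pair cannot be causally ordered, because $\sigma'$ respects $\preceq$. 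Swapping the pair therefore yields a linearization with one fewer inversion relative to $\sigma'$.

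The local step is to show that swapping an adjacent concurrent pair $e_i,e_{i+1}$ leaves $\llb e_k\rrb\circ\cdots\circ\llb e_1\rrb$ unchanged. Because composition is associative, it suffices to show $\llb e_{i+1}\rrb\circ\llb e_i\rrb=\llb e_i\rrb\circ\llb e_{i+1}\rrb$. That identity follows from Lemma~\ref{lem:commutation}, provided the underlying transitions $t_{e_i},t_{e_{i+1}}$ are structurally independent in the sense of Definition~\ref{def:independence}. Strictly, the lemma only uses condition~2, $\Acc(t_{e_i})\cap\Acc(t_{e_{i+1}})=\emptyset$, since the identity is purely algebraic on lifted superoperators and does not depend on enabledness or positivity of probabilities.

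The main obstacle is justifying that two events in a configuration that are incomparable under $\preceq$ (concurrent, and conflict-free since $C$ is a configuration) have transitions with disjoint access sets. This is exactly the claim the paper asserts just before Lemma~\ref{lem:commutation}. The control-layer part, condition~1, follows from unfolding theory. Concurrent events in a safe net's unfolding share no preconditions, since otherwise they would be in conflict. Since the prefix preceding both can be extended to a reachable cut containing ${}^\bullet e_i\cup{}^\bullet e_{i+1}$, safeness (contact-freeness) prevents their places from overlapping in presets or postsets.

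Condition~2 is not implied by the net structure as literally defined: $F_C$ carries no information about $\Acc$. I would therefore handle it by making explicit the standing well-formedness assumption the paper relies on, namely that transitions sharing a quantum register are causally serialized by the control layer. A concrete form is to attach to each $q\in\PQ$ a control place acting as a token lock. If this cannot be derived, I would state it as the hypothesis under which concurrency in $\mathcal U(\Qsys)$ implies structural independence. With that in hand, the adjacent-swap step applies, and induction over the swap sequence gives the same composite for every linearization, which proves the theorem.
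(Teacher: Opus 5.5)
Your proposal follows the paper's proof. Both arguments connect any two linearizations of $(C,\preceq)$ by adjacent swaps of incomparable events, and both then apply Lemma~\ref{lem:commutation} at each swap. Your inversion-count induction just proves the poset fact that the paper cites.

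The one divergence is to your credit. The paper simply asserts that concurrency in $\mathcal U(\Qsys)$ forces the underlying transitions to access disjoint register sets. You correctly observe that Definition~\ref{def:spos_qpn} places no constraint linking $F_C$ to $\Acc$.

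This is a genuine hole in the paper's argument, not just a presentational one. Take two transitions with disjoint control presets and postsets that both access the same qubit $q$, one applying $H$ and the other $Z$, with $\rho_0=\lvert 0\rangle\langle 0\rvert$. Their events are concurrent in $\mathcal U(\Qsys)$. Yet the two linearizations of the configuration yield $\lvert -\rangle\langle -\rvert$ and $\lvert +\rangle\langle +\rvert$ respectively.

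So the serialization hypothesis you propose (for example, a lock place per register) is needed for the theorem to hold. Under that hypothesis your argument is complete, including your safeness-based derivation of condition~1.
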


\begin{proof}[Proof of Theorem~\ref{thm:config_welldefined}]
Let $L_1$ and $L_2$ be two arbitrary linearizations of the configuration $C$. By definition, 
both $L_1$ and $L_2$ are topological sorts of the finite partially ordered set $(C, \preceq)$. 

According to standard results in the theory of partially ordered sets 
(and Mazurkiewicz trace theory), any two linear extensions of 
a finite poset can be transformed into each other via a finite sequence of 
adjacent transpositions of concurrent elements. Thus, to prove that $\llb C \rrb$ is invariant, 
it suffices to show that its value remains strictly unchanged under the swap of 
any two adjacent concurrent events in a linearization.

Let $L = \dots, e_i, e_{i+1}, \dots$ be a linearization of $C$, 
where the adjacent events $e_i$ and $e_{i+1}$ are completely concurrent 
(i.e., $e_i \not\preceq e_{i+1}$ and $e_{i+1} \not\preceq e_i$). 
Since $e_i$ and $e_{i+1}$ are concurrent in the occurrence net $\mathcal{U}(\Qsys)$, 
they are neither causally related nor in conflict. In the underlying SPO-QPN model, 
this structural concurrency guarantees that their corresponding classical transitions $t_{e_i}$ 
and $t_{e_{i+1}}$ share no pre-places or post-places in the control layer, 
and they access strictly disjoint sets of persistent quantum registers. 
Therefore, $t_{e_i}$ and $t_{e_{i+1}}$ are structurally independent 
in the sense of Definition~\ref{def:independence}.

By Lemma~\ref{lem:commutation}, the strict structural independence of $t_{e_i}$ 
and $t_{e_{i+1}}$ ensures that their quantum branch maps commute:
\begin{align*}
\llbracket e_{i+1} \rrbracket \circ \llbracket e_i \rrbracket &= \Phi_{t_{e_{i+1}}}^{(r_{e_{i+1}})} \circ \Phi_{t_{e_i}}^{(r_{e_i})} \\
&= \Phi_{t_{e_i}}^{(r_{e_i})} \circ \Phi_{t_{e_{i+1}}}^{(r_{e_{i+1}})} = \llbracket e_i \rrbracket \circ \llbracket e_{i+1} \rrbracket
\end{align*}
Consequently, swapping $e_i$ and $e_{i+1}$ in the sequence alters the symbolic string 
but preserves the exact composition of CP maps. Since $L_1$ can be 
reached from $L_2$ through a sequence of such equivalence-preserving adjacent transpositions, 
the cumulative quantum denotation $\llb C \rrb$ is fundamentally independent of 
the chosen linearization.
\end{proof}

\begin{theorem}[True-Concurrency Invariance of Execution Semantics]
\label{thm:true_concurrency}
Let $\pi$ and $\pi'$ be two linear executions of $\Qsys$ that 
represent different linearizations (interleavings) of the same underlying configuration 
$C$ of $\mathcal{U}(\Qsys)$. Then, their quantum and observational semantics are strictly invariant:
\begin{enumerate}[leftmargin=*]
    \item $\llb \pi \rrb = \llb \pi' \rrb$,
    \item $\Pr(\pi) = \Pr(\pi')$,
    \item $\rho_\pi = \rho_{\pi'}$ whenever the probability is strictly positive, and
    \item both executions natively expose the identical observation pomset $\Obs(C)$.
\end{enumerate}
\end{theorem}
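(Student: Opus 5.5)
The plan is to reduce all four claims to Theorem~\ref{thm:config_welldefined}, together with the fact that $\Obs$ depends only on the configuration. First we make precise what it means for $\pi$ to ``represent a linearization of $C$''. The branch firings $(t_1,r_1),\dots,(t_n,r_n)$ of $\pi$ are put in bijection with the events of $C$ through the canonical map from the unfolding onto the branch-expanded net. The order in which these events appear is then a topological sort $e_1,\dots,e_n$ of $(C,\preceq)$, and $(t_{e_i},r_{e_i})=(t_i,r_i)$. The same holds for $\pi'$, with a second topological sort $e'_1,\dots,e'_n$ of the same event set $C$.

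\textbf{Item 1.} Since $\llb e\rrb\coloneqq\Phi_{t_e}^{(r_e)}$, the denotation $\llb\pi\rrb=\Phi_{t_n}^{(r_n)}\circ\cdots\circ\Phi_{t_1}^{(r_1)}$ coincides with $\llb e_n\rrb\circ\cdots\circ\llb e_1\rrb$. This is exactly the expression defining $\llb C\rrb$ via the linearization $e_1,\dots,e_n$. Likewise $\llb\pi'\rrb$ equals the expression for $\llb C\rrb$ via $e'_1,\dots,e'_n$. Theorem~\ref{thm:config_welldefined} states that this value does not depend on the chosen linearization, so $\llb\pi\rrb=\llb C\rrb=\llb\pi'\rrb$.

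\textbf{Items 2 and 3.} These follow immediately by applying both sides of item 1 to $\rho_0$. Taking traces gives $\Pr(\pi)=\Tr(\llb C\rrb(\rho_0))=\Pr(\pi')$. When this common value is positive, both normalized posteriors equal $\llb C\rrb(\rho_0)/\Tr(\llb C\rrb(\rho_0))$, so $\rho_\pi=\rho_{\pi'}$.

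Realizability is a subtlety worth noting, since each step of $\pi$ is required to have positive probability. Here we only need that $\pi$ and $\pi'$ are given as executions; no transfer of realizability from one to the other is required. If desired, one can remark that the cumulative probabilities agree, because both equal the trace of the same product.

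\textbf{Item 4.} By definition, $\Obs(C)$ is the isomorphism class of $(C_{\mathrm o},\preceq|_{C_{\mathrm o}},\lambda|_{C_{\mathrm o}})$. This triple is built from the event set, the causal order inherited from the unfolding, and the labels $\lambda(e)=\Lambda(t_e,r_e)$. None of these ingredients refer to a linearization. The pomset ``exposed'' by $\pi$ is understood as $\Obs$ of the configuration that $\pi$ traces out, so both executions yield literally the same pomset $\Obs(C)$.

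The main obstacle is the bookkeeping in the first step: justifying that a linear execution determines a configuration together with a compatible topological sort. This needs the standard correspondence between firing sequences of the safe branch-expanded net and linearizations of unfolding configurations. We would cite this from \cite{esparza2002improvement,mcmillan1992using} and not reprove it. Once that correspondence is in place, everything else is a direct substitution into Theorem~\ref{thm:config_welldefined}.
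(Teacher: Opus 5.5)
Your proposal is correct and follows the same route as the paper's proof: item~1 comes from Theorem~\ref{thm:config_welldefined} once $\llb\pi\rrb$ and $\llb\pi'\rrb$ are recognized as the defining compositions of $\llb C\rrb$ along two linearizations, items~2 and~3 follow by substituting this equality into the definitions of $\Pr(\pi)$ and $\rho_\pi$, and item~4 holds because $\Obs(C)$ is built from $C$ alone. Your extra bookkeeping (the correspondence between firing sequences and topological sorts of $C$, and the realizability remark) only makes explicit what the paper takes for granted; note that both arguments inherit from Theorem~\ref{thm:config_welldefined} its unstated premise that concurrent events access disjoint quantum registers, which Definition~\ref{def:spos_qpn} does not by itself guarantee.
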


\begin{proof}
Since $\pi$ and $\pi'$ are sequential linearizations of the identical partially ordered configuration 
$C$, Theorem~\ref{thm:config_welldefined} directly guarantees that 
their cumulative quantum denotations are equal, 
i.e., $\llb \pi \rrb = \llb C \rrb = \llb \pi' \rrb$. 
Statements (2) and (3) follow immediately by substituting this exact algebraic equality into 
the operational definitions of execution probability and posterior state 
(Section~\ref{sec:semantics}.1). Finally, Statement (4) follows directly from 
the definition of observation classes: since $\text{Obs}(C)$ is defined solely over 
the causal structure of $C$, it is inherently independent of the specific linearization 
$\pi$ or $\pi'$.
\end{proof}

This theorem provides the fundamental semantic justification for defining observations 
as pomset classes over configurations, 
rather than as raw interleaved strings over sequences.

\begin{theorem}[Conservativity over Classical Discrete Event Systems]
\label{thm:classical_conservativity_semantics}
Assume that a given SPO-QPN model $\Qsys$ satisfies the following classicality constraints:
\begin{enumerate}[leftmargin=*]
    \item the initial global state $\rho_0$ is completely diagonal with respect to the 
    fixed computational basis of $\HQ$;
    \item for every branch $(t,r)$, the associated CP map $\Phi_t^{(r)}$ maps 
    diagonal density operators to diagonal density operators, effectively acting as 
    a classical substochastic transition matrix on the diagonal entries;
    \item the attacker interface solely reads the induced classical marginal distribution.
\end{enumerate}
Then, the event-structure and quantum semantics of $\Qsys$ strictly collapse to 
the standard unfoldings and probabilistic semantics of a classical safe partially observed Petri net.
\end{theorem}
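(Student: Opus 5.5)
I will first make the informal conclusion precise. I will construct an explicit classical safe partially observed (probabilistic) Petri net $N_{\mathrm{cl}}$ from $\Qsys$ and show that the unfolding, the observation pomsets, the execution probabilities and the attacker's views of $\Qsys$ coincide with those of $N_{\mathrm{cl}}$ under a natural bijection. Let $\mathcal{X}$ be the finite set of computational-basis labels of $\HQ$. Let $\Delta$ be the subspace of operators that are diagonal in this basis, and let $\iota:\mathbb{R}^{\mathcal{X}}\to\Delta$ send a vector $p$ to $\sum_x p(x)\ket{x}\bra{x}$.

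By constraint~2, each branch map restricts to $\Delta$, so $\iota^{-1}\circ\Phi_t^{(r)}\circ\iota$ is a linear map on $\mathbb{R}^{\mathcal{X}}$. Since $\Phi_t^{(r)}$ is CP and TNI, this map is given by a nonnegative matrix $S_t^{(r)}$ whose column sums are at most $1$, i.e., a substochastic matrix. Since $\sum_r\Phi_t^{(r)}$ is trace preserving, $\sum_r S_t^{(r)}$ is column-stochastic. The net $N_{\mathrm{cl}}$ then has the same control net $(\PC,T_{\mathrm{branch}},F_C,M_0)$, the same labelling $\Lambda$, the initial distribution $p_0=\iota^{-1}(\rho_0)$ (which is well defined by constraint~1), and branch matrices $S_t^{(r)}$.

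The key steps, in order, are as follows.

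\textbf{Step (a): structure.} The unfoldings of $\Qsys$ and $N_{\mathrm{cl}}$ are literally the same occurrence net $(B,E,G)$ with the same $\lambda$. The branch expansion of Section~\ref{sec:semantics} uses only ${}^{\bullet}t$, $t^{\bullet}$ and $\Lambda$, none of which involve quantum data. Hence configurations and $\Obs(C)$ agree identically.

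\textbf{Step (b): invariance of $\Delta$.} By induction along any linearization, every intermediate unnormalized state $\llb\pi\rrb(\rho_0)$ lies in $\Delta$. Moreover,
\[
\iota^{-1}\big(\llb C\rrb(\rho_0)\big)=S_{t_{e_k}}^{(r_{e_k})}\cdots S_{t_{e_1}}^{(r_{e_1})}p_0 .
\]
By Theorem~\ref{thm:config_welldefined}, the left-hand side does not depend on the linearization. Consequently, the classical product is also linearization independent, for every reachable distribution.

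\textbf{Step (c): probabilities and posteriors.} Trace corresponds to the total mass $\mathbf{1}^{\top}p$, so $\Pr(\pi)$ equals the classical path probability. The normalized posterior $\rho_\pi$ corresponds to the Bayesian-conditioned distribution. Realizability ($\Pr>0$) is therefore the same condition in both models.

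\textbf{Step (d): attacker view.} The partial trace $\Tr_{\PQ\setminus A}$ maps $\Delta$ into the diagonal operators on $\HA$. On these, it acts as marginalization of the distribution over $\mathcal{X}_{\PQ\setminus A}$. Constraint~3 then says that $\viewA$ is exactly the classical marginal distribution.

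The main obstacle is making Step (b) rigorous in its claim that $S_t^{(r)}$ is entrywise nonnegative and substochastic. The requirement that diagonal density operators go to diagonal density operators only directly gives $\Phi(\ket{x}\bra{x})\in\Delta$ with a positive semidefinite image. I will argue this from positivity: each column $S(\cdot,x)$ is the diagonal of the PSD operator $\Phi(\ket{x}\bra{x})$, hence nonnegative. The column sum equals $\Tr\Phi(\ket{x}\bra{x})\le 1$ by TNI. I will also remark that off-diagonal coherences never arise, because the states stay in $\Delta$, so the behaviour of $\Phi$ off $\Delta$ is irrelevant. This is what justifies saying the semantics \emph{collapses} to the classical one.
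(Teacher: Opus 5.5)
Your proposal is correct and follows essentially the same route as the paper's proof: invariance of the diagonal subspace by induction along a linearization, trace as total probability mass, partial trace as marginalization, and linearization-independence inherited from Lemma~\ref{lem:commutation} and Theorem~\ref{thm:config_welldefined}. It is more explicit than the paper's argument in three ways: you construct $N_{\mathrm{cl}}$ concretely, you derive the nonnegativity and substochasticity of $S_t^{(r)}$ from complete positivity and the TNI property rather than taking them as part of the hypothesis, and you transfer linearization-independence through $\iota$ instead of re-arguing structural disjointness for the classical updates.
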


\begin{proof}
Under Constraint (1), the initial density operator $\rho_0$ is isomorphic to 
a classical probability distribution vector defined over the computational basis. 
By Constraint (2), the execution of any branch $(t,r)$ applies 
a map $\Phi_t^{(r)}$ that preserves this diagonal structure. 
Consequently, by straightforward induction over the configuration $C$, 
the cumulative denotation $\llb C \rrb$ resolves to a classical substochastic matrix, 
and the posterior state $\rho_C$ remains strictly diagonal for any reachable history. 
In this regime, the trace calculation $\Tr(\cdot)$ algebraically simplifies to 
the $L_1$-norm summation of the probability vector. 
Furthermore, by Constraint (3), the partial trace operation $\Tr_{\PQ \setminus A}(\cdot)$ 
reduces exactly to the statistical marginalization over the unobservable random variables.

Since Lemma~\ref{lem:commutation} and Theorem~\ref{thm:config_welldefined} rely solely on 
the structural disjointness of operations, which identically applies to 
independent classical substochastic updates, the commutation of concurrent events 
is perfectly preserved. Therefore, the tensor-product quantum semantics sheds 
its phase and entanglement dimensions, structurally and behaviorally reducing to 
classical true-concurrency distribution propagation.
\end{proof}

\section{Quantum Opacity Under Partial Observation}
\label{sec:opacity}

Building upon the true-concurrency semantics established in Section~\ref{sec:semantics}, 
this section formalizes the opacity properties of the proposed model over the set of reachable configurations. For any observable pomset $O$, let $\mathcal{C}(O)$ be the set of all finite configurations of the unfolding $\mathcal{U}(\Qsys)$ that yield $O$, namely:
\begin{equation*}
\mathcal{C}(O) = \{C \mid \Obs(C) = O\}.
\end{equation*}
Furthermore, let $M_C \subseteq \PC$ denote the classical control marking reached upon 
the execution of configuration $C$. 
We define two disjoint subsets of $\mathcal{C}(O)$ based on the current-state secrecy of the system:
\begin{equation}
\mathcal{C}_b(O) = \left\{C \in \mathcal{C}(O) \;\middle|\; \mathbf{1}_{\Sec}(M_C) = b\right\}, \qquad b\in\{0,1\},
\end{equation}
where $\mathbf{1}_{\Sec}(\cdot)$ is the standard indicator function for the set of 
secret control markings $\Sec$. Specifically, $b=1$ (resp. $b=0$) indicates that the system reaches 
a secret (resp. non-secret) control marking.

\begin{definition}[Structural Current-State Opacity]
\label{def:structural_opacity}
A system $\Qsys$ is said to be structurally current-state opaque if, for every reachable observation pomset $O$, the following topological condition holds:
\begin{equation}
\mathcal{C}_1(O) \neq \emptyset \implies \mathcal{C}_0(O) \neq \emptyset.
\end{equation}
\end{definition}
Structural opacity merely guarantees the existence of a classical topological alibi within the control layer. However, in the SPO-QPN framework, the attacker concurrently possesses the quantum interface $A$. Even if a non-secret configuration structurally produces the identical observation pomset $O$ ($\mathcal{C}_0(O) \neq \emptyset$), the corresponding reduced quantum states might be perfectly distinguishable, thereby compromising the secret entirely. Furthermore, 
a configuration might have an occurrence probability of zero due to quantum interference. To rigorously capture this hybrid reality, we must elevate the security metric from topological existence to quantum informational distinguishability.

To ensure that the algebraic summation over configurations is mathematically well-defined, we hereafter assume that the underlying safe control net is divergence-free (i.e., it contains no infinite sequences of unobservable transitions). This topological property guarantees that for any finite observation pomset $O$, the corresponding configuration set $\mathcal{C}_b(O)$ is strictly finite.

\begin{definition}[Posterior-State Leakage]
\label{def:posterior_state_leakage}
Given a reachable observation pomset $O$ and an initial global quantum state $\rho_0$, the exact posterior-state leakage $L_{\rho_0}(O)$ is defined as
\begin{equation}
\begin{aligned}
&\ L_{\rho_0}(O) = {}  \\
& \begin{cases} 
\frac{1}{2} \left\| \bar{\sigma}_{O,1}(\rho_0) - \bar{\sigma}_{O,0}(\rho_0) \right\|_1, & \text{if } p(O,0) > 0 \land p(O,1) > 0 \\
1, & \text{if } p(O,0) = 0 \land p(O,1) > 0 \\
0, & \text{if } p(O,1) = 0
\end{cases}
\end{aligned}
\end{equation}
where $p(O,b) = \Tr\bigl(\Omega_{O,b}(\rho_0)\bigr)$ is the joint probability of 
producing observation $O$ and reaching a control marking with secrecy status $b$ 
(where $b=1$ designates a secret marking, and $b=0$ a non-secret one), 
$\bar{\sigma}_{O,b}(\rho_0) = \frac{\Omega_{O,b}(\rho_0)}{p(O,b)}$ 
is the exact normalized posterior attacker state, and 
\begin{equation}
\Omega_{O,b}(\rho_0) = \sum_{C\in\mathcal{C}_b(O)} \viewA\bigl(\llb C \rrb(\rho_0)\bigr)
\end{equation}
is the unnormalized attacker posterior aggregate over the configuration set.
\end{definition}

This trace-distance quantity possesses a rigorous operational interpretation: 
According to the Helstrom measurement bound \cite{helstrom1969quantum}, 
it mathematically quantifies the optimal distinguishing advantage an attacker possesses to discern whether the underlying control mechanism has reached a secret or non-secret marking, explicitly assuming equal prior probabilities for the two normalized scenarios.

\begin{definition}[Quantitative $\epsilon$-Current-State Opacity]
\label{def:epsilon_opacity}
Given a security threshold $\epsilon \in [0,1]$, a system $\Qsys$ is said to be $\epsilon$-current-state opaque from the initial state $\rho_0$ if the worst-case posterior-state leakage across all reachable observation pomsets $O$ satisfies:
\begin{equation}
\sup_{O} L_{\rho_0}(O) \le \epsilon.
\end{equation}
\end{definition}

\begin{definition}[Robust Upper-Bound Leakage]
\label{def:robust_upper_bound}
For any observation pomset $O$ such that $p(O,0)>0$ and $p(O,1)>0$, the robust upper-bound leakage $U_{\rho_0}(O)$ is defined as
\begin{equation}
U_{\rho_0}(O) = \sup_{\sigma_1 \in \mathcal{V}_1(O),\, \sigma_0 \in \mathcal{V}_0(O)} \frac{1}{2}\|\sigma_1 - \sigma_0\|_1,
\end{equation}
where for each $b \in \{0,1\}$, the set
\begin{equation*}
\begin{aligned}
\mathcal{V}_b(O) = {} 
\bigg\{ \viewA & \left( \frac{\llbracket C \rrbracket(\rho_0)}{\Tr(\llbracket C \rrbracket(\rho_0))} \right) \;\bigg|\; 
\begin{aligned}
& C \in \mathcal{C}_b(O) \ \land \\
& \Tr(\llbracket C \rrbracket(\rho_0)) > 0
\end{aligned} \bigg\}
\end{aligned}
\end{equation*}
represents the collection of all reachable, normalized localized states accessible to the attacker, conditioned on the underlying system reaching either a secret ($b=1$) or non-secret ($b=0$) control marking.
\end{definition}

\begin{proposition}
\label{prop:leq_upper_bound}
For every observation pomset $O$ with $p(O,0)>0$ and $p(O,1)>0$, it holds:
\begin{equation}
0 \le L_{\rho_0}(O) \le U_{\rho_0}(O) \le 1.
\end{equation}
\end{proposition}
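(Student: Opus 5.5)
The plan is to write each normalized aggregate posterior as a convex combination of the individual normalized views in $\mathcal V_b(O)$. Joint convexity of the trace norm then gives the middle inequality; the outer two follow from general properties of trace distance.

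\emph{Convex decomposition.} Since $p(O,0)>0$ and $p(O,1)>0$, the first case of Definition~\ref{def:posterior_state_leakage} applies, so $L_{\rho_0}(O)=\frac12\|\bar\sigma_{O,1}-\bar\sigma_{O,0}\|_1$. Under the divergence-free assumption each $\mathcal C_b(O)$ is finite, so all sums below are finite. For $C\in\mathcal C_b(O)$ put $w_C=\Tr(\llb C\rrb(\rho_0))\ge 0$. This is nonnegative because $\llb C\rrb$ is a composition of CP maps and $\rho_0\ge 0$. The partial trace is linear and trace preserving, so $\Tr\bigl(\viewA(\llb C\rrb(\rho_0))\bigr)=w_C$, and $p(O,b)=\sum_{C\in\mathcal C_b(O)}w_C$. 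A configuration with $w_C=0$ has $\llb C\rrb(\rho_0)$ positive semidefinite with zero trace, hence equal to zero, so it contributes nothing to $\Omega_{O,b}$. Define $\sigma_C=\viewA\bigl(\llb C\rrb(\rho_0)/w_C\bigr)\in\mathcal V_b(O)$ for $w_C>0$. Then
\[
\bar\sigma_{O,b}=\sum_{C\in\mathcal C_b(O),\,w_C>0}\lambda_C\,\sigma_C,\qquad \lambda_C=\frac{w_C}{p(O,b)},
\]
with $\lambda_C\ge0$ and $\sum_C\lambda_C=1$. In particular, each $\mathcal V_b(O)$ is nonempty, so $U_{\rho_0}(O)$ is a supremum over a nonempty set.

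\emph{Middle inequality.} Combine the two decompositions into a product distribution, $\bar\sigma_{O,1}-\bar\sigma_{O,0}=\sum_{C,C'}\lambda_C\lambda_{C'}(\sigma_C-\sigma_{C'})$, where $C$ ranges over $\mathcal C_1(O)$ and $C'$ over $\mathcal C_0(O)$. This works because $\sum_{C'}\lambda_{C'}=\sum_C\lambda_C=1$. The triangle inequality and homogeneity of $\|\cdot\|_1$ then give
\[
L_{\rho_0}(O)\le\sum_{C,C'}\lambda_C\lambda_{C'}\,\tfrac12\|\sigma_C-\sigma_{C'}\|_1\le U_{\rho_0}(O),
\]
since each term $\tfrac12\|\sigma_C-\sigma_{C'}\|_1$ is at most the supremum and the weights sum to one.

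\emph{Outer bounds.} $L_{\rho_0}(O)\ge0$ because it is a norm. $U_{\rho_0}(O)\le1$ because, for density operators $\sigma_1,\sigma_0$, we have $\frac12\|\sigma_1-\sigma_0\|_1\le\frac12(\|\sigma_1\|_1+\|\sigma_0\|_1)=1$, using $\|\sigma\|_1=\Tr\sigma=1$ for positive semidefinite $\sigma$. It remains to check that each $\sigma_C$ really is a density operator: $\llb C\rrb(\rho_0)/w_C$ is positive semidefinite with unit trace, and the partial trace is CPTP. A supremum of values bounded by $1$ is bounded by $1$.

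The main subtlety is the bookkeeping in the convex decomposition. Zero-probability configurations must be discarded consistently so that the weights renormalize exactly to $p(O,b)$ and every $\sigma_C$ used actually lies in $\mathcal V_b(O)$. Everything else is standard convexity of the trace norm.
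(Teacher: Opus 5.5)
Your proposal is correct and follows essentially the same route as the paper's proof. Both write each $\bar{\sigma}_{O,b}(\rho_0)$ as a convex combination of the normalized views $\sigma_C$ with weights $\Tr(\llb C \rrb(\rho_0))/p(O,b)$, rewrite the difference as a double sum with product weights, and bound it via the triangle inequality and the supremum, with the outer bounds coming from the trace distance lying in $[0,1]$. Your extra bookkeeping fills in details the paper leaves implicit: discarding zero-trace configurations, observing that $\mathcal{V}_b(O)$ is nonempty, and checking that each $\sigma_C$ is a density operator.
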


\begin{proof}[Proof of Proposition~\ref{prop:leq_upper_bound}]
The bounds $0 \le L_{\rho_0}(O)$ and $U_{\rho_0}(O) \le 1$ hold trivially since 
the trace distance between any two density operators strictly falls within 
the interval $[0, 1]$. To prove the inequality $L_{\rho_0}(O) \le U_{\rho_0}(O)$, 
observe that the exact normalized posterior state $\bar{\sigma}_{O,b}(\rho_0)$ is 
a convex combination of the normalized localized states 
$\sigma_C = \viewA\left( \frac{\llb C \rrb(\rho_0)}{\Tr(\llb C \rrb(\rho_0))} \right)$ 
generated by each reachable configuration $C \in \mathcal{C}_b(O)$ 
(i.e., configurations with $\Tr(\llb C \rrb(\rho_0)) > 0$). 
The corresponding convex weights are $w_C = \frac{\Tr(\llb C \rrb(\rho_0))}{p(O,b)}$, 
which inherently satisfy $\sum_{C} w_C = 1$.

We can rewrite the exact leakage as a double sum over both configuration sets:
\begin{equation*}
L_{\rho_0}(O) = \frac{1}{2}\left\| \sum_{C_1} \sum_{C_0} w_{C_1} w_{C_0} (\sigma_{C_1} - \sigma_{C_0}) \right\|_1,
\end{equation*}
where the summations range over the reachable configurations in $\mathcal{C}_1(O)$ 
and $\mathcal{C}_0(O)$, respectively. By applying the triangle inequality (the subadditivity of 
the trace norm) and factoring out the non-negative scalar weights, one obtains:
\begin{align*}
L_{\rho_0}(O) &\le \sum_{C_1} \sum_{C_0} w_{C_1} w_{C_0} \left( \frac{1}{2} \|\sigma_{C_1} - \sigma_{C_0}\|_1 \right) \\[1.5ex]
&\le \sum_{C_1} \sum_{C_0} w_{C_1} w_{C_0} \left( \sup_{\substack{\sigma'_1 \in \mathcal{V}_1(O) \\ \sigma'_0 \in \mathcal{V}_0(O)}} \frac{1}{2} \|\sigma'_1 - \sigma'_0\|_1 \right) \\[1.5ex]
&= U_{\rho_0}(O) \sum_{C_1} w_{C_1} \sum_{C_0} w_{C_0} = U_{\rho_0}(O).
\end{align*}
This completes the proof.
\end{proof}

\begin{corollary}[Opacity Conservativity]
\label{cor:opacity_conservativity}
Under the assumptions of Theorem~\ref{thm:classical_conservativity_semantics}, 
qualitative current-state opacity reduces exactly to classical current-state opacity, 
and the leakage $L_{\rho_0}(O)$ reduces to the total-variation distance between 
the classical posterior marginals induced by the observation pomset $O$.
\end{corollary}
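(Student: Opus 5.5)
The proof splits into a qualitative part and a quantitative part, and both rest on Theorem~\ref{thm:classical_conservativity_semantics}. For the qualitative part, note that Definition~\ref{def:structural_opacity} refers only to the control layer: the sets $\mathcal{C}_b(O)$ depend on the unfolding of the branch-expanded net, the labels $\Lambda$, and the markings $M_C$. None of these objects involves $\rho_0$ or the instruments.

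\textbf{Qualitative part.} I will identify the branch-expanded net with a classical safe labelled Petri net whose transitions are the pairs $(t,r)$ and whose labelling is $\Lambda$. Classical current-state opacity in its pomset/configuration form requires that every observation compatible with a secret marking is also compatible with a non-secret one. This is exactly the implication $\mathcal{C}_1(O)\neq\emptyset \Rightarrow \mathcal{C}_0(O)\neq\emptyset$. If realizability is to be taken into account, Theorem~\ref{thm:classical_conservativity_semantics} gives the needed identification: $\Tr(\llb C\rrb(\rho_0))>0$ becomes positivity of the classical run probability. The realizable configurations therefore coincide with the positive-probability runs of the induced probabilistic net.

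\textbf{Quantitative part.} By the induction in Theorem~\ref{thm:classical_conservativity_semantics}, each $\llb C\rrb(\rho_0)$ is diagonal, $\mathrm{diag}(v_C)$, where $v_C$ is a nonnegative vector obtained by composing substochastic matrices. Under Constraint~(3), $\viewA$ is marginalization, so $\viewA(\llb C\rrb(\rho_0))=\mathrm{diag}(\mu_C)$, with $\mu_C$ the marginal of $v_C$ on the attacker registers. Summing over $\mathcal{C}_b(O)$, the aggregate $\Omega_{O,b}(\rho_0)$ is diagonal with entries $\sum_C \mu_C(x)$. Its trace $p(O,b)$ is the classical joint probability of observing $O$ and being in secrecy class $b$. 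Hence $\bar\sigma_{O,b}=\mathrm{diag}(P(\cdot\mid O,b))$, the classical posterior marginal.

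For two commuting diagonal operators, $\rho-\sigma$ is diagonal, and its trace norm is the $\ell_1$ norm of the diagonal difference. The first case of Definition~\ref{def:posterior_state_leakage} is therefore $\tfrac12\sum_x|P(x\mid O,1)-P(x\mid O,0)|$, the total-variation distance. The degenerate cases (values $1$ and $0$) contain no quantum data and transfer verbatim.

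\textbf{Main obstacle.} The step I expect to be delicate is making Constraint~(3) precise. The statement ``the attacker reads the classical marginal'' has to mean that $\Tr_{\PQ\setminus A}$ maps diagonal operators to diagonal operators and acts as marginalization. I would prove this directly: for basis states, $\Tr_{\PQ\setminus A}(\ket{x_A x_{\bar A}}\!\bra{x_A x_{\bar A}})=\ket{x_A}\!\bra{x_A}$, and the claim then follows by linearity.

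I would also remark that the qualitative equivalence is purely structural. A possible subtlety is whether ``qualitative'' should mean the realizability-filtered version. I would handle this by noting that zero-probability configurations are exactly the zero-probability classical runs, so both readings agree with their classical counterparts.
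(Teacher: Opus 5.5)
Your proposal is correct and follows essentially the same route as the paper: diagonality of $\llb C\rrb(\rho_0)$ via the induction of Theorem~\ref{thm:classical_conservativity_semantics}, the partial trace acting as marginalization, the resulting diagonal aggregates $\Omega_{O,b}(\rho_0)$, and the identity between trace distance and total-variation distance for diagonal operators. You go slightly beyond the paper's proof, which leaves these points implicit, by making three things explicit: the qualitative part, via the observation that Definition~\ref{def:structural_opacity} depends only on the control layer and is read in its pomset form (the reading under which an exact reduction can hold); the degenerate cases of Definition~\ref{def:posterior_state_leakage}; and the basis-state check that $\Tr_{\PQ\setminus A}$ marginalizes diagonal operators.
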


\begin{proof}[Proof of Corollary~\ref{cor:opacity_conservativity}]
By Theorem~\ref{thm:classical_conservativity_semantics}, under the stated classicality constraints, 
the global quantum state remains diagonal in the computational basis throughout any execution, 
effectively serving as a classical probability vector. 
The CP maps collapse into classical substochastic matrices, 
and the partial trace operation $\viewA(\cdot)$ analytically corresponds to 
marginalizing the joint probability distribution over the unobservable quantum registers.

Consequently, the unnormalized posterior aggregates $\Omega_{O,b}(\rho_0)$ algebraically represent 
the unnormalized classical marginal distributions. For any diagonal density matrices $X$ and $Y$, 
the trace distance $\frac{1}{2}\|X - Y\|_1$ is algebraically identical to 
the classical total variation distance between their diagonal distributions. 
Therefore, $L_{\rho_0}(O)$ exactly computes the total variation distance between 
the normalized classical conditional probability distributions, 
thus perfectly recovering the standard probabilistic opacity metric for 
classical discrete event systems.
\end{proof}

\begin{remark}[Channel-Level Extension]
A strictly stronger security notion can be formulated by replacing the posterior states with 
posterior effective quantum channels, and evaluating the distinguishability via 
the induced diamond norm rather than the state-level trace distance. 
We do not pursue this stronger route in the present framework simply because 
the state-level metric already possesses a rigorous physical meaning (via the Helstrom bound), 
acts as an exact measure within the target stabilizer fragment, 
and provides a sufficient foundation for algorithmic verification and enforcement.
\end{remark}

\section{Symbolic Verification in the Stabilizer Fragment}
\label{sec:algorithm}

Building upon the semantic definitions of structural current-state opacity 
and posterior-state leakage established in Section~\ref{sec:opacity}, 
this section addresses the algorithmic challenges of evaluating these metrics. 
While the configuration-based framework sidesteps classical interleavings theoretically, 
conventional state-space exploration would still succumb to the interleaving explosion, 
compounded by the exponential overhead of quantum density-matrix evolution. 
To achieve tractability without sacrificing analytical exactness, 
this section develops a symbolic verification architecture tailored to 
the stabilizer fragment.
To rigorously employ the standard Gottesman-Knill theorem and 
its associated exact data structures within this architecture, we hereafter restrict 
the algorithmic focus to qubit systems (i.e., $\mathcal{H}_q \cong \mathbb{C}^2$ for all $q \in \PQ$).
By coupling true-concurrency exploration directly with this qubit-based 
symbolic state propagation, the proposed framework computes the exact opacity metrics 
while effectively circumventing both dimensions of state-space explosion.

\subsection{Targeted Exploration and Unobservably-Closed Space}
\label{subsec:verification_scope}

To ensure finite termination without sacrificing analytical exactness, 
we restrict our analysis to a finite target family of observation pomsets, 
denoted by $\mathcal O_{\mathrm{tar}}$. 
Recall that for every finite configuration $C$ of 
the unfolding $\mathcal U(\Qsys)$, $\Obs(C)$ denotes its observation pomset. 
Let $\mathrm{Pref}(\mathcal O_{\mathrm{tar}})$ denote the pomset-prefix closure of 
$\mathcal O_{\mathrm{tar}}$, comprising all finite observation pomsets obtained by 
restricting an element of $\mathcal O_{\mathrm{tar}}$ to a downward-closed set of observable events.

We define the targeted operational space as the set of configurations 
whose observations remain consistent with some target prefix:
\[
\widehat{\mathcal C}_{\mathrm{tar}} \coloneqq \{\,C \mid \Obs(C)\in\mathrm{Pref}(\mathcal O_{\mathrm{tar}})\,\}.
\]
This constitutes an algorithmic pruning domain rather than a semantic redefinition. 
We assume that $\mathcal O_{\mathrm{tar}}$ is finite and that the underlying Petri net is divergence-free with respect to unobservable events (i.e., no infinite sequences of silent transitions exist). Under this structural assumption, the configuration space $\widehat{\mathcal C}_{\mathrm{tar}}$ is strictly finite.

Let $\mathcal{C}(\mathcal U(\Qsys))$ denote the set of all finite configurations of the unfolding. For any configuration $C\in\widehat{\mathcal C}_{\mathrm{tar}}$, the valid extension frontier is defined as:
\begin{multline*}
\mathrm{Ext}_{\mathrm{tar}}(C) \coloneqq \\
\left\{ e \notin C \;\middle|\; C \cup \{e\} \in \mathcal{C}(\mathcal{U}(\Qsys)) \land \mathrm{Obs}(C \cup \{e\}) \in \mathrm{Pref}(\mathcal{O}_{\mathrm{tar}}) \right\}
\end{multline*}
\subsection{Maximal Reach and Analytical Exactness}
\label{subsec:maximal_reach}

A fundamental challenge in evaluating current-state opacity under true concurrency is determining 
the precise causal frontier for evaluating the secret while strictly preserving the probability mass. 
Accumulating posterior quantum states indiscriminately across transient unobservable prefixes 
mathematically results in the erroneous double-counting of overlapping causal histories.

To ensure strict analytical correctness, 
the quantum density must be aggregated exclusively at the \emph{maximal unobservable reach} 
of each observation class. For any $O \in \mathcal{O}_{\mathrm{tar}}$, 
we define its maximal reachable configuration set within the explored space as:
\begin{equation}
\begin{aligned}
& \mathcal{C}_{\max}(O) = \\
& \quad \left\{ C \in \widehat{\mathcal C}_{\mathrm{tar}} \;\middle|\; \Obs(C) = O \land \nexists e \in \Ext_{\mathrm{tar}}(C) : \lambda(e) = \tau \right\}.
\end{aligned}
\end{equation}
Reachable configurations in $\mathcal{C}_{\max}(O)$ represent the causally deepest executions that yield $O$ 
without triggering further unobservable events, 
accurately reflecting the attacker's localized knowledge state once the observation has 
formally ``settled''---defined strictly by the predicate:~ 
$\forall e \in \Ext_{\mathrm{tar}}(C): \lambda(e) \neq \tau$.

Crucially, any two distinct reachable configurations in $\mathcal{C}_{\max}(O)$ contain 
mutually conflicting events in the underlying occurrence net $\mathcal{U}(\Qsys)$, 
arising from either classical structural conflicts or orthogonal quantum measurements. 
They therefore represent mutually exclusive execution trajectories. 
As disjoint events within the global sample space, these trajectories yield 
unnormalized density matrices with additive probability masses. 
Consequently, the algebraic summation over $\mathcal{C}_{\max}(O)$ exactly aggregates 
these mutually exclusive sub-states without double-counting probability mass.

\subsection{Exact Symbolic Propagation Engine}
\label{subsec:exact_backend}
To computationally evaluate the configuration semantics without explicitly constructing exponentially 
large density matrices, the verification framework utilizes an exact symbolic representation 
rooted in the Gottesman-Knill theorem \cite{aaronson2004improved}. 
Within this framework, every unnormalized global quantum state generated by the stabilizer fragment 
is strictly encoded as a symbolic object $\Gamma$, which mathematically manifests as 
a \emph{weighted mixed-state stabilizer tableau}. 
Let $\mathrm{Den}(\Gamma) \in \mathcal{L}(\HQ)$ denote the explicit density operator mathematically 
represented by the tableau $\Gamma$. The symbolic propagation engine is formally defined by 
the following four exact algebraic operations defined over the space of stabilizer tableaux:
\begin{enumerate}[label=(\roman*), leftmargin=*]
    \item \textbf{Initialization:} A basis tableau $\Gamma_0$ with 
    $\mathrm{Den}(\Gamma_0) = \rho_0$.

    \item \textbf{Symbolic Transition:} A deterministic update function that applies 
    the branch instrument directly to the tableau structure, 
    ensuring that the symbolic and concrete semantics commute:
    \[
    \mathrm{Den}\bigl(\mathrm{Update}(\Gamma,(t,r))\bigr) = \Phi_t^{(r)}\bigl(\mathrm{Den}(\Gamma)\bigr).
    \]
    By the operational properties of the stabilizer formalism, this symbolic transition requires at most 
    $\mathcal{O}(|\PQ|^2)$ algebraic steps, 
    thereby circumventing the exponential overhead of Hilbert-space matrix multiplication.

    \item \textbf{Trace Evaluation:} An algebraic scalar function 
    $w(\Gamma) = \Tr(\mathrm{Den}(\Gamma))$ 
    that retrieves the embedded probability mass of the tableau.
    This yields the exact execution probability $\Pr(\pi)$ for 
    the corresponding causal history without floating-point approximation.

    \item \textbf{Symbolic Partial Trace:} A localized reduction map $\mathrm{Reduce}_A(\Gamma)$ 
    that algebraically traces out the unobservable registers from the tableau, yielding 
    the attacker's reduced state:
    \[
    \mathrm{Den}_A\bigl(\mathrm{Reduce}_A(\Gamma)\bigr) = \viewA\!\bigl(\mathrm{Den}(\Gamma)\bigr),
    \]
    where $\mathrm{Den}_A(\cdot)$ denotes the mapping to operators exclusively on $\HA$.
\end{enumerate}

Grounded in the Gottesman-Knill theorem~\cite{gottesman1998heisenberg_arxiv}, 
the trace evaluation $w(\Gamma)$ operates via exact rational arithmetic, 
thereby eliminating cumulative floating-point errors during state-space pruning. 
Furthermore, since the caching mechanism identifies configurations by their structural history 
rather than the specific algebraic representation of $\Gamma$, 
the state-space aggregation remains strictly sound. 
This approach circumvents the canonicalization problem inherent in non-unique stabilizer tableau 
representations.

\subsection{Exact Targeted Symbolic Exploration Algorithm}
\label{subsec:exact_exploration}

Building upon the symbolic propagation engine, 
this subsection presents the complete algorithmic procedure for computing 
the exact posterior aggregates required to verify opacity over 
the targeted space. The proposed approach, formalized in Algorithm~\ref{alg:exact_target_exploration},
structurally decouples the exploration of the configuration space from the aggregation of 
the probability mass.

To formalize the outputs, for each target observation pomset $O$ and 
secret condition $b \in \{0,1\}$, the algorithm computes three exact objects: 
a boolean reachability flag $S_{O,b}$, the unnormalized posterior aggregate $\Omega_{O,b}$, 
and a reachable witness configuration $W_{O,b}$ utilized for downstream topological certification.

Specifically, Phase 1 systematically unfolds the targeted causal structures 
by maintaining an active worklist $\mathcal{W}$ of unexplored state triplets, 
denoted as $(C, M, \Gamma)$. To strictly prevent redundant evaluations of 
structurally equivalent interleavings, the algorithm globally caches all discovered 
configurations in a structural visited set $\mathcal{V}$. To explicitly isolate 
reachable trajectories, it concurrently maintains a reachable set $\mathcal{V}_{\mathrm{reach}}$ 
for configurations with strictly positive quantum probability, alongside an associative 
dictionary $\mathsf{Cache}$ that maps each reachable configuration $C$ directly to its 
terminal classical marking and quantum tableau, denoted by $\mathsf{Cache}[C] = (M, \Gamma)$. 
Subsequently, Phase 2 explicitly leverages the reachable topology within $\mathcal{V}_{\mathrm{reach}}$ 
and the cached states in $\mathsf{Cache}$ to isolate the configurations at 
the maximal unobservable reach. It then performs an exact accumulation 
of the unnormalized density matrices, fundamentally avoiding the double-counting of probability mass.

\begin{algorithm}[t]
\caption{Exact Targeted Symbolic Exploration and Aggregation}
\label{alg:exact_target_exploration}
\begin{algorithmic}[1]
\Require Target observation family $\mathcal O_{\mathrm{tar}}$, extension function $\Ext_{\mathrm{tar}}$, and exact symbolic propagation engine
\Ensure Exact flags $\{S_{O,b}\}$, posterior aggregates $\{\Omega_{O,b}\}$, and witnesses $\{W_{O,b}\}$

\State \textbf{Phase 1: Configuration Space Exploration and Symbolic Tracking}
\State Initialize structural visited set $\mathcal V \gets \{\emptyset\}$, reachable set $\mathcal V_{\mathrm{reach}} \gets \{\emptyset\}$
\State Initialize worklist $\mathcal W \gets \{(\emptyset, M_0, \Gamma_0)\}$, tracking dictionary $\mathsf{Cache} \gets \{ \emptyset : (M_0, \Gamma_0) \}$
\While{$\mathcal W \neq \emptyset$}
    \State Extract $(C, M, \Gamma)$ from $\mathcal W$
    \ForAll{$e \in \Ext_{\mathrm{tar}}(C)$}
        \State $C' \gets C \cup \{e\}$
        \If{$C' \notin \mathcal V$}
            \State $\mathcal V \gets \mathcal V \cup \{C'\}$ 
            \State Let $e$ correspond to the branch $(t_e, r_e)$
            \State $\Gamma' \gets \mathrm{Update}(\Gamma, (t_e, r_e))$
            \If{$w(\Gamma') > 0$} 
                \State $M' \gets (M \setminus {}^{\bullet}t_e) \cup t_e^{\bullet}$
                \State $\mathcal V_{\mathrm{reach}} \gets \mathcal V_{\mathrm{reach}} \cup \{C'\}$
                \State $\mathsf{Cache}[C'] \gets (M', \Gamma')$
                \State Insert $(C', M', \Gamma')$ into $\mathcal W$
            \EndIf
        \EndIf
    \EndFor
\EndWhile

\State \textbf{Phase 2: Maximal Unobservable Reach Aggregation}
\ForAll{$O \in \mathcal O_{\mathrm{tar}}$ and $b \in \{0,1\}$}
    \State $S_{O,b} \gets 0$, \quad $W_{O,b} \gets \bot$
    \State $\Omega_{O,b} \gets 0_{\mathcal L(\HA)}$ 
\EndFor

\ForAll{$O \in \mathcal O_{\mathrm{tar}}$}
    \State Compute $\mathcal{C}_{\max}(O) \gets \{ C \in \mathcal V_{\mathrm{reach}} \mid \Obs(C) = O \land {}$
    \Statex \hfill $\nexists e \in \Ext_{\mathrm{tar}}(C) : \lambda(e) = \tau \land C\cup\{e\} \in \mathcal V_{\mathrm{reach}} \}$
    \ForAll{$C \in \mathcal{C}_{\max}(O)$}
        \State $(M_C, \Gamma_C) \gets \mathsf{Cache}[C]$
        \State $b_C \gets \mathbf{1}_{\Sec}(M_C)$
        \State $S_{O,b_C} \gets 1$
        \If{$W_{O,b_C} = \bot$} 
            \State $W_{O,b_C} \gets C$ 
        \EndIf
        \State $\Omega_{O,b_C} \gets \Omega_{O,b_C} + \mathrm{Reduce}_A(\Gamma_C)$ 
    \EndFor
\EndFor
\State \Return $\{S_{O,b}\}, \{\Omega_{O,b}\}, \{W_{O,b}\}$
\end{algorithmic}
\end{algorithm}

Upon termination of Algorithm~\ref{alg:exact_target_exploration}, 
the system is structurally opaque over the targeted space if and only if 
$S_{O,1} = 1$ implies $S_{O,0} = 1$ for all $O \in \mathcal{O}_{\mathrm{tar}}$. 
Furthermore, to evaluate $\epsilon$-current-state opacity, 
the exact posterior-state leakage $L_{\rho_0}(O)$ is computed algebraically from 
the unnormalized aggregates $\Omega_{O,b}$ by normalizing the states and evaluating 
their trace distance. The system is rigorously verified as $\epsilon$-opaque over 
$\mathcal{O}_{\mathrm{tar}}$ if $\sup_{O \in \mathcal{O}_{\mathrm{tar}}} L_{\rho_0}(O) \le \epsilon$.

\subsection{Algorithmic Correctness and Complexity}
\label{subsec:opacity_evaluation}

Having detailed the exact targeted symbolic exploration algorithm, 
we now formalize its theoretical guarantees. In this subsection, 
we first establish the finite termination and semantic exactness of 
the algorithmic outputs, followed by an analysis of its computational complexity.

\begin{theorem}[Termination and Semantic Exactness]
\label{thm:targeted_exploration_sound_complete}
If the underlying SPO-QPN is divergence-free with respect to unobservable events, 
Algorithm~\ref{alg:exact_target_exploration} terminates in finite steps. 
Upon termination, for every target observation $O \in \mathcal{O}_{\mathrm{tar}}$ 
and secret condition $b \in \{0,1\}$, 
the returned aggregate $\Omega_{O,b}$ exactly coincides with the theoretical posterior aggregate 
$\Omega_{O,b}(\rho_0)$ defined in Definition~\ref{def:posterior_state_leakage}.
\end{theorem}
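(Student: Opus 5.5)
The proof splits into two parts. First I establish termination from finiteness of the targeted space. Then I establish exactness through an invariant on the cached tableaux, combined with an identification of the index set over which Phase~2 sums.

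\emph{Termination.} Under divergence-freeness and finiteness of $\mathcal O_{\mathrm{tar}}$, Section~\ref{subsec:verification_scope} gives that $\widehat{\mathcal C}_{\mathrm{tar}}$ is finite. Every $C'$ inserted into $\mathcal V$ satisfies $\Obs(C')\in\mathrm{Pref}(\mathcal O_{\mathrm{tar}})$ by the definition of $\Ext_{\mathrm{tar}}$, so $\mathcal V\subseteq\widehat{\mathcal C}_{\mathrm{tar}}$. A triple enters $\mathcal W$ only when its configuration is newly added to $\mathcal V$, so at most $|\widehat{\mathcal C}_{\mathrm{tar}}|$ triples are ever processed. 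Each iteration scans a finite frontier, since an occurrence net has finitely many events enabled at a finite configuration. Phase~2 loops over finite sets. Hence the algorithm halts.

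\emph{Invariant.} I would prove by induction on insertion order that every $C\in\mathcal V_{\mathrm{reach}}$ satisfies the following:
\begin{itemize}
\item $\mathsf{Cache}[C]=(M,\Gamma)$ with $M=M_C$;
\item $\mathrm{Den}(\Gamma)=\llb C\rrb(\rho_0)$;
\item $\Tr(\llb C\rrb(\rho_0))>0$.
\end{itemize}
The base case $C=\emptyset$ follows from Initialization. For the inductive step, $C'=C\cup\{e\}$ is first discovered from some reachable $C$. Its generating order is a linearization of $C'$ with $e$ last, so the Update property and Theorem~\ref{thm:config_welldefined} give $\mathrm{Den}(\Gamma')=\llb e\rrb\llb C\rrb(\rho_0)=\llb C'\rrb(\rho_0)$, independently of which parent discovered $C'$. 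The marking claim is the safe-net firing rule.

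Conversely, I need completeness: every $C\in\widehat{\mathcal C}_{\mathrm{tar}}$ with $\Tr(\llb C\rrb(\rho_0))>0$ lies in $\mathcal V_{\mathrm{reach}}$. Take any linearization of $C$. Each of its prefixes is a configuration whose observation is a prefix of $\Obs(C)$, so each prefix lies in $\widehat{\mathcal C}_{\mathrm{tar}}$. Each prefix also has positive mass, because the branch maps are CP and trace non-increasing: a zero-trace positive operator is $0$, so a zero-mass prefix would force $\llb C\rrb(\rho_0)=0$. Hence the exploration reaches $C$. The same CP/TNI argument justifies the pruning at $w(\Gamma')=0$: discarded configurations and all their extensions contribute $0$ to every aggregate.

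\emph{Identifying the summation set (main obstacle).} Phase~2 sums $\mathrm{Reduce}_A(\Gamma_C)=\viewA(\llb C\rrb(\rho_0))$ over $\mathcal C_{\max}(O)$ split by $b$. Definition~\ref{def:posterior_state_leakage} instead sums over $\mathcal C_b(O)$. Read literally, these differ: a non-settled prefix and its silent extensions would both appear, which is exactly the double counting warned against in Section~\ref{subsec:maximal_reach}. I would therefore make explicit that $\mathcal C_b(O)$ in the exactness claim denotes the settled configurations of Section~\ref{subsec:maximal_reach}, restricted by secrecy status. This is an interpretive step, and the theorem is proved under that reading, not for the literal definition.

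Under this reading, three facts remain to be checked:
\begin{itemize}
\item The algorithm's maximality test, which only forbids silent extensions lying in $\mathcal V_{\mathrm{reach}}$, agrees with the semantic one up to zero-mass configurations. A silent extension of zero mass is pruned, so it does not block $C$ from being counted. The semantic test may reject such a $C$, but only when $C$ has a silent extension of zero mass.
\item Zero-mass configurations contribute nothing to either sum.
\item The elements of $\mathcal C_{\max}(O)$ are pairwise in conflict, so the sum counts each trajectory's mass once. This is the claim of Section~\ref{subsec:maximal_reach}, which I would take as given.
\end{itemize}
Linearity of $\viewA$ then yields $\Omega_{O,b}=\Omega_{O,b}(\rho_0)$.

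The hardest part is the zero-mass mismatch in the first bullet. Suppose $C$ has a silent extension of zero mass, and also a silent extension of positive mass. Then the two tests agree, since the positive one blocks $C$ in both. The problematic case is when every silent extension of $C$ has zero mass. The algorithm then counts $C$, while the semantic test excludes it, even though $C$ itself carries positive mass. The semantic maximality must therefore be read relative to reachable (positive-probability) extensions, and I would state this as part of the interpretation above.
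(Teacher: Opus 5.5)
Your proposal follows the paper's own route: finiteness of $\widehat{\mathcal C}_{\mathrm{tar}}$ plus the structural cache $\mathcal V$ for termination, and for exactness the cached invariant $\mathrm{Den}(\Gamma_C)=\llb C\rrb(\rho_0)$ combined with $\mathrm{Den}_A(\mathrm{Reduce}_A(\Gamma_C))=\viewA(\llb C\rrb(\rho_0))$, summed over $\mathcal C_{\max}(O)$. It is also more careful than the paper's proof in three places. That proof silently identifies $\mathcal C_b(O)$ with the settled configurations; the literal sum over all of $\mathcal C_b(O)$ does double-count silent prefixes, so your explicit reinterpretation is genuinely needed. It passes over the difference between the structural maximality of Section~\ref{subsec:maximal_reach} and the reachability-relative test in Phase~2, which you handle. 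And it overstates that every configuration of $\widehat{\mathcal C}_{\mathrm{tar}}$ is cached, whereas your prefix-positivity argument correctly shows that exactly the positive-mass ones are.

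The one thing to drop is your third bullet. Pairwise conflict of the elements of $\mathcal C_{\max}(O)$ plays no role in the equality you prove. Moreover, the claim from Section~\ref{subsec:maximal_reach} that you would take as given is false in general: two concurrent transitions carrying the same observable label, with no silent successors, give two non-conflicting settled configurations with the same one-event pomset.
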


\begin{proof}
The finite termination of the algorithm follows directly from the structural assumptions. 
By hypothesis, the SPO-QPN is divergence-free with respect to unobservable events. 
Coupled with the finiteness of the target observation family $\mathcal{O}_{\mathrm{tar}}$, 
this guarantees that the prefix-closed configuration space $\widehat{\mathcal C}_{\mathrm{tar}}$ 
is strictly finite. Consequently, during Phase 1, the monotonic accumulation of configurations 
in the structural set $\mathcal V$ ensures that the worklist $\mathcal{W}$ is depleted 
in a finite number of steps. Phase 2 then iterates strictly over the finite sets 
$\mathcal{O}_{\mathrm{tar}}$ and $\mathcal V_{\mathrm{reach}}$, thereby guaranteeing global termination.

Regarding semantic exactness, the condition $C' \notin \mathcal V$ in Phase 1 ensures that 
every configuration $C \in \widehat{\mathcal C}_{\mathrm{tar}}$ is explored 
and cached exactly once. This uniquely identifies each execution by its causal partial order, 
structurally avoiding the redundant evaluation of concurrent interleavings. 
By the exactness of the symbolic propagation engine, 
the terminal tableau $\Gamma_C$ stored in $\mathsf{Cache}[C]$ satisfies 
$\mathrm{Den}(\Gamma_C) = \rho(C)$, where $\rho(C)$ is the true unnormalized density matrix 
associated with configuration $C$. 

In Phase 2, the algorithm filters $\mathcal V_{\mathrm{reach}}$ 
using the structural extension conditions 
to isolate the maximal causal frontier $\mathcal{C}_{\max}(O)$. 
For each configuration $C \in \mathcal{C}_{\max}(O)$ satisfying the secret predicate 
$\mathbf{1}_{\Sec}(M_C) = b$, the localized attacker view is algebraically extracted via 
$\mathrm{Reduce}_A(\Gamma_C)$. The final accumulated operator is therefore:
\begin{equation*}
    \Omega_{O,b} = \sum_{\substack{C \in \mathcal{C}_{\max}(O) \\ \mathbf{1}_{\Sec}(M_C) = b}} 
    \mathrm{Den}_A\bigl(\mathrm{Reduce}_A(\Gamma_C)\bigr).
\end{equation*}
Applying the symbolic identity $\mathrm{Den}_A(\mathrm{Reduce}_A(\Gamma_C)) = \viewA(\rho(C))$ 
to the summation yields exactly the theoretical posterior aggregate $\Omega_{O,b}(\rho_0)$ 
defined in Section~\ref{sec:opacity}, establishing the exactness of the algorithm.
\end{proof}

The next subsection briefly explains why the procedure is exact over the targeted
space and summarizes its computational cost.

\subsection{Exactness, Decidability, and Complexity}
\label{subsec:targeted_exactness_complexity}

The procedure formalized in Algorithm~\ref{alg:exact_target_exploration} is exact over 
the targeted configuration space. Under the assumptions that the plant is divergence-free and 
the target observation family $\mathcal O_{\mathrm{tar}}$ is finite, 
the targeted configuration space $\widehat{\mathcal C}_{\mathrm{tar}}$ is finite. 
By caching each generated configuration under a canonical representation and processing it 
at most once, the worklist exploration strictly terminates.
For every explored configuration $C$, the associated symbolic object stored in 
$\mathsf{Cache}[C]$ is obtained by the exact propagation of the branch maps along $C$. 
By the configuration-level well-definedness established previously, 
this symbolic object is independent of the chosen linearization of $C$. 
Furthermore, the pruning condition $w(\Gamma_C)>0$ removes those trajectories 
whose quantum probability vanishes, retaining precisely the reachable ones. 
As a result, the set $\mathcal V_{\mathrm{reach}}$ computed by the algorithm is 
exactly the targeted reachable configuration space.

The subsequent aggregation phase preserves this exactness.
For each target observation $O\in\mathcal O_{\mathrm{tar}}$, 
every maximal reachable configuration $C\in\mathcal C_{\max}(O)$ 
evaluates to a unique secret condition $b_C=\mathbf{1}_{\Sec}(M_C)\in\{0,1\}$. 
Therefore, the boolean flag $S_{O,b}$ records the existence of a maximal reachable configuration 
for a given observation and secret status, while $W_{O,b}$ stores a valid witness whenever 
such a configuration exists. Concurrently, the aggregate $\Omega_{O,b}$ computes 
the exact finite sum of the localized operators $\mathrm{Reduce}_A(\Gamma_C)$ 
across all such mutually exclusive configurations. It follows that the targeted verification of 
structural current-state opacity over $\mathcal O_{\mathrm{tar}}$ 
is decidable via a finite check on the flags $\{S_{O,b}\}$, 
and the quantitative leakage $L_{\rho_0}(O)$ is explicitly computable 
for every $O\in\mathcal O_{\mathrm{tar}}$ from the exact aggregates 
$\Omega_{O,0}$ and $\Omega_{O,1}$.

We next summarize the computational cost. Let
\[
d_{\max}\coloneqq \max_{C\in\widehat{\mathcal C}_{\mathrm{tar}}}|C|,
\qquad
\Delta_{\mathrm{tar}}
\coloneqq 
\max_{C\in\widehat{\mathcal C}_{\mathrm{tar}}}
|\Ext_{\mathrm{tar}}(C)|,
\]
and let
\[
|\mathcal C_{\max}|
\coloneqq 
\sum_{O\in\mathcal O_{\mathrm{tar}}} |\mathcal C_{\max}(O)|.
\]
Assume that configurations are stored by canonical keys of length at most
$d_{\max}$, that each symbolic update
$\mathrm{Update}(\Gamma,(t,r))$ costs $\mathcal O(|\PQ|^2)$, and that observation pomsets
together with the existence of reachable $\tau$-extensions are recorded during
exploration. The structural cache ensures that the symbolic update is evaluated 
at most once per distinct configuration. Consequently, the total cost of the 
targeted exploration phase is bounded by
\[
\mathcal O\!\left(
|\widehat{\mathcal C}_{\mathrm{tar}}| (\Delta_{\mathrm{tar}} d_{\max} + |\PQ|^2)
\right).
\]
This bound also absorbs the identification of the maximal sets
$\mathcal C_{\max}(O)$, since the corresponding observation pomsets and maximality
flags can be maintained during exploration without changing the asymptotic order.

The remaining cost comes from attacker-side postprocessing. For each maximal
configuration $C\in\mathcal C_{\max}(O)$, the contribution
$\mathrm{Reduce}_A(\Gamma_C)$ is accumulated into an explicit operator on
$\mathcal H_A$. Since $\dim(\mathcal H_A)=2^{|A|}$, each dense attacker-side
operator has $4^{|A|}$ entries, and the total aggregation cost is
\[
\mathcal O\!\left(|\mathcal C_{\max}|\,4^{|A|}\right).
\]
Finally, if the leakage evaluation is performed by Hermitian diagonalization of
the explicit $2^{|A|}\times 2^{|A|}$ posterior difference matrix, then each
target observation pomset contributes an additional
$\mathcal O(8^{|A|})$ cost, yielding
\[
\mathcal O\!\left(|\mathcal O_{\mathrm{tar}}|\,8^{|A|}\right)
\]
for the final leakage computation stage. Therefore, under the above
implementation assumptions, the overall running time is bounded by
\[
\mathcal O\!\left(
|\widehat{\mathcal C}_{\mathrm{tar}}| (\Delta_{\mathrm{tar}} d_{\max} + |\PQ|^2)
+
|\mathcal C_{\max}|\,4^{|A|}
+
|\mathcal O_{\mathrm{tar}}|\,8^{|A|}
\right).
\]

Two remarks are worth emphasizing. First, the exponential dependence on $|A|$
arises only in the final attacker-side matrix postprocessing and does not affect
the exactness of the symbolic targeted exploration itself. Second, if one is
strictly interested in structural current-state opacity, the Boolean flag computation 
$\{S_{O,b}\}$ suffices, rendering both the $4^{|A|}$ explicit matrix aggregation 
and the $8^{|A|}$ leakage-evaluation terms completely unnecessary.

\section{ZX Certificates for Posterior Equivalence}
\label{sec:zx_certificates}

The symbolic engine of Section~\ref{sec:algorithm} is the primary verification
mechanism of the paper. The purpose of this present section is narrower:
it provides an auxiliary certificate layer that turns already-computed
posterior operators into independently checkable graphical proof artifacts.
Accordingly, the ZX layer does not participate in state-space exploration,
does not alter the decidability result, and is not used to derive the
complexity bound.

Building upon the strict qubit restriction established for the symbolic engine 
in Section~\ref{sec:algorithm}, every persistent quantum register is structurally treated as 
a pure two-level system ($\mathcal{H}_q \cong \mathbb{C}^2$). 
Consequently, the standard mixed-state stabilizer ZX-calculus 
\cite{coecke2017picturing, carette2019completeness, wetering2020zx} 
is strictly complete for this certification phase. 
To formally link these diagrams to the underlying quantum operators,
let $\llb \cdot \rrb_{\mathrm{ZX}}$ denote the standard denotational semantics 
mapping a ZX-diagram to its corresponding CP map. 
Extensions of this specific certification layer to other local dimensions require formalizing 
the corresponding complete stabilizer graphical languages, which is left for future work.

\subsection{Configuration-Level Compilation}

We first establish the formal compilation of individual causal trajectories into diagrammatic objects. 
To ground this construction, we assume a base dictionary of graphical generators: 
For each measurement branch $(t,r) \in T_{branch}$ in $\Qsys$, 
there exists a predetermined mixed-state ZX-diagram $\mathcal{Z}_{t,r}$ 
such that its denotational semantics satisfies 
$\llb \mathcal{Z}_{t,r} \rrb_{\mathrm{ZX}} = \Phi_t^{(r)}$.

\begin{definition}[Configuration Compilation]
\label{def:config_compilation}
Let $C$ be a finite configuration of the unfolding $\mathcal{U}(\Qsys)$, 
and $\pi = e_1 \dots e_k$ be a valid linearization of $C$. 
The sequential diagrammatic compilation along $\pi$, 
denoted $\mathcal{Z}_{C}^{\pi}$, is defined via spatial sequential composition:
\[
\mathcal{Z}_{C}^{\pi} \coloneqq \mathcal{Z}_{t_{e_k},r_{e_k}} \circ_{\mathrm{ZX}} \cdots 
\circ_{\mathrm{ZX}} \mathcal{Z}_{t_{e_1},r_{e_1}}.
\]
\end{definition}

\begin{proposition}[Well-Definedness and Semantic Soundness]
\label{prop:zx_compile_configuration}
Let $C$ be a finite configuration. For any two linearizations $\pi_1$ and $\pi_2$ of $C$, 
their compiled diagrams are equivalent in the mixed-state stabilizer ZX-calculus:
\[
\mathcal{Z}_{C}^{\pi_1} \stackrel{\mathrm{ZX}}{\equiv} \mathcal{Z}_{C}^{\pi_2}.
\]
Moreover, for any linearization $\pi$ of $C$, the denotational semantics of the compiled 
diagram satisfies:
\[
\llb \mathcal{Z}_{C}^{\pi} \rrb_{\mathrm{ZX}} = \llb C \rrb.
\]
\end{proposition}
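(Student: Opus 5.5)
The plan is to prove the semantic soundness claim first and then derive the diagrammatic equivalence from it via completeness. The soundness part comes from compositionality of $\llb\cdot\rrb_{\mathrm{ZX}}$, and the equivalence part combines Theorem~\ref{thm:config_welldefined} with the completeness of the mixed-state stabilizer ZX-calculus for qubits. As a fallback for the first claim, I will also sketch a direct rewriting argument.

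\textbf{Soundness.} Fix a linearization $\pi=e_1\dots e_k$ of $C$. The standard ZX semantics is a monoidal functor, so sequential diagram composition is sent to composition of CP maps:
\[
\llb \mathcal{Z}_C^{\pi}\rrb_{\mathrm{ZX}}=\llb \mathcal{Z}_{t_{e_k},r_{e_k}}\rrb_{\mathrm{ZX}}\circ\cdots\circ\llb \mathcal{Z}_{t_{e_1},r_{e_1}}\rrb_{\mathrm{ZX}}.
\]
By the generator dictionary assumption, each factor equals $\Phi_{t_{e_i}}^{(r_{e_i})}=\llb e_i\rrb$, and the product is $\llb C\rrb$ by definition. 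One point needs checking: each $\mathcal{Z}_{t,r}$ must be understood as already lifted to all wires of $\PQ$, meaning the local diagram is placed in parallel with identity wires on the unaccessed registers. This lifting matches the implicit tensoring with $\mathrm{id}$ in Definition~\ref{def:spos_qpn}, and functoriality with respect to $\otimes$ makes it semantically correct.

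\textbf{Equivalence.} Applying the soundness part to $\pi_1$ and $\pi_2$ gives
\[
\llb \mathcal{Z}_C^{\pi_1}\rrb_{\mathrm{ZX}}=\llb C\rrb=\llb \mathcal{Z}_C^{\pi_2}\rrb_{\mathrm{ZX}},
\]
where the middle quantity is well defined by Theorem~\ref{thm:config_welldefined}. Both diagrams lie in the stabilizer fragment, and all registers are qubits. The mixed-state stabilizer ZX-calculus is complete for this fragment \cite{carette2019completeness}, so semantic equality yields a derivation $\mathcal{Z}_C^{\pi_1}\stackrel{\mathrm{ZX}}{\equiv}\mathcal{Z}_C^{\pi_2}$.

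\textbf{Main obstacle.} I expect the delicate step to be whether completeness applies to these diagrams as they stand. The branch diagrams $\mathcal{Z}_{t,r}$ are CP and trace non-increasing, for example projections onto a measurement outcome, rather than channels. They also carry scalars, namely outcome weights such as $1/2$. I will first check that the cited completeness theorem covers scalar-exact equality of CP maps on the stabilizer fragment, including nonunitary postselected branches. If it covers only equality up to scalar, I will need a separate argument that the scalars agree, which they do because the semantics agree exactly.

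\textbf{Constructive fallback.} To avoid leaning on completeness for the first claim, I will reuse the adjacent-transposition argument from the proof of Theorem~\ref{thm:config_welldefined}. Concurrent adjacent events $e_i,e_{i+1}$ act on disjoint wire sets because $\Acc$ is disjoint. Their lifted diagrams are therefore $D_1\otimes\mathrm{id}$ and $\mathrm{id}\otimes D_2$, and the two orders of composition are equal by the interchange law. That law holds by isotopy, which is ``only connectivity matters'' in ZX. A finite chain of such swaps connects $\pi_1$ to $\pi_2$, giving an explicit certificate without appealing to completeness.
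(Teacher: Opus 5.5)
Your proposal is correct and follows the paper's proof. Functoriality of $\llb\cdot\rrb_{\mathrm{ZX}}$ together with the generator dictionary gives $\llb \mathcal{Z}_C^{\pi}\rrb_{\mathrm{ZX}}=\llb\pi\rrb=\llb C\rrb$, and Theorem~\ref{thm:config_welldefined} plus completeness of the mixed-state stabilizer ZX-calculus \cite{carette2019completeness} turns equal denotations into $\mathcal{Z}_C^{\pi_1}\stackrel{\mathrm{ZX}}{\equiv}\mathcal{Z}_C^{\pi_2}$.

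Your interchange-law fallback is a worthwhile addition that the paper lacks, since it gives an explicit rewrite certificate without invoking completeness. It still rests on the same premise as Theorem~\ref{thm:config_welldefined}: that concurrent events in the control-net unfolding have disjoint $\Acc$.
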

\begin{proof}
For any chosen linearization $\pi = e_1 \dots e_k$ of $C$, 
the functorial nature of the diagrammatic composition guarantees that 
$\llb \mathcal{Z}_{C}^{\pi} \rrb_{\mathrm{ZX}} = \Phi_{t_{e_k}}^{(r_{e_k})} 
\circ \cdots \circ \Phi_{t_{e_1}}^{(r_{e_1})} = \llb \pi \rrb$. 
By Theorem~\ref{thm:config_welldefined}, if $\pi$ and $\pi'$ are two distinct linearizations of 
the same configuration $C$, their cumulative CP maps are strictly equivalent, 
i.e., $\llb \pi \rrb = \llb \pi' \rrb$. Consequently, their respective compiled diagrams yield 
identical denotations ($\llb \mathcal{Z}_{C}^{\pi} \rrb_{\mathrm{ZX}} = \llb \mathcal{Z}_{C}^{\pi'} \rrb_{\mathrm{ZX}}$). 
By the completeness of the mixed-state stabilizer ZX-calculus for CP maps 
\cite{carette2019completeness}, these two diagrams are provably equivalent within the calculus 
(i.e., $\mathcal{Z}_{C}^{\pi} \stackrel{\mathrm{ZX}}{\equiv} \mathcal{Z}_{C}^{\pi'}$). 
The semantic soundness $\llb \mathcal{Z}_{C}^{\pi} \rrb_{\mathrm{ZX}} = \llb C \rrb$ 
then follows directly from the definition of configuration denotations.
\end{proof}

By virtue of Proposition~\ref{prop:zx_compile_configuration}, 
the compiled diagrammatic representation is intrinsically invariant under 
the choice of linearization. We may therefore unambiguously omit the superscript 
and denote the configuration-level diagram simply as $\mathcal{Z}_C$ in all subsequent analysis.

\subsection{Posterior Certificates and Rewrite Invariance}

While Proposition~\ref{prop:zx_compile_configuration} provides a diagram for 
a single configuration, the unnormalized attacker posterior $\Omega_{O,b}(\rho_0)$ 
computed by Algorithm~\ref{alg:exact_target_exploration} is an algebraic aggregate over 
a set of maximal configurations $\mathcal{C}_{\max}(O)$. 
Since the global initial state $\rho_0$ and all branch updates fall strictly within 
the stabilizer fragment, the resultant operator $\Omega_{O,b}(\rho_0)$ represents 
an unnormalized stabilizer mixed state on the attacker interface $\HA$.

By the universality of the ZX-calculus for the stabilizer fragment, there rigorously exists 
a ZX-diagram that synthesizes this aggregated algebraic state without 
requiring the formal diagrammatic summation of concurrent histories.

\begin{definition}[Posterior Certificate]
\label{def:posterior_certificate}
Let $O$ be an observation pomset and $b \in \{0,1\}$ be a secret condition. 
A well-formed mixed-state ZX-diagram $\mathcal{Z}_{O,b}$ is said to be a posterior certificate for $(O,b)$ 
if it satisfies the following two conditions:
\begin{enumerate}
    \item The spatial boundaries of $\mathcal{Z}_{O,b}$ correspond exactly to 
    the attacker interface $\HA$;
    \item Its denotational semantics matches the unnormalized posterior 
    state, i.e.,
    \begin{equation}
        \label{eq:posterior_certificate}
        \llb \mathcal{Z}_{O,b} \rrb_{\mathrm{ZX}} = \Omega_{O,b}(\rho_0).
    \end{equation}
\end{enumerate}
Such a certificate is conventionally denoted by $\mathcal{Z}_{O,b}$.
\end{definition}

Let the relation $\vdash_{\mathrm{ZX}}$ denote provable equality derived via 
a finite sequence of sound topological rewrite rules within the mixed-state stabilizer ZX-calculus.

\begin{proposition}[Rewrite Invariance]
\label{prop:zx_rewrite_invariance}
Let $\mathcal{Z}_{O,b}$ be a posterior certificate for $(O,b)$. 
If $\mathcal{Z}_{O,b} \vdash_{\mathrm{ZX}} \widehat{\mathcal{Z}}_{O,b}$, 
then $\widehat{\mathcal{Z}}_{O,b}$ is also a posterior certificate for $(O,b)$, such that 
\[
\llb \widehat{\mathcal{Z}}_{O,b} \rrb_{\mathrm{ZX}} = 
\llb \mathcal{Z}_{O,b} \rrb_{\mathrm{ZX}} = \Omega_{O,b}(\rho_0).
\]
\end{proposition}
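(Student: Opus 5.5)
The proof is a soundness argument: provable equality in the mixed-state stabilizer ZX-calculus preserves denotation. Combined with the two defining conditions of a posterior certificate (Definition~\ref{def:posterior_certificate}), this gives the claim directly. I will check the two conditions for $\widehat{\mathcal{Z}}_{O,b}$ separately.

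\textbf{Step 1 (soundness of a single rewrite).} The relation $\vdash_{\mathrm{ZX}}$ is generated by finitely many applications of the axioms of the mixed-state stabilizer ZX-calculus. Each axiom is sound for the CP semantics \cite{carette2019completeness,wetering2020zx,selinger2007dagger}: for each rule $L \Rightarrow R$, we have $\llb L \rrb_{\mathrm{ZX}} = \llb R \rrb_{\mathrm{ZX}}$.

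A rewrite replaces a subdiagram $L$ inside a context by $R$. Since $\llb \cdot \rrb_{\mathrm{ZX}}$ is a symmetric monoidal functor (compositional under $\circ_{\mathrm{ZX}}$ and $\otimes$), the denotation of the whole diagram is a fixed composition of the context with $\llb L \rrb_{\mathrm{ZX}}$. Replacing $\llb L \rrb_{\mathrm{ZX}}$ by the equal map $\llb R \rrb_{\mathrm{ZX}}$ therefore leaves the global denotation unchanged.

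\textbf{Step 2 (finite sequences).} A derivation $\mathcal{Z}_{O,b} = \mathcal{Z}^{(0)} \to \mathcal{Z}^{(1)} \to \cdots \to \mathcal{Z}^{(m)} = \widehat{\mathcal{Z}}_{O,b}$ has finite length $m$. Induction on $m$ using Step 1 gives $\llb \widehat{\mathcal{Z}}_{O,b} \rrb_{\mathrm{ZX}} = \llb \mathcal{Z}_{O,b} \rrb_{\mathrm{ZX}}$. Condition~2 for $\mathcal{Z}_{O,b}$ (equation~\eqref{eq:posterior_certificate}) then yields $\llb \widehat{\mathcal{Z}}_{O,b} \rrb_{\mathrm{ZX}} = \Omega_{O,b}(\rho_0)$. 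This is both the displayed chain of equalities and condition~2 for $\widehat{\mathcal{Z}}_{O,b}$.

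\textbf{Step 3 (boundary and well-formedness).} ZX rewrite rules are equations between diagrams of the same type: each rule has the same number of input and output wires on both sides, and context substitution keeps the outer boundary fixed. So $\widehat{\mathcal{Z}}_{O,b}$ has the same open wires as $\mathcal{Z}_{O,b}$, which correspond to $\HA$, and condition~1 holds. Well-formedness as a mixed-state diagram is preserved for the same reason, since the rules map well-typed diagrams to well-typed diagrams, including discard generators.

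The main obstacle is Step 1: the compositionality of $\llb\cdot\rrb_{\mathrm{ZX}}$ in the mixed-state (CPM/discard) setting. I would invoke the standard fact that the mixed-state interpretation is a functor into CP maps obtained by doubling plus discard. Soundness with respect to CP maps is then inherited from the pure calculus and the discard axioms. Completeness is not needed here; only soundness is used.
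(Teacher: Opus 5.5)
Your proposal is correct and follows the paper's proof. Both derive $\llb \widehat{\mathcal{Z}}_{O,b} \rrb_{\mathrm{ZX}} = \llb \mathcal{Z}_{O,b} \rrb_{\mathrm{ZX}} = \Omega_{O,b}(\rho_0)$ from soundness of the mixed-state ZX-calculus for CP maps together with condition~2 of Definition~\ref{def:posterior_certificate}, and both obtain condition~1 from boundary preservation under rewriting; your Steps 1--2 (rule-wise soundness plus functoriality, then induction on derivation length) simply unpack the paper's one-line appeal to soundness.
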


\begin{proof}
Let $\mathcal{Z}_{O,b}$ be a posterior certificate. By Definition~\ref{def:posterior_certificate}, 
it satisfies $\llb \mathcal{Z}_{O,b} \rrb_{\mathrm{ZX}} = \Omega_{O,b}(\rho_0)$. 
The ZX-calculus is known to be sound with respect to its denotational semantics in the category of 
Hilbert spaces and CP maps~\cite{backens2014zx, coecke2017picturing}. 
Specifically, for any two diagrams $\mathcal{G}, \widehat{\mathcal{G}}$, 
the relation $\mathcal{G} \vdash_{\mathrm{ZX}} \widehat{\mathcal{G}}$ implies the semantic identity 
$\llb \mathcal{G} \rrb_{\mathrm{ZX}} = \llb \widehat{\mathcal{G}} \rrb_{\mathrm{ZX}}$. 
Consequently, the rewritten diagram $\widehat{\mathcal{Z}}_{O,b}$ preserves the algebraic aggregate:
\begin{equation}
    \llbracket \widehat{\mathcal{Z}}_{O,b} \rrbracket_{\mathrm{ZX}} = \llbracket \mathcal{Z}_{O,b} \rrbracket_{\mathrm{ZX}} = \Omega_{O,b}(\rho_0).
\end{equation}
Since $\widehat{\mathcal{Z}}_{O,b}$ maintains the same spatial boundaries as $\mathcal{Z}_{O,b}$ 
under standard rewrite rules, it satisfies all conditions of Definition~\ref{def:posterior_certificate}, completing.
\end{proof}

\begin{corollary}[Zero-Leakage Certificate]
\label{cor:zx_zero_leakage_certificate}
Let $O$ be an observation pomset with $p(O,b) > 0$ for $b \in \{0,1\}$, 
and let $\widehat{\mathcal{Z}}_{O,0}, \widehat{\mathcal{Z}}_{O,1}$ be their respective 
posterior certificates. If there exists a scalar $\alpha > 0$ such that
\begin{equation}
    \label{eq:zx_equivalence_condition}
    \widehat{\mathcal{Z}}_{O,1} \vdash_{\mathrm{ZX}} \alpha \cdot \widehat{\mathcal{Z}}_{O,0},
\end{equation}
then the normalized posterior states coincide, 
i.e., $\bar{\sigma}_{O,1}(\rho_0) = \bar{\sigma}_{O,0}(\rho_0)$, 
and the posterior-state leakage vanishes:
$L_{\rho_0}(O) = 0$.
\end{corollary}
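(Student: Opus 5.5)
The argument transfers the diagrammatic hypothesis to operators via soundness, uses the trace to pin down the scalar, and then normalizes.

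\emph{Step 1: move to operators.} By Definition~\ref{def:posterior_certificate}, both certificates denote the aggregates:
\[
\llb \widehat{\mathcal{Z}}_{O,1}\rrb_{\mathrm{ZX}}=\Omega_{O,1}(\rho_0),\qquad \llb \widehat{\mathcal{Z}}_{O,0}\rrb_{\mathrm{ZX}}=\Omega_{O,0}(\rho_0).
\]
The rewrite relation $\vdash_{\mathrm{ZX}}$ is sound, as already used in the proof of Proposition~\ref{prop:zx_rewrite_invariance}. The standard semantics sends a scalar multiple $\alpha\cdot\mathcal{G}$ to $\alpha\llb\mathcal{G}\rrb_{\mathrm{ZX}}$, by linearity of the interpretation. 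Hence \eqref{eq:zx_equivalence_condition} yields
\[
\Omega_{O,1}(\rho_0)=\alpha\,\Omega_{O,0}(\rho_0).
\]

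\emph{Step 2: identify the scalar.} Take traces of this identity. By Definition~\ref{def:posterior_state_leakage}, $\Tr(\Omega_{O,b}(\rho_0))=p(O,b)$, so $p(O,1)=\alpha\,p(O,0)$. Both probabilities are strictly positive by hypothesis, so $\alpha=p(O,1)/p(O,0)$. This step only needs $\alpha\neq 0$, which holds because $\alpha>0$; the positivity assumption is also consistent with both aggregates being positive semidefinite.

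\emph{Step 3: normalize.} Dividing the operator identity by $p(O,1)=\alpha\,p(O,0)$ gives
\[
\bar{\sigma}_{O,1}(\rho_0)=\frac{\alpha\,\Omega_{O,0}(\rho_0)}{\alpha\,p(O,0)}=\bar{\sigma}_{O,0}(\rho_0).
\]
Since $p(O,0)>0$ and $p(O,1)>0$, the first case of Definition~\ref{def:posterior_state_leakage} applies. The leakage is therefore $\tfrac12\|0\|_1=0$.

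\emph{Main obstacle.} The only delicate point is Step 1. It requires justifying that the soundness of $\vdash_{\mathrm{ZX}}$ extends to diagrams carrying an explicit scalar factor $\alpha$. I would handle this by treating $\alpha$ as a scalar diagram (a diagram with no boundary) composed in parallel. Its denotation is the number $\alpha$, and parallel composition is interpreted by the tensor product with $\mathbb{C}$, i.e.\ by scalar multiplication. Soundness of rewrites for CP maps, as cited in the proof of Proposition~\ref{prop:zx_rewrite_invariance}, then covers this case with no extra work.
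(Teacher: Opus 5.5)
Your proof is correct and follows the paper's argument. Soundness of $\vdash_{\mathrm{ZX}}$ turns the hypothesis into $\Omega_{O,1}(\rho_0)=\alpha\,\Omega_{O,0}(\rho_0)$, normalizing by the positive traces then gives $\bar{\sigma}_{O,1}(\rho_0)=\bar{\sigma}_{O,0}(\rho_0)$, and the first case of Definition~\ref{def:posterior_state_leakage} yields $L_{\rho_0}(O)=0$; your trace step identifying $\alpha=p(O,1)/p(O,0)$ and your scalar-diagram remark are harmless refinements, and the former even shows that $\alpha>0$ already follows from $p(O,0),p(O,1)>0$.
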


\begin{proof}
By Proposition~\ref{prop:zx_rewrite_invariance} and the soundness of the ZX-calculus, 
the relation \eqref{eq:zx_equivalence_condition} implies the semantic identity:
\begin{equation}
    \Omega_{O,1}(\rho_0) = \alpha \Omega_{O,0}(\rho_0).
\end{equation}
Recall that the normalized posterior state is defined as 
$\bar{\sigma}_{O,b}(\rho_0) = \Omega_{O,b}(\rho_0) / \mathrm{Tr}(\Omega_{O,b}(\rho_0))$. 
Since $\alpha > 0$ and the operators are non-null by the assumption $p(O,b) > 0$, we have:
\begin{equation}
    \bar{\sigma}_{O,1}(\rho_0) = \frac{\alpha \Omega_{O,0}(\rho_0)}{\mathrm{Tr}(\alpha \Omega_{O,0}(\rho_0))} 
    = \frac{\alpha \Omega_{O,0}(\rho_0)}{\alpha \mathrm{Tr}(\Omega_{O,0}(\rho_0))} = \bar{\sigma}_{O,0}(\rho_0).
\end{equation}
The identity $\bar{\sigma}_{O,1}(\rho_0) = \bar{\sigma}_{O,0}(\rho_0)$ implies that the attacker's 
posterior distribution over the secret remains equal to the prior (or independent of the secret $b$). 
Consequently, $L_{\rho_0}(O) = 0$ follows as a direct consequence of 
Definition~\ref{def:posterior_state_leakage}.
\end{proof}

\subsection{The Operational Role of ZX Certificates}
\label{subsec:zx_summary}

To summarize, the theoretical value of the ZX certificate layer lies in its ability to 
fundamentally decouple \emph{proof generation} from \emph{proof verification}. 
While the exact symbolic engine (Section~\ref{sec:algorithm}) efficiently mitigates 
the concurrent state-space explosion to compute the exact algebraic posteriors, 
its execution traces and internal tableau aggregates are challenging to independently audit.

By formally lifting these algebraic aggregates into the strictly complete mixed-state stabilizer 
ZX-calculus, this section transforms the algebraic equivalence of posterior attacker states into 
a purely topological rewrite problem. Ultimately, this equips the verification framework with 
\emph{independently checkable witnesses}: an external auditor can rigorously certify quantum opacity 
(e.g., zero leakage via Corollary~\ref{cor:zx_zero_leakage_certificate}) by mechanically validating 
a sequence of sound diagrammatic rewrites ($\vdash_{\mathrm{ZX}}$), entirely eliminating the need to 
trust or re-execute the underlying concurrent state-space exploration.

\section{Opacity Enforcement via Supervisory Disabling and Invisible Masking}
\label{sec:enforcement}

While the preceding framework provides exact verification of posterior-state leakage, 
securing a system often requires actively mitigating the identified opacity violations. 
To address this, this section introduces a targeted enforcement mechanism that 
combines the classical supervisory disabling of controllable transitions with 
invisible quantum masking operations. Rather than developing a comprehensive 
supervisory-control theory, the objective is to establish a mathematically rigorous interface 
through which verification results can directly guide leakage reduction.

\subsection{Controlled System and Enforcement Policies}

Let
\[
T = T_c \uplus T_{uc}
\]
be the partition of plant transitions into controllable and uncontrollable
transitions. In addition, let $T_m$ be a finite set of auxiliary masking
transitions, disjoint from $T$, and define the extended transition set
\[
\overline T \coloneqq T \uplus T_m .
\]
Each masking transition $t_m \in T_m$ is assumed to satisfy the following statements:
(i) it is triggerable by the supervisor; (ii) it is invisible to the attacker,
in the sense that every branch of $t_m$ carries observation label $\tau$;
(iii) its quantum branch semantics is a CPTP map from the chosen stabilizer
fragment; (iv) it is compatible with the admissibility specification
$\mathcal A_{\mathrm{spec}}$.

The controlled architecture is specified by
\[
\Qsys^{\mathrm{ctrl}}
=
(\Qsys, T_c, T_{uc}, T_m, \mathcal A_{\mathrm{spec}}, \mathcal C_{\mathrm{cost}}),
\]
where $\mathcal A_{\mathrm{spec}}$ is the admissibility specification and
$\mathcal C_{\mathrm{cost}}$ is the cost model for control actions.

We define policies over the finite configurations of the extended uncontrolled unfolding associated 
with the extended transition set $\overline T$, so that the policy domain is fixed independently of 
the eventual closed-loop behavior.

\begin{definition}[Enforcement Policy]
\label{def:enforcement_policy}
An enforcement policy over the extended uncontrolled unfolding is defined as a pair 
$\pi_{\mathrm{ctrl}} = (\delta, \mu)$, where for each finite configuration $C$:
\begin{itemize}[leftmargin=*]
    \item $\delta(C) \subseteq T_c$ is the set of controllable plant transitions disabled at $C$;
    \item $\mu(C) \subseteq T_m$ is the set of masking transitions permitted for injection at $C$.
\end{itemize}
\end{definition}

By design, uncontrollable transitions $T_{uc}$ can never be disabled. 
Given an enforcement policy $\pi_{\mathrm{ctrl}}$, the \emph{closed-loop system}, 
denoted by $\Qsys \parallel \pi_{\mathrm{ctrl}}$, is constructed as the sub-unfolding induced by 
retaining all uncontrollable events, deleting events generated by 
the disabled controllable transitions in $\delta$, and retaining only those masking events 
explicitly permitted by $\mu$.
Furthermore, a policy $\pi_{\mathrm{ctrl}}$ is said to be \emph{admissible} 
if its induced closed-loop system satisfies the prescribed admissibility specification:
\begin{equation}
\label{eq:admissibility}
    \Qsys \parallel \pi_{\mathrm{ctrl}} \models \mathcal A_{\mathrm{spec}}.
\end{equation}

\subsection{Feasible Enforcement Objective}

For an enforcement policy $\pi_{\mathrm{ctrl}}$, let
$L_{\rho_0}^{\pi_{\mathrm{ctrl}}}(O)$ denote the posterior-state leakage
associated with an observation pomset $O$ in the closed-loop system
$\Qsys \parallel \pi_{\mathrm{ctrl}}$.

\begin{definition}[$\epsilon$-feasible enforcement]
\label{def:epsilon_feasible_enforcement}
An admissible policy $\pi_{\mathrm{ctrl}}$ is said to be $\epsilon$-feasible if
\[
\sup_{O \in \Obs(\Qsys \parallel \pi_{\mathrm{ctrl}})}
L_{\rho_0}^{\pi_{\mathrm{ctrl}}}(O) \le \epsilon ,
\]
where $\Obs(\Qsys \parallel \pi_{\mathrm{ctrl}})$ symbolizes the set of
observation pomsets reachable in the closed-loop system.
\end{definition}

Among all $\epsilon$-feasible policies, one may further minimize the cumulative
cost $J(\pi_{\mathrm{ctrl}})$ induced by $\mathcal C_{\mathrm{cost}}$. In the
present paper, however, our algorithmic goal is sound and feasible synthesis rather
than provably optimal synthesis.

\subsection{Contractivity of Posterior Distinguishability under Masking}

The analytical justification for quantum masking lies in the contractivity of 
trace distance under CPTP maps. 
However, since masking transitions operate on the global plant, ensuring 
trace-distance reduction exclusively at the attacker's localized interface 
requires that the masking operation should not inadvertently transfer hidden 
information from the environment into the observable subsystem.

\begin{lemma}[Contractivity under Localizable Masking]
\label{lem:posterior_contractivity_masking}
Let $O$ be a reachable observation pomset. Suppose that, at every configuration $C \in \mathcal{C}(O)$, 
the enforcement policy $\pi_{\mathrm{ctrl}}$ applies a common invisible masking transition 
$t_m \in T_m$ with global CPTP semantics $\mathcal{M}$. 
Assume that $\mathcal{M}$ is independent of the secret condition 
$b \in \{0,1\}$ and is \emph{localizable} to the attacker subsystem $A$, i.e., 
there exists a local CPTP map $\mathcal{M}_A$ on $\mathcal{L}(\mathcal{H}_A)$ such that
\begin{equation}
    \label{eq:localizability_condition}
    \Tr_{\PQ \setminus A} \circ \mathcal{M} = \mathcal{M}_A \circ \Tr_{\PQ \setminus A}.
\end{equation}
Then, the posterior-state trace distance satisfies:
\begin{equation}
    \label{eq:trace_distance_contractivity}
    D\!\left(
    \bar{\sigma}^{\pi_{\mathrm{ctrl}}}_{O,1}(\rho_0),
    \bar{\sigma}^{\pi_{\mathrm{ctrl}}}_{O,0}(\rho_0)
    \right)
    \le
    D\!\left(
    \bar{\sigma}_{O,1}(\rho_0),
    \bar{\sigma}_{O,0}(\rho_0)
    \right).
\end{equation}
\end{lemma}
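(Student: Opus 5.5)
The plan is to show that masking acts on the closed-loop posterior aggregates as a single CPTP map $\mathcal{M}_A$ applied on $\HA$, and then invoke contractivity of the trace distance from Section~\ref{sec:preliminaries}. Concretely, for each $b\in\{0,1\}$ I will establish
\[
\Omega^{\pi_{\mathrm{ctrl}}}_{O,b}(\rho_0)=\mathcal{M}_A\bigl(\Omega_{O,b}(\rho_0)\bigr),
\]
deduce from this that the normalized posteriors are also related by $\mathcal{M}_A$, and conclude with $D(\mathcal{M}_A(\rho),\mathcal{M}_A(\sigma))\le D(\rho,\sigma)$.

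\textbf{Step 1 (configuration level).} The masking transition carries only $\tau$-labelled branches, so appending its event to a configuration $C\in\mathcal{C}(O)$ leaves $\Obs(C)$ unchanged. It also fires no control place, so $M_C$, and hence the secret bit $b$, is unchanged. The closed-loop configurations producing $O$ with status $b$ are therefore in bijection with $\mathcal{C}_b(O)$ via $C\mapsto C\cup\{e_m\}$, and by Theorem~\ref{thm:config_welldefined} their denotation is $\mathcal{M}\circ\llb C\rrb$. Applying the localizability condition \eqref{eq:localizability_condition} gives
\[
\viewA\bigl(\mathcal{M}(\llb C\rrb(\rho_0))\bigr)=\mathcal{M}_A\bigl(\viewA(\llb C\rrb(\rho_0))\bigr).
\]

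\textbf{Step 2 (aggregation and normalization).} Summing over $\mathcal{C}_b(O)$ and using linearity of $\mathcal{M}_A$ yields the displayed identity for the aggregates. Since $\mathcal{M}_A$ is trace preserving, $p^{\pi_{\mathrm{ctrl}}}(O,b)=p(O,b)$. Dividing by this common normalization gives $\bar{\sigma}^{\pi_{\mathrm{ctrl}}}_{O,b}=\mathcal{M}_A(\bar{\sigma}_{O,b})$. The hypothesis that $\mathcal{M}$ is independent of $b$ is essential here: the same map $\mathcal{M}_A$ must act on both posteriors.

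\textbf{Step 3 (contraction).} Applying contractivity of the trace distance under the channel $\mathcal{M}_A$ gives \eqref{eq:trace_distance_contractivity}. I assume both $p(O,b)>0$ so that the posteriors are defined. If either $p(O,b)$ vanishes, it vanishes identically in the closed loop as well, and the degenerate cases of Definition~\ref{def:posterior_state_leakage} coincide on both sides.

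\textbf{Main obstacle.} The hard part is the bookkeeping in Step 1, namely showing that the closed-loop class $\mathcal{C}^{\pi_{\mathrm{ctrl}}}_b(O)$ is exactly the image of $\mathcal{C}_b(O)$ under appending the mask. Two issues must be settled:
\begin{enumerate}
\item The masking event can sit anywhere in the unfolding, and whether it commutes with later events depends on register overlap. I would take the statement's hypothesis, a common mask applied at every $C\in\mathcal{C}(O)$, to mean that it is applied at the settled frontier, after $C$.
\item The policy is purely masking, so $\delta$ deletes no events.
\end{enumerate}
I would state both readings explicitly as standing interpretations of the lemma's hypothesis rather than derive them.
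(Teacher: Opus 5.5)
Your proposal is correct and follows essentially the paper's route: use the localizability identity to rewrite each closed-loop posterior as $\mathcal{M}_A$ applied to the corresponding open-loop posterior, then invoke contractivity of the trace distance under the channel $\mathcal{M}_A$. The paper applies this directly to a normalized pre-masking global state $\bar{\rho}_{O,b}(\rho_0)$ and leaves implicit both the configuration bookkeeping and the normalization step (which relies on trace preservation), so your Steps 1--2 and your standing interpretations spell out what the paper's proof assumes tacitly.
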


\begin{proof}
By hypothesis, the common global masking channel $\mathcal{M}$ is applied at every 
configuration $C \in \mathcal{C}(O)$ and is independent of $b$. The closed-loop posterior 
states are obtained by tracing out the unobservable environment $E = \PQ\setminus A$ 
from the post-masking global branch states. Applying the localizability condition 
\eqref{eq:localizability_condition}, we obtain:
\begin{align*}
    \bar{\sigma}^{\pi_{\mathrm{ctrl}}}_{O,b}(\rho_0) 
    &= \Tr_{E}\big(\mathcal{M}(\bar{\rho}_{O,b}(\rho_0))\big) \\
    &= \mathcal{M}_A\big(\Tr_{E}(\bar{\rho}_{O,b}(\rho_0))\big) \\
    &= \mathcal{M}_A\big(\bar{\sigma}_{O,b}(\rho_0)\big),
\end{align*}
where $\bar{\rho}_{O,b}(\rho_0)$ represents the global state immediately prior to masking. 
The inequality \eqref{eq:trace_distance_contractivity} follows directly from the 
contractivity of the trace distance under the local quantum channel $\mathcal{M}_A$.
\end{proof}

The lemma establishes that the localizable invisible masking is strictly non-expansive with respect to 
the attacker's posterior distinguishability. It therefore provides a sound foundational mechanism 
for bounding and subsequently reducing leakage, entirely without altering the functional semantics of 
the observed pomset.

\begin{proposition}[Complete twirling yields zero leakage]
\label{prop:complete_twirling}
Under the assumptions of Lemma~\ref{lem:posterior_contractivity_masking},
suppose that the induced local map $\mathcal{M}_A$ on the attacker 
subsystem is the complete twirling channel, which maps every localized state 
to the maximally mixed state, i.e.,
\[
\mathcal{M}_A(\omega) = I_A/d_A
\qquad
\text{for all } \omega \in \mathcal{L}(\mathcal{H}_A),
\]
where $d_A$ is the dimension of $\mathcal{H}_A$. Then, conditional on the 
observation pomset $O$, the closed-loop leakage vanishes:
\[
L_{\rho_0}^{\pi_{\mathrm{ctrl}}}(O) = 0.
\]
\end{proposition}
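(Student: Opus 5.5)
The plan is to reuse the computation from the proof of Lemma~\ref{lem:posterior_contractivity_masking}, but to evaluate it exactly instead of bounding it. We then read off the value of $L^{\pi_{\mathrm{ctrl}}}_{\rho_0}(O)$ by splitting into the three cases of Definition~\ref{def:posterior_state_leakage}.

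First we handle the degenerate cases. If $p^{\pi_{\mathrm{ctrl}}}(O,1)=0$, the third branch of Definition~\ref{def:posterior_state_leakage} gives leakage $0$ directly. The case $p(O,1)>0$ with $p(O,0)=0$ has to be excluded by the standing assumptions. The conditioning on the reachable pomset $O$ under the hypotheses of Lemma~\ref{lem:posterior_contractivity_masking} compares two normalized posteriors, so both secret conditions are realized. Moreover, masking is CPTP and invisible (label $\tau$), so it preserves the trace and hence the joint probabilities $p(O,b)$. We state explicitly that we work in the regime $p(O,0),p(O,1)>0$, which is the one where the Lemma's trace distance is meaningful. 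This is the step I expect to be the main obstacle. The statement does not itself forbid $p(O,0)=0<p(O,1)$, which would give leakage $1$, so the proof must either invoke the implicit hypothesis of the Lemma or restrict to that case. I would write this restriction in explicitly.

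In the main case, the proof of Lemma~\ref{lem:posterior_contractivity_masking} combined with the localizability condition \eqref{eq:localizability_condition} gives
\[
\bar{\sigma}^{\pi_{\mathrm{ctrl}}}_{O,b}(\rho_0)=\mathcal{M}_A\bigl(\bar{\sigma}_{O,b}(\rho_0)\bigr),\qquad b\in\{0,1\}.
\]
Each $\bar{\sigma}_{O,b}(\rho_0)$ is a density operator on $\HA$. Substituting the complete twirling hypothesis gives $\bar{\sigma}^{\pi_{\mathrm{ctrl}}}_{O,1}(\rho_0)=I_A/d_A=\bar{\sigma}^{\pi_{\mathrm{ctrl}}}_{O,0}(\rho_0)$. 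One subtlety is that the displayed hypothesis ``$\mathcal{M}_A(\omega)=I_A/d_A$ for all $\omega$'' is only consistent with linearity and trace preservation on unit-trace inputs; the general form is $\omega\mapsto\Tr(\omega)I_A/d_A$. Since we only apply $\mathcal{M}_A$ to normalized states, this causes no difficulty.

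The first branch of Definition~\ref{def:posterior_state_leakage} then gives
\[
L^{\pi_{\mathrm{ctrl}}}_{\rho_0}(O)=\tfrac12\bigl\|I_A/d_A-I_A/d_A\bigr\|_1=0.
\]
As a consistency check, this agrees with the contractivity bound \eqref{eq:trace_distance_contractivity} of Lemma~\ref{lem:posterior_contractivity_masking}, with the right-hand side collapsed to zero.
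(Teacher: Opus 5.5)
Your proof is correct and matches the paper's argument: both closed-loop posteriors reduce to $\mathcal{M}_A(\bar{\sigma}_{O,b}(\rho_0)) = I_A/d_A$ via the localizability chain from the proof of Lemma~\ref{lem:posterior_contractivity_masking}, so their trace distance is zero. Your two additions are sound refinements of points the paper's proof leaves implicit: you restrict explicitly to the regime $p(O,0),p(O,1)>0$, since otherwise Definition~\ref{def:posterior_state_leakage} would assign leakage $1$, and you correct the twirling map to $\omega \mapsto \Tr(\omega)\, I_A/d_A$.
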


\begin{proof}
Since the attacker's classical observation $O$ is identical for both secret 
conditions by definition of the posterior analysis, the distinguishability relies 
entirely on the quantum states. The local map $\mathcal{M}_A$ maps both pre-masking 
posterior states $\bar{\sigma}_{O,1}(\rho_0)$ and $\bar{\sigma}_{O,0}(\rho_0)$ to 
the identical maximally mixed state $I_A/d_A$. Hence, their trace distance is exactly zero.
\end{proof}

A standard, less aggressive alternative to complete twirling is partial depolarization. 
For an arbitrary quantum subsystem $\mathcal{H}_A$ of dimension $d_A$, the generalized 
depolarizing channel acting on any state $\omega \in \mathcal{D}(\mathcal{H}_A)$ is defined as
\[
\mathcal E_p(\omega) = (1-p)\omega + p\frac{I_A}{d_A},
\qquad p \in [0,1].
\]

\begin{proposition}[Leakage reduction via generalized depolarization]
\label{prop:generalized_depolarization}
Under the assumptions of Lemma~\ref{lem:posterior_contractivity_masking}, suppose 
the enforcement policy applies an invisible masking transition that induces the 
generalized local depolarizing channel $\mathcal E_p$ on the attacker subsystem $\mathcal{H}_A$. 
Then, for any observation pomset $O$, the closed-loop leakage scales proportionally:
\[
L_{\rho_0}^{\pi_{\mathrm{ctrl}}}(O) = (1-p) D\big(\bar{\sigma}_{O,1}(\rho_0), \bar{\sigma}_{O,0}(\rho_0)\big).
\]
\end{proposition}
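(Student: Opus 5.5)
We follow the same route as Lemma~\ref{lem:posterior_contractivity_masking}, but track the exact value instead of only an upper bound. First, under the localizability assumption \eqref{eq:localizability_condition} with $\mathcal{M}_A=\mathcal E_p$, the computation in that lemma gives, for each $b\in\{0,1\}$,
\[
\bar{\sigma}^{\pi_{\mathrm{ctrl}}}_{O,b}(\rho_0)=\mathcal E_p\big(\bar{\sigma}_{O,b}(\rho_0)\big).
\]
We must also check that normalization commutes with the masking. Since $\mathcal{M}$ is CPTP and every branch of $t_m$ is labelled $\tau$, three things hold. The observation pomset $O$ is unchanged. The configuration sets $\mathcal C_b(O)$ are unchanged up to the appended silent event. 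The probabilities $p(O,b)$ are unchanged. Hence the closed-loop normalized posteriors are exactly $\mathcal E_p$ applied to the open-loop normalized posteriors. Because $\mathcal E_p$ is affine, we may equally apply it to the unnormalized aggregates $\Omega_{O,b}(\rho_0)$ and normalize afterwards: the trace is preserved, so the weights are the same.

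Second, we compute the difference directly. Both $\bar{\sigma}_{O,b}(\rho_0)$ are density operators with unit trace. The term $p\,I_A/d_A$ therefore appears identically in both and cancels:
\[
\mathcal E_p(\bar{\sigma}_{O,1})-\mathcal E_p(\bar{\sigma}_{O,0})=(1-p)\big(\bar{\sigma}_{O,1}-\bar{\sigma}_{O,0}\big).
\]
By absolute homogeneity of the trace norm, with $1-p\ge0$, we get $D(\mathcal E_p(\bar\sigma_{O,1}),\mathcal E_p(\bar\sigma_{O,0}))=(1-p)\,D(\bar\sigma_{O,1},\bar\sigma_{O,0})$. This is exactly the first branch of Definition~\ref{def:posterior_state_leakage} for the closed loop.

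Third, we treat the degenerate branches of Definition~\ref{def:posterior_state_leakage}. These are the main obstacle, because there the leakage is not a trace distance. If $p(O,1)=0$, both sides vanish, provided we read the right-hand side as $0$, consistent with the convention. If $p(O,0)=0<p(O,1)$, the open-loop leakage is $1$ while the formula predicts $1-p$. Masking does not create a non-secret configuration, so the closed-loop leakage remains $1$. The identity therefore only holds in the regime $p(O,0)>0$ and $p(O,1)>0$. We would state the proposition under that standing assumption, which is implicit in writing $D(\bar\sigma_{O,1},\bar\sigma_{O,0})$, and remark that in the structurally non-opaque case no purely quantum masking can help; supervisory $\delta$-updates are then required.
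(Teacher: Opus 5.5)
Your argument is correct and matches the paper's proof: use the lemma's identity $\bar{\sigma}^{\pi_{\mathrm{ctrl}}}_{O,b}(\rho_0)=\mathcal E_p(\bar{\sigma}_{O,b}(\rho_0))$, cancel the common $p\,I_A/d_A$ term (this uses unit trace), and pull out the factor $1-p\ge 0$ by homogeneity of the trace norm. Your explicit restriction to $p(O,0)>0$ and $p(O,1)>0$ is a genuine clarification that the paper leaves implicit, with one correction: when $p(O,0)=0$ the right-hand side is undefined, because $\bar{\sigma}_{O,0}(\rho_0)$ is undefined, so it is not ``$1-p$'' as you wrote.
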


\begin{proof}
Let $\sigma_b \coloneqq \bar{\sigma}_{O,b}(\rho_0)$ for $b \in \{0,1\}$ denote the 
pre-masking posterior states in $\mathcal{D}(\mathcal{H}_A)$. Applying the generalized 
local channel $\mathcal E_p$ to both states, the maximally mixed noise term $p(I_A/d_A)$ 
cancels out exactly, yielding
\[
\mathcal E_p(\sigma_1) - \mathcal E_p(\sigma_0) = (1-p)(\sigma_1 - \sigma_0).
\]
Consequently, by the linearity of the trace operator, the trace distance scales proportionally:
\[
L_{\rho_0}^{\pi_{\mathrm{ctrl}}}(O)
= D\big(\mathcal E_p(\sigma_1), \mathcal E_p(\sigma_0)\big)
= (1-p)\,D(\sigma_1, \sigma_0),
\]
which establishes the general scaling law for any finite-dimensional attacker subsystem.
\end{proof}

\begin{example}[Analytic Evaluation and Threshold Enforcement]
To illustrate the operational impact of Proposition~\ref{prop:generalized_depolarization}, 
consider a scenario where the attacker's pre-masking posterior states evaluated at $O$ 
happen to reside within a single-qubit subspace as the canonical non-orthogonal states 
$\sigma_1 = \ket{+}\!\bra{+}$ and $\sigma_0 = \ket{0}\!\bra{0}$. The exact pre-masking 
trace distance is computed as $D(\sigma_1, \sigma_0) = \sqrt{1 - |\langle + | 0 \rangle|^2} 
= 1/\sqrt{2}$. 
Applying the scaling law, the closed-loop leakage under generalized depolarization 
is explicitly bounded and evaluated as
\[
L_{\rho_0}^{\pi_{\mathrm{ctrl}}}(O) = \frac{1-p}{\sqrt 2}.
\]
Crucially, this algebraic bound directly enables the analytic synthesis of 
an $\epsilon$-feasible enforcement policy (Definition~\ref{def:epsilon_feasible_enforcement}). 
Given a prescribed security threshold $\epsilon > 0$, the supervisor can mathematically 
guarantee compliance by selecting a minimal masking strength $p^\star$ that satisfies
\[
\frac{1-p^\star}{\sqrt 2} \le \epsilon \implies p^\star = \max\big(0,\, 1 - \sqrt{2}\epsilon\big).
\]
For instance, enforcing a rigorous leakage threshold of $\epsilon = 0.1$ analytically 
mandates a minimal depolarization intervention of $p^\star \approx 0.8586$. This 
demonstrates how the scaling law facilitates the direct translation of quantum 
distinguishability not just into theoretical bounds, but into precise, actionable 
control parameters for automated opacity enforcement.
\end{example}

\subsection{Counterexample-Guided Policy Improvement}

The exact verification engine naturally serves as the diagnostic subroutine within a 
policy-improvement loop. Starting from an empty enforcement policy, we iteratively 
verify the current closed-loop system and extract an observation pomset $O^\star$ 
causing an $\epsilon$-opacity violation, if one exists. Crucially, the nature of 
the opacity violation dictates the valid candidate policy updates, requiring a 
formal mechanism for local policy synthesis.

\subsubsection*{Local Policy Synthesis via Causal Hitting Sets}
Let $T_c(C) = \{t \in T_c \mid \exists e \in C \colon t_e = t\}$ denote the set of controllable 
transitions whose occurrences are present as events within a finite configuration $C$. 
The synthesis of local policy updates relies on the causal structure of the 
configurations generating the violating pomset $O^\star$:

\begin{itemize}[leftmargin=*]
    \item \textbf{Strict Flow Restriction ($\delta$-updates):} If $O^\star$ constitutes 
    a structural current-state opacity violation ($p(O^\star, 0) = 0$), it is exclusively 
    generated by the set of secret configurations $\mathcal{C}_1(O^\star)$. To eliminate 
    this violation, the supervisor must break every such causal chain. We formally define 
    $\mathcal{H}_{\min}(O^\star)$ as the family of all minimal hitting sets over the 
    collection $\{T_c(C) \mid C \in \mathcal{C}_1(O^\star)\}$. A valid local flow-control 
    update ($\Delta\delta$) is drawn strictly from $\mathcal{H}_{\min}(O^\star)$ to maximize 
    system permissiveness.
    
    \item \textbf{Quantum State Obfuscation ($\mu$-updates):} If the violation is purely 
    quantitative quantum leakage ($p(O^\star, 0) > 0$), the enforcement may leverage 
    invisible masking. A masking update injects a transition $t_m \in T_m$ uniformly across 
    $\mathcal{C}(O^\star)$, applying a local noise channel without altering the underlying 
    observation pomsets (i.e., preserving the structural visibility semantics).
    
    \item \textbf{Partial Flow Restriction (Quantitative $\delta$-updates):} As an alternative 
    resolution for quantitative quantum leakage, partial flow restriction can be applied using 
    any non-empty subset $\Delta\delta \subseteq \bigcup_{C \in \mathcal{C}_1(O^\star)} T_c(C)$ 
    to prune specific configuration subsets. This purposefully alters the convex combination of 
    the mixed states within the aggregated posterior, thereby reducing the trace distance without 
    injecting quantum noise.
\end{itemize}

When available, the reduced posterior ZX-certificates may be inspected as explanatory artifacts 
to heuristically guide the selection of these candidate updates. 
The integration of the exact verification subroutine and the aforementioned local synthesis rules 
is formally consolidated in Algorithm~\ref{alg:counterexample_guided_enforcement}. 
This iterative procedure systematically identifies violating pomsets, synthesizes candidate updates, 
filters them for admissibility against $\mathcal A_{\mathrm{spec}}$, and applies the optimal 
valid update until a certified closed-loop policy is achieved.

\begin{algorithm}[t]
\caption{Counterexample-Guided Policy Improvement}
\label{alg:counterexample_guided_enforcement}
\begin{algorithmic}[1]
\Require Controlled architecture $\Qsys^{\mathrm{ctrl}}$, leakage threshold $\epsilon \in [0,1)$, and iteration bound $K$
\Ensure An admissible $\epsilon$-feasible policy, if one is found within $K$ iterations
\State Initialize $\pi_{\mathrm{ctrl}}^{(0)} \gets (\emptyset, \emptyset)$ \Comment{Empty policy with no disabled/masking transitions}
\For{$k=0,\dots,K-1$}
    \State Run the exact verification engine on $\Qsys \parallel \pi_{\mathrm{ctrl}}^{(k)}$
    \If{$\sup_O L_{\rho_0}^{\pi_{\mathrm{ctrl}}^{(k)}}(O) \le \epsilon$}
        \State \Return $\pi_{\mathrm{ctrl}}^{(k)}$
    \EndIf
    
    \State Select an observation pomset $O^\star \in \Obs(\Qsys \parallel \pi_{\mathrm{ctrl}}^{(k)})$ such that $L_{\rho_0}^{\pi_{\mathrm{ctrl}}^{(k)}}(O^\star) > \epsilon$
    
    \State \textbf{Phase 1: Rigorous Local Policy Synthesis}
    \If{$p(O^\star, 0) = 0$} \Comment{Resolve structural current-state opacity violation}
        \State $\Pi_{\mathrm{cand}} \gets \big\{ (\Delta\delta, \emptyset) \;\big|\; \Delta\delta \in \mathcal{H}_{\min}(O^\star) \big\}$
    \Else \Comment{Resolve quantitative quantum leakage violation}
        \State $T_c(O^\star) \gets \bigcup_{C \in \mathcal{C}_1(O^\star)} T_c(C)$
        \State $\Pi_{\mathrm{cand}} \gets \big\{ (\emptyset, \{t_m\}) \;\big|\; t_m \in T_m \big\} \cup \big\{ (\Delta\delta, \emptyset) \;\big|\; \Delta\delta \subseteq T_c(O^\star) \setminus \{\emptyset\} \big\}$
    \EndIf
    
    \State \textbf{Phase 2: Admissibility Filtering}
    \State $\Pi_{\mathrm{adm}} \gets \big\{ \Delta\pi \in \Pi_{\mathrm{cand}} \;\big|\; \Qsys \parallel (\pi_{\mathrm{ctrl}}^{(k)} \cup \Delta\pi) \models \mathcal A_{\mathrm{spec}} \big\}$
    \If{$\Pi_{\mathrm{adm}} = \emptyset$}
        \State \Return \textbf{Failure}
    \EndIf
    
    \State \textbf{Phase 3: Cost-Optimal Policy Update}
    \State $\Delta\pi^\star \gets \arg\min_{\Delta\pi \in \Pi_{\mathrm{adm}}} \mathcal C_{\mathrm{cost}}(\pi_{\mathrm{ctrl}}^{(k)} \cup \Delta\pi)$
    \State $\pi_{\mathrm{ctrl}}^{(k+1)} \gets \pi_{\mathrm{ctrl}}^{(k)} \cup \Delta\pi^\star$
\EndFor
\State \Return \textbf{Failure}
\end{algorithmic}
\end{algorithm}

\begin{theorem}[Partial Correctness]
\label{thm:certified_enforcement}
If Algorithm~\ref{alg:counterexample_guided_enforcement} returns a policy
$\pi_{\mathrm{ctrl}}$, then $\pi_{\mathrm{ctrl}}$ is guaranteed to be admissible and
$\epsilon$-feasible.
\end{theorem}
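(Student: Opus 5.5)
The plan is to argue by inspecting the only exit points of Algorithm~\ref{alg:counterexample_guided_enforcement} that return a policy rather than \textbf{Failure}. By the pseudocode, a policy is returned only at the test ``$\sup_O L_{\rho_0}^{\pi_{\mathrm{ctrl}}^{(k)}}(O)\le\epsilon$'' for some iteration index $k\in\{0,\dots,K-1\}$. So it suffices to prove two invariants about $\pi_{\mathrm{ctrl}}^{(k)}$ at the moment it is returned. The first is that it is admissible. The second is that the test, as executed by the exact verification engine, faithfully certifies $\epsilon$-feasibility in the sense of Definition~\ref{def:epsilon_feasible_enforcement}.

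For admissibility I will use induction on $k$, with the invariant $\Qsys\parallel\pi_{\mathrm{ctrl}}^{(k)}\models\mathcal A_{\mathrm{spec}}$.
\begin{itemize}
\item \emph{Inductive step.} The step is immediate from Phase~2 and Phase~3. Here $\Delta\pi^\star$ is chosen from $\Pi_{\mathrm{adm}}$. By construction, every element $\Delta\pi$ of $\Pi_{\mathrm{adm}}$ satisfies $\Qsys\parallel(\pi_{\mathrm{ctrl}}^{(k)}\cup\Delta\pi)\models\mathcal A_{\mathrm{spec}}$, and $\pi_{\mathrm{ctrl}}^{(k+1)}=\pi_{\mathrm{ctrl}}^{(k)}\cup\Delta\pi^\star$.
\item \emph{Base case.} The base case $\pi_{\mathrm{ctrl}}^{(0)}=(\emptyset,\emptyset)$ is the delicate point, since nothing in the loop checks it. I would handle it by making explicit a standing assumption, implicit in the controlled architecture, that the open-loop system (no disabling, no masking) satisfies $\mathcal A_{\mathrm{spec}}$. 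Alternatively, I would note that a preliminary admissibility check is harmless to add before the loop.
\end{itemize}
I expect this to be the main obstacle: without such an assumption, returning at $k=0$ could output an inadmissible policy. The proof must state clearly which convention is adopted. I would first try to read $\mathcal A_{\mathrm{spec}}$ as a specification the plant itself satisfies, with control only required to preserve it.

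For $\epsilon$-feasibility, I will invoke the exactness of the verification engine. The relevant results are Theorem~\ref{thm:targeted_exploration_sound_complete} together with the decidability and computability discussion of Section~\ref{subsec:targeted_exactness_complexity}. They say that the engine run on the closed-loop system $\Qsys\parallel\pi_{\mathrm{ctrl}}^{(k)}$ returns aggregates $\Omega_{O,b}$ coinciding with $\Omega_{O,b}(\rho_0)$, and hence the exact values $L_{\rho_0}^{\pi_{\mathrm{ctrl}}^{(k)}}(O)$. The closed loop is again a sub-unfolding of the extended net over $\overline T$. Its masking branches are stabilizer CPTP maps labelled $\tau$, so it lies in the scope of those results. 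The remaining care concerns the target family: the supremum in the test must range over $\Obs(\Qsys\parallel\pi_{\mathrm{ctrl}}^{(k)})$. I would therefore either assume that $\mathcal O_{\mathrm{tar}}$ covers all reachable closed-loop pomsets, or read the conclusion as $\epsilon$-feasibility relative to $\mathcal O_{\mathrm{tar}}$, consistent with Section~\ref{sec:algorithm}. Combining the two invariants, the returned policy is admissible and satisfies $\sup_O L^{\pi_{\mathrm{ctrl}}}_{\rho_0}(O)\le\epsilon$.

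Notably, no appeal to Lemma~\ref{lem:posterior_contractivity_masking} or to the hitting-set construction is needed. Partial correctness follows from the exit guard alone, and those results matter only for progress and termination, which are not claimed here.
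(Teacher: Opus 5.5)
Your argument is the same as the paper's. A policy can only be returned at Line~5, whose guard is trusted because the verification engine is exact, and admissibility is inherited from the Phase~2 filter at Line~16. Your caveats are well founded and go beyond the paper's proof. That proof claims every returned policy has passed Line~16, which is false for a return at $k=0$, where $(\emptyset,\emptyset)$ is never filtered. It also treats the engine as exact over all reachable closed-loop pomsets, whereas Theorem~\ref{thm:targeted_exploration_sound_complete} covers only a finite $\mathcal O_{\mathrm{tar}}$ and presupposes divergence-freedom, which the $\tau$-labelled masking events could break and which you assert rather than justify. Your explicit standing assumptions are precisely what the statement needs to hold.
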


\begin{proof}
By construction, any returned policy must pass the strict admissibility filter (Line 16), 
ensuring $\Qsys \parallel \pi_{\mathrm{ctrl}} \models \mathcal A_{\mathrm{spec}}$. 
Furthermore, the algorithm terminates with a successfully synthesized policy (Line 5) only when 
the exact verification engine certifies that both structural current-state opacity is satisfied 
and the supremum of the closed-loop quantum leakage across all reachable pomsets does not 
exceed the threshold $\epsilon$. Therefore, the returned policy is exactly admissible 
and $\epsilon$-feasible.
\end{proof}

\begin{remark}[Scope of the Enforcement Layer]
The enforcement layer is introduced here as an auxiliary framework to the primary 
verification theory. Its role is to provide a mathematically sound interface for 
counterexample-guided leakage reduction, explicitly formalizing policy synthesis via 
causal hitting sets and delineating strict flow restriction, partial flow restriction, 
and quantum state obfuscation. It is not intended to resolve the broader questions of globally 
optimal synthesis, completeness, or the formulation of a fully automated supervisory-control theory. 
These advanced synthesis challenges remain open for future investigation.
\end{remark}


\begin{figure*}[t]
\centering
\resizebox{0.85\linewidth}{!}{
\begin{tikzpicture}[>=stealth, thick]

    \tikzstyle{place} = [circle, draw=black, minimum size=9mm, inner sep=0pt, font=\large]
    \tikzstyle{place_b} = [circle, draw=blue, fill=blue!5, minimum size=9mm, inner sep=0pt, font=\large]
    \tikzstyle{place_r} = [circle, draw=red, fill=red!10, minimum size=9mm, inner sep=0pt, font=\large]
    \tikzstyle{place_f} = [circle, draw=black, fill=gray!10, minimum size=9mm, inner sep=0pt, font=\large]
    
    \tikzstyle{qubit} = [circle, draw=blue!80, fill=blue!5, dashed, minimum size=9mm, inner sep=0pt, font=\large]
    \tikzstyle{trans_obs} = [rectangle, draw=black, fill=gray!20, minimum width=1.4cm, minimum height=0.6cm, align=center]
    \tikzstyle{trans_unobs} = [rectangle, draw=black, fill=white, minimum width=1.4cm, minimum height=0.6cm, align=center]
    \tikzstyle{trans_sys} = [rectangle, draw=black, fill=yellow!20, minimum width=1.4cm, minimum height=0.6cm, align=center]
    \tikzstyle{q_access} = [->, dashed, color=blue!70, thick]
    \tikzstyle{token} = [fill=black, circle, inner sep=0pt, minimum size=4pt]

    
    \node[place, label=above:$p_{\mathrm{bg}}$] (pbg) at (1.5, 4.5) {};
    \node[token] at (pbg) {};
    \node[trans_obs, label=above:$\mathsf{cal}$] (tcal) at (3.5, 4.5) {$t_{\mathrm{cal}}$};
    \draw[->] (pbg) to[bend left=20] (tcal);
    \draw[->] (tcal) to[bend left=20] (pbg);

    \node[place, label=above:$p_0$] (p0) at (0, 0) {};
    \node[token] at (p0) {};
    \node[trans_obs, label=below:$\mathsf{req}$] (treq) at (1.5, 0) {$t_{\mathrm{req}}$};
    \node[place, label=above:$p_1$] (p1) at (3, 0) {};
    
    \draw[->] (p0) -- (treq);
    \draw[->] (treq) -- (p1);

    \node[trans_unobs, label=above:$\tau$] (tswap) at (4.5, 1.5) {$t_{\mathrm{swap}}^{\mathrm{non-sec}}$};
    \node[trans_unobs, label=below:$\tau$] (tpur) at (4.5, -1.5) {$t_{\mathrm{pur}}^{\mathrm{sec}}$};
    \draw[->] (p1) -- (tswap.west);
    \draw[->] (p1) -- (tpur.west);

    \node[place_b, label=above:$p_2^{\mathrm{non-sec}}$] (p2n) at (6.5, 1.5) {};
    \node[trans_obs, label=above:$\mathsf{swap\_ok}$] (tokn) at (8.5, 1.5) {$t_{\mathrm{ok}}^{\mathrm{non-sec}}$};
    \node[place_b, label=above:$p_3^{\mathrm{non-sec}}$] (p3n) at (10.5, 1.5) {};
    \node[trans_obs, label=above:$\mathsf{done}$] (tdonen) at (12.5, 1.5) {$t_{\mathrm{done}}^{\mathrm{non-sec}}$};
    
    \draw[->] (tswap) -- (p2n);
    \draw[->] (p2n) -- (tokn);
    \draw[->] (tokn) -- (p3n);
    \draw[->] (p3n) -- (tdonen);

    \node[place_r, label=below:$p_2^{\mathrm{sec}}$] (p2s) at (6.5, -1.5) {};
    \node[trans_obs, label=below:$\mathsf{swap\_ok}$] (toks) at (8.5, -1.5) {$t_{\mathrm{ok}}^{\mathrm{sec}}$};
    \node[place_r, label=below:$p_3^{\mathrm{sec}}$] (p3s) at (10.5, -1.5) {};
    \node[trans_obs, label=below:$\mathsf{done}$] (tdones) at (12.5, -1.5) {$t_{\mathrm{done}}^{\mathrm{sec}}$};
    \node[trans_obs, fill=red!20, label=right:$\mathsf{fail}$] (treject) at (6.5, -3.5) {$t_{\mathrm{reject}}$};
    
    \draw[->] (tpur) -- (p2s);
    \draw[->] (p2s) -- (toks);
    \draw[->] (toks) -- (p3s);
    \draw[->] (p3s) -- (tdones);
    \draw[->] (p2s) -- (treject);

    \node[place_f, label=below right:$p_{\mathrm{finish}}$] (pfinish) at (14, 0) {};
    \node[trans_sys, label=above:$\tau$] (treset) at (14, 3.5) {$t_{\mathrm{reset}}$};
    
    \draw[->, rounded corners=5pt] (tdonen.east) -- (13.4, 1.5) -- (13.4, 0.25) -- (pfinish.150);
    \draw[->, rounded corners=5pt] (tdones.east) -- (13.4, -1.5) -- (13.4, -0.25) -- (pfinish.210);
    \draw[->, rounded corners=5pt] (treject.east) -- (14, -3.5) -- (pfinish.south);
    
    \draw[->] (pfinish.north) -- (treset.south);
    
    \draw[->, rounded corners=8pt] (treset.west) -- (-0.8, 3.5) -- (-0.8, 0) -- (p0.west);

    \draw[dotted, rounded corners, draw=black!50] (-1.2, 5.5) rectangle (16.8, -4.5);
    \node[anchor=south west, color=black!70] at (-1, 5.6) {\textbf{Industrial Safe Control Net ($P_C, T, F_C$)}};

    \node[qubit] (q1) at (2.5, -8) {$q_1$};
    \node[qubit] (q2) at (5.5, -8) {$q_2$};
    \node[qubit] (q3) at (8.5, -8) {$q_3$};
    \node[qubit] (q4) at (11.5, -8) {$q_4$};
    \node[qubit, fill=red!10, draw=red, thick] (qM) at (7, -10) {$q_M$};
    \node[anchor=west, color=red] at (8, -10) {\textbf{Diagnostic Port (Attacker Interface $A$)}};

    \draw[thick, loosely dotted, color=black!60] (q1) -- (q2) node[midway, above, font=\small] {$\Phi^+$};
    \draw[thick, loosely dotted, color=black!60] (q3) -- (q4) node[midway, above, font=\small] {$\Phi^+$};

    \draw[dotted, rounded corners, draw=blue!50] (1.5, -6.5) rectangle (16.8, -12.5);
    \node[anchor=north west, color=blue!80] at (1.7, -12.6) {\textbf{Persistent Quantum Registers ($P_Q$)}};

    \draw[q_access] (tswap.south) to[out=-90, in=90] node[pos=0.6, left, font=\scriptsize, text=blue] {$\mathrm{BSM}$} (q2.north);
    \draw[q_access] (tswap.south) to[out=-45, in=90] (q3.north);
    \draw[q_access] (tokn.south) to[out=-90, in=90] node[pos=0.6, right, font=\scriptsize, text=blue] {$\mathrm{FF}$} (q4.north);

    \draw[q_access] (tpur.south) to[out=-90, in=110] (q2.north);
    \draw[q_access] (tpur.south) to[out=-45, in=110] (q3.north);
    \draw[q_access, color=red, thick] (tpur.south) to[out=-75, in=90] node[pos=0.8, left, font=\scriptsize, text=red] {$\mathrm{CNOT}$} (qM.north);
    \draw[q_access] (toks.south) to[out=-90, in=110] node[pos=0.6, left, font=\scriptsize, text=blue] {$\mathrm{FF}$} (q4.north);

    \draw[dotted, color=orange, thick] (treset.east) -- (16.2, 3.5) -- (16.2, -11.8) -- (2.5, -11.8);
    \node[font=\scriptsize, text=orange, align=left, anchor=south west] at (14.8, 3.6) {\textbf{Hardware}\\\textbf{Initialization ($\mathcal{E}_{\mathrm{reset}}$)}};
    
    \draw[->, dotted, color=orange, thick] (11.5, -11.8) -- (q4.south);
    \draw[->, dotted, color=orange, thick] (8.5, -11.8) -- (q3.south);
    \draw[->, dotted, color=orange, thick] (7, -11.8) -- (qM.south);
    \draw[->, dotted, color=orange, thick] (5.5, -11.8) -- (q2.south);
    \draw[->, dotted, color=orange, thick] (2.5, -11.8) -- (q1.south);
    
\end{tikzpicture}
} 
\caption{Representative SPO-QPN architecture for the entanglement-swapping repeater-service case study. 
The control layer contains a dual-lane foreground service, an observable background calibration 
self-loop, a unified completion place $p_{\mathrm{finish}}$, and an unobservable reset transition 
$t_{\mathrm{reset}}$. The reset implements the global replacement channel 
$\mathcal{E}_{\mathrm{reset}}$ that restores $\rho_0$ and returns the control token to $p_0$. 
The posterior analysis in Section~\ref{sec:case_repeater} is performed at the completed-round cut 
immediately before $t_{\mathrm{reset}}$ fires, thereby exposing the residual footprint on 
the diagnostic memory qubit $q_M$.}
\label{fig:spos_qpn_architecture}
\end{figure*}
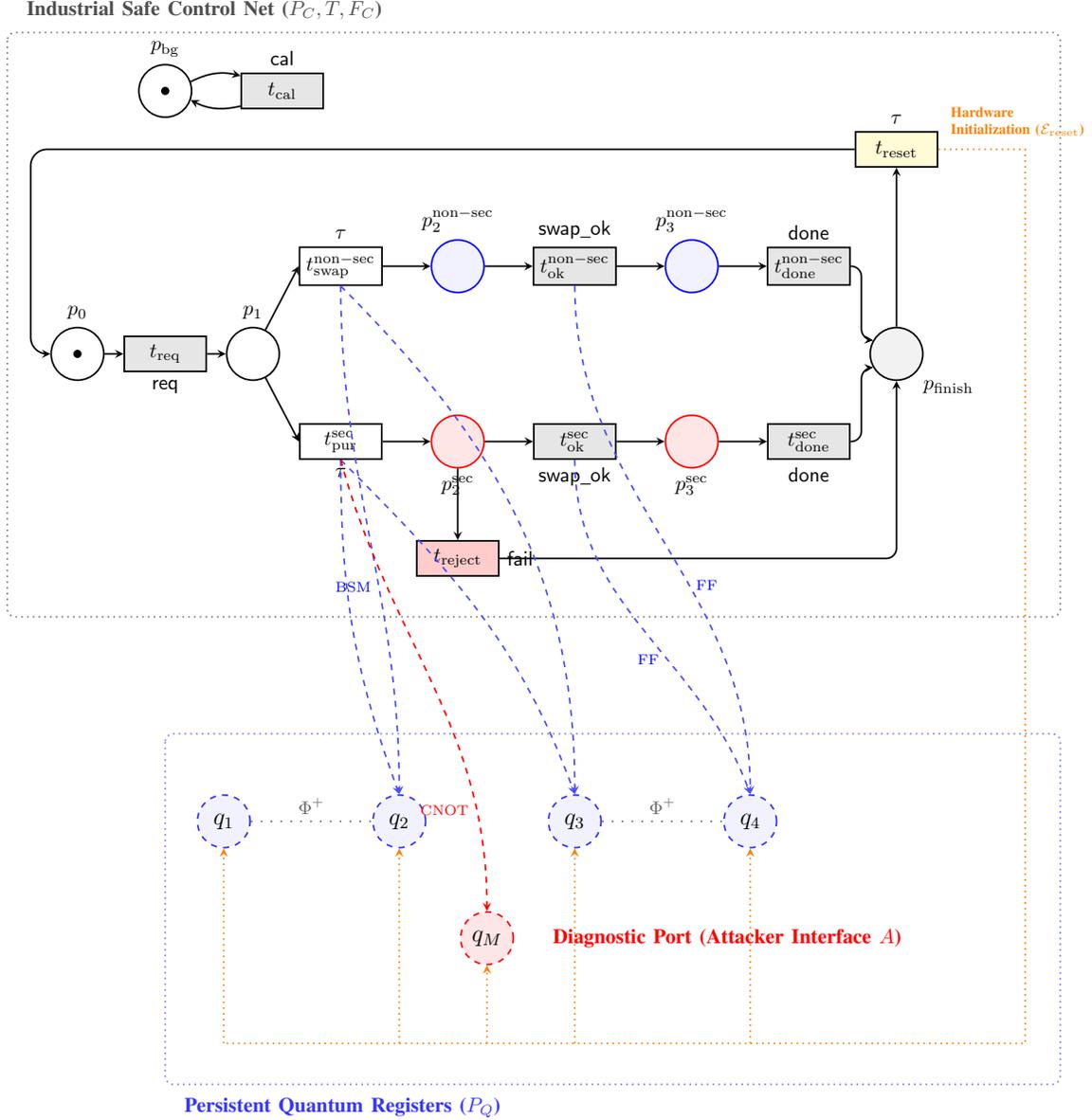

\section{Case Study: Entanglement-Swapping Service with Hidden Memory-Assisted Purification}
\label{sec:case_repeater}

Quantum repeater services provide a particularly stringent setting for opacity analysis: 
the service interface exposed to users is classical, while the physical resources that 
may retain hidden information are quantum and long-lived. As a result, 
a controller decision can remain unobservable at the level of completed service logs and 
still leave a distinguishable footprint on persistent memory. The repeater-service instance 
in Fig.~\ref{fig:spos_qpn_architecture} is designed to isolate exactly this mechanism within 
a cyclic architecture that also includes an observable abort lanes, 
a concurrent calibration loop, and a hardware-reset transition. 
This combination makes the instance suitable for examining, within one coherent SPO-QPN model, 
how exact posterior leakage analysis, masking-based enforcement, 
state-space reduction via true-concurrency semantics, 
and branch-local diagrammatic certification integrate together. The quantitative analysis below is 
performed on completed successful service rounds at the cut immediately before $t_{\mathrm{reset}}$, 
and the quantum operations are restricted to the stabilizer fragment so that all reported posteriors 
and certificates remain exact.

\subsection{Scenario and Secrecy Question}
\label{subsec:repeater_scenario}

\subsubsection{Operational Overview}
The case study considers a single foreground entanglement-swapping request while 
an observable background-calibration self-loop may fire independently of the foreground chain. 
The public service log exposes the observable alphabet
\[
\Sigma_{\mathrm{o}} \;=\; \{\mathsf{req},\;\mathsf{swap\_ok},\;\mathsf{done},\;\mathsf{cal},\;\mathsf{fail}\},
\]
where $\mathsf{cal}$ is emitted by repeated firings of the background-calibration self-loop 
and $\mathsf{fail}$ is emitted by the observable abort transition $t_{\mathrm{reject}}$ 
on the secret branch. 
Internally, the controller may route the request through a hidden memory-assisted 
purification branch, instantiated here by a hidden CNOT from $q_2$ to the recycled memory 
before the Bell-state measurement. While an aborting run trivially exposes the secret 
via the classical log, the core secrecy question addressed here is whether an attacker observing 
a seemingly successful public service log can still infer the execution of 
this hidden branch by reading the recycled memory qubit at the completed-round cut.

\subsubsection{Quantum registers and attacker interface}
The repeater system uses five persistent qubits
\[
q_1,\;q_2,\;q_3,\;q_4,\;q_M,
\]
where $(q_1,q_2)$ and $(q_3,q_4)$ form two Bell pairs and $q_M$ is the recycled memory qubit. 
The initial joint state is
\begin{equation}
\rho_0
=
\lvert\Phi^+\rangle\!\langle\Phi^+\rvert_{12}
\otimes
\lvert\Phi^+\rangle\!\langle\Phi^+\rvert_{34}
\otimes
\lvert 0\rangle\!\langle 0\rvert_M,
\label{eq:repeater_initial_state}
\end{equation}
with $\lvert\Phi^+\rangle=(\lvert00\rangle+\lvert11\rangle)/\sqrt{2}$. 
To formalize the threat model, the attacker's localized access is restricted strictly to 
this recycled memory, defining the attacker interface as $A=\{q_M\}$.

As formalized in Section~\ref{sec:opacity}, the evaluation of opacity in this framework relies on 
the configuration-based secret predicate $s(C)$. In this case study, an execution is designated 
as secret upon entering the place $p_2^{\mathrm{sec}}$. Therefore, the predicate $s(C)$ is 
instantiated to evaluate whether the configuration includes the corresponding hidden purification event:
\[
s(C)=1 \iff \exists e \in C : t_e = t_{\mathrm{pur}}^{\mathrm{sec}}.
\]
Consequently, the secret condition $b = s(C) = 1$ denotes configurations that 
traverse the hidden secret branch, and $b=0$ denotes those that use the non-secret lane.

\subsubsection{Formal safe control net}
The underlying classical net in Fig.~\ref{fig:spos_qpn_architecture} comprises places
\[
P_C \;=\; \{p_0,p_1,p_2^{\mathrm{non\text{-}sec}},p_3^{\mathrm{non\text{-}sec}},p_2^{\mathrm{sec}},p_3^{\mathrm{sec}},p_{\mathrm{finish}},p_{\mathrm{bg}}\},
\]
transitions
\[
\begin{aligned}
T=\{&t_{\mathrm{req}},t_{\mathrm{swap}}^{\mathrm{non\text{-}sec}},t_{\mathrm{ok}}^{\mathrm{non\text{-}sec}},t_{\mathrm{done}}^{\mathrm{non\text{-}sec}},t_{\mathrm{pur}}^{\mathrm{sec}},\\
&t_{\mathrm{ok}}^{\mathrm{sec}},t_{\mathrm{done}}^{\mathrm{sec}},t_{\mathrm{reject}},t_{\mathrm{cal}},t_{\mathrm{reset}}\},
\end{aligned}
\]
and the initial marking, formally represented as a subset of places for the safe net:
\begin{equation}
M_0 \;=\; \{p_0, p_{\mathrm{bg}}\}.
\label{eq:repeater_initial_marking}
\end{equation}
The single token on $p_{\mathrm{bg}}$ realizes the calibration self-loop 
$p_{\mathrm{bg}}\leftrightarrow t_{\mathrm{cal}}$, 
so that calibration may recur without constraining the foreground service. 
Physically, such a concurrent background loop models continuous asynchronous 
classical control traffic, such as routine hardware monitoring, heartbeat signals, 
or classical link calibration, running alongside the foreground quantum service.
The foreground flow is $p_0\xrightarrow{t_{\mathrm{req}}}p_1$, 
followed by the branching choice between $t_{\mathrm{swap}}^{\mathrm{non\text{-}sec}}$ 
and $t_{\mathrm{pur}}^{\mathrm{sec}}$ (manifesting as a structural conflict at $p_1$, which induces an immediate conflict between their corresponding events in the unfolding). 
The non-secret successful lane is
\[
p_2^{\mathrm{non\text{-}sec}}\xrightarrow{t_{\mathrm{ok}}^{\mathrm{non\text{-}sec}}}p_3^{\mathrm{non\text{-}sec}}\xrightarrow{t_{\mathrm{done}}^{\mathrm{non\text{-}sec}}}p_{\mathrm{finish}},
\]
whereas the secret successful lane is
\[
p_2^{\mathrm{sec}}\xrightarrow{t_{\mathrm{ok}}^{\mathrm{sec}}}p_3^{\mathrm{sec}}\xrightarrow{t_{\mathrm{done}}^{\mathrm{sec}}}p_{\mathrm{finish}}.
\]
In addition, the arc $p_2^{\mathrm{sec}}\xrightarrow{t_{\mathrm{reject}}}p_{\mathrm{finish}}$ 
represents an observable abort on the secret lane. Observation labels are as given in 
$\Sigma_{\mathrm{o}}$; the branch-selection transitions $t_{\mathrm{swap}}^{\mathrm{non\text{-}sec}}$, 
$t_{\mathrm{pur}}^{\mathrm{sec}}$, and the reset transition $t_{\mathrm{reset}}$ are unobservable.

\subsubsection{Hardware-initialization transition and evaluation cut}
Physical repeater hardware consumes entanglement upon each service round. To model this behavior, 
Fig.~\ref{fig:spos_qpn_architecture} includes a reset transition $t_{\mathrm{reset}}$, 
unobservable ($\tau$), which fires out of $p_{\mathrm{finish}}$ and returns the token to $p_0$. 
Its quantum access is global,
\[
\Acc(t_{\mathrm{reset}}) \;=\; \{q_1,q_2,q_3,q_4,q_M\},
\]
and its quantum effect is the replacement channel
\begin{equation}
\mathcal{E}_{\mathrm{reset}}(\rho) \;=\; \sum_{k}|\psi_0\rangle\!\langle k|\rho |k\rangle\!\langle\psi_0|
\;=\; \rho_0\,\mathrm{Tr}(\rho),
\label{eq:reset_channel}
\end{equation}
where $|\psi_0\rangle=|\Phi^+\rangle_{12}\otimes|\Phi^+\rangle_{34}\otimes|0\rangle_M$ 
and $\{|k\rangle\}$ is any orthonormal basis of the joint Hilbert space. 
The quantitative posterior analysis in this section is performed at the cut immediately 
before $t_{\mathrm{reset}}$ fires, when the control token resides at $p_{\mathrm{finish}}$. 
This is the earliest cut at which a service round has completed and 
the diagnostic readout of $q_M$ is operationally well-defined. 
Although both successful lanes converge at this identical completion place, 
their respective executions remain rigorously distinguished by 
the configuration-based predicate $s(C)$.
The inclusion of $t_{\mathrm{reset}}$ yields a cyclic service model; 
however, because the analysis is strictly bounded per completed round, 
the required finite unfolding prefix and the branch-local stabilizer calculations 
remain completely unaffected by this cyclic extension.

\subsubsection{Completed-round observation pomset}
Let $O_{\mathrm{fg}}$ denote the foreground observation pomset induced by
\[
\mathsf{req}\;\prec\;\mathsf{swap\_ok}\;\prec\;\mathsf{done}.
\]
This pomset corresponds to a completed successful service round. Aborting runs remain part of 
the full SPO-QPN through the observable event $\mathsf{fail}$, but they belong to observation pomsets 
distinct from $O_{\mathrm{fg}}$ and therefore do not enter the posterior aggregates conditioned on 
$O_{\mathrm{fg}}$.

\subsection{Branch Semantics and Observation Aggregation}
\label{subsec:repeater_branches}

\subsubsection{Non-secret lane ($b=0$)}
The unobservable branch-selection transition $t_{\mathrm{swap}}^{\mathrm{non\text{-}sec}}$ 
has the quantum access $\Acc(t_{\mathrm{swap}}^{\mathrm{non\text{-}sec}})=\{q_2,q_3\}$ and performs 
a Bell-state measurement (BSM). 
Physically, this operation applies the unitary transform
\begin{equation}
U_{\mathrm{BSM}}^{(q_2,q_3)} \;=\; (H_{q_2}\otimes I_{q_3})\,\mathrm{CNOT}_{q_2\rightarrow q_3}
\label{eq:bsm_transform}
\end{equation}
(where $H$ and $\mathrm{CNOT}$ denote the Hadamard and controlled-NOT gates, respectively), 
followed by a projective measurement of $(q_2,q_3)$ in the computational basis. 
The classical measurement outcomes 
$(m_2,m_3)\in\{0,1\}^2$ directly constitute the formal branch 
$r \in R_{t_{\mathrm{swap}}^{\mathrm{non\text{-}sec}}}$.
The downstream transition $t_{\mathrm{ok}}^{\mathrm{non\text{-}sec}}$ 
has $\Acc(t_{\mathrm{ok}}^{\mathrm{non\text{-}sec}})=\{q_4\}$ and performs 
the classically conditioned feed-forward (FF) Pauli correction
\begin{equation}
X_{q_4}^{m_3} Z_{q_4}^{m_2}.
\label{eq:feedforward}
\end{equation}

\subsubsection{Secret lane ($b=1$)}
As shown in Fig.~\ref{fig:spos_qpn_architecture}, the unobservable transition 
$t_{\mathrm{pur}}^{\mathrm{sec}}$ has the enlarged quantum access 
$\Acc(t_{\mathrm{pur}}^{\mathrm{sec}})=\{q_2,q_3,q_M\}$; 
the CNOT arc onto $q_M$ represents the hidden coupling, while the unlabeled arcs onto $q_2,q_3$ 
indicate the requisite access for the BSM, which is merged into the same transition on this lane. 
Formally, $t_{\mathrm{pur}}^{\mathrm{sec}}$ applies the compound unitary
\begin{equation}
U_{\mathrm{pur}+\mathrm{BSM}}
\;=\;
U_{\mathrm{BSM}}^{(q_2,q_3)} \, \mathrm{CNOT}_{q_2\to q_M},
\label{eq:pur_bsm_compound}
\end{equation}
followed by the same projective measurement of $(q_2,q_3)$. 
The resulting outcomes $(m_2,m_3)$ similarly define 
the formal branch $r \in R_{t_{\mathrm{pur}}^{\mathrm{sec}}}$.
The feed-forward correction $X_{q_4}^{m_3}Z_{q_4}^{m_2}$ is then applied by 
$t_{\mathrm{ok}}^{\mathrm{sec}}$, with $\Acc(t_{\mathrm{ok}}^{\mathrm{sec}})=\{q_4\}$. 
For executions reaching the completed-round cut, the hidden coupling $\mathrm{CNOT}_{q_2\to q_M}$ 
constitutes the sole quantum-operational difference between the two lanes; the BSM and feed-forward 
protocols are otherwise identical.

\subsubsection{Observation aggregation}
The branch indices $r=(m_2,m_3)$ remain internal to the classical controller; thus, each of 
the four branches emits the identical observable label $\mathsf{swap\_ok}$. 
Under the event-structure semantics, the four measurement branches on each lane induce 
the identical public observation pomset $O_{\mathrm{fg}}$. When computing the attacker's 
posterior, their unnormalized conditional density matrices are therefore summed prior to 
normalization. For completed successful rounds, the only quantum difference between $b=0$ and $b=1$ 
is the hidden coupling left on $q_M$ after this aggregation; the observable abort transition 
$t_{\mathrm{reject}}$ belongs to observation classes distinct from $O_{\mathrm{fg}}$.

\subsection{Exact Posterior Leakage in the Base Instance}
\label{subsec:repeater_exact_results}

\subsubsection{Non-secret posterior}
For $b=0$, the recycled memory is never coupled to the service qubits, so the marginal state on 
$q_M$ is unchanged from \eqref{eq:repeater_initial_state}:
\begin{equation}
\bar{\sigma}_{O_{\mathrm{fg}},0}(\rho_0) \;=\; \lvert 0\rangle\!\langle 0\rvert_M.
\label{eq:posterior_nonsecret}
\end{equation}

\subsubsection{Secret posterior}
For $b=1$, applying $\mathrm{CNOT}_{q_2\to q_M}$ (the hidden coupling component of 
\eqref{eq:pur_bsm_compound}) to the Bell-pair half on $q_2$ and to $|0\rangle_M$ yields 
the Greenberger-Horne-Zeilinger (GHZ) state $(|000\rangle+|111\rangle)/\sqrt{2}$ on 
the subspace $(q_1,q_2,q_M)$. Although $q_M$ is now entangled with $q_1$ and $q_2$, 
the subsequent BSM on $(q_2,q_3)$ and feed-forward on $q_4$ act strictly on subsystems disjoint 
from $q_M$. Since the four branch outcomes of the BSM are aggregated to form the public 
observation (constituting a non-selective quantum operation), these downstream local operations 
leave the reduced density matrix of $q_M$ strictly invariant. Consequently, tracing out the full 
system $(q_1,q_2,q_3,q_4)$ at the evaluation cut is mathematically equivalent to tracing out 
$(q_1,q_2)$ immediately after the hidden $\mathrm{CNOT}_{q_2\to q_M}$ coupling:
\begin{equation}
\bar{\sigma}_{O_{\mathrm{fg}},1}(\rho_0) \;=\; \mathrm{Tr}_{q_1,q_2}\left[ \frac{(|000\rangle+|111\rangle)(\langle 000|+\langle 111|)}{2} \right] \;=\; \frac{I_2}{2}.
\label{eq:posterior_secret}
\end{equation}

\subsubsection{Exact trace-distance leakage}
By Definition~\ref{def:posterior_state_leakage} (trace-distance between conditional posteriors),
\begin{equation}
\begin{aligned}
L_{\rho_0}(O_{\mathrm{fg}}) &= \tfrac{1}{2}\bigl\lVert \bar{\sigma}_{O_{\mathrm{fg}},1}(\rho_0) - \bar{\sigma}_{O_{\mathrm{fg}},0}(\rho_0) \bigr\rVert_1 \\
&= \tfrac{1}{2}\bigl\lVert \tfrac{I_2}{2} - \lvert 0\rangle\!\langle 0\rvert_M \bigr\rVert_1 \;=\; 0.5.
\end{aligned}
\label{eq:base_leakage}
\end{equation}
The base instance therefore violates posterior-state opacity with an exact leakage value of 
$0.5$, as illustrated by Fig.~\ref{fig:repeater_posteriors}.

\begin{figure}[t]
    \centering
    \includegraphics[width=0.88\linewidth]{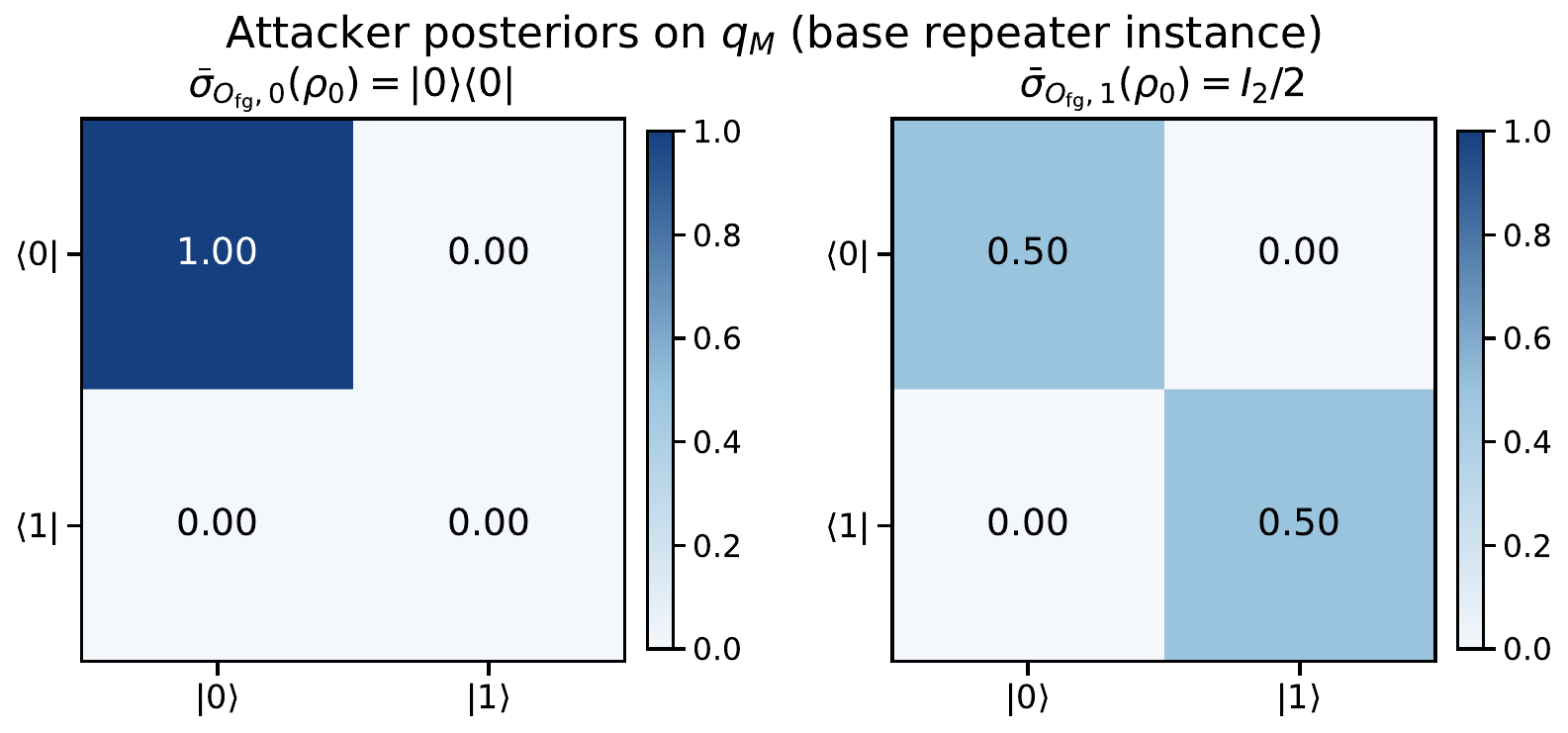}
    \caption{Exact secret-conditioned attacker posteriors on the recycled memory qubit ($q_M$) 
    for the base repeater-service instance. The non-secret execution yields the pure state 
    $\bar{\sigma}_{O_{\mathrm{fg}},0}(\rho_0)=\lvert 0\rangle\!\langle 0\rvert_M$, 
    while the secret execution yields the maximally mixed state 
    $\bar{\sigma}_{O_{\mathrm{fg}},1}(\rho_0)=I_2/2$. Since both posteriors are conditioned on 
    the identical public observation pomset $O_{\mathrm{fg}}$, the trace distance between them 
    strictly quantifies the posterior-state leakage.}
    \label{fig:repeater_posteriors}
\end{figure}

\subsection{Closed-Loop Enforcement via Invisible Depolarizing Masking ($\mu$-update)}
\label{subsec:repeater_repair}

To rigorously mitigate the identified violation, we synthesize a local policy update utilizing 
the quantum state obfuscation mechanism ($\mu$-update) formalized in Section~\ref{sec:enforcement}. 
Specifically, the supervisor injects an invisible masking transition acting strictly 
on the attacker interface $q_M$ after service completion, implementing 
the generalized single-qubit depolarizing channel 
(as defined in Proposition~\ref{prop:generalized_depolarization}):
\begin{equation}
\mathcal{E}_p(\omega)=(1-p)\omega + p\frac{I_2}{2}, \qquad p\in[0,1].
\label{eq:depolarizing_channel}
\end{equation}
Since $\mathcal{E}_p(I_2/2)=I_2/2$, the secret posterior remains mathematically invariant under 
this channel ($\bar{\sigma}^{\pi_{\mathrm{ctrl}}}_{O_{\mathrm{fg}},1}(\rho_0) = I_2/2$), 
while the modified non-secret posterior updates to:
\[
\bar{\sigma}^{\pi_{\mathrm{ctrl}}}_{O_{\mathrm{fg}},0}(\rho_0) = \mathcal{E}_p(\lvert 0\rangle\!\langle 0\rvert_M) = (1-p)\lvert 0\rangle\!\langle 0\rvert_M + p\frac{I_2}{2}.
\]
Consequently, by the contractivity of trace distance, the closed-loop leakage algebraically scales to:
\begin{equation}
L_{\rho_0}^{\pi_{\mathrm{ctrl}}}(O_{\mathrm{fg}})
=
\frac{1}{2}\left\lVert \frac{I_2}{2} - \bar{\sigma}^{\pi_{\mathrm{ctrl}}}_{O_{\mathrm{fg}},0}(\rho_0) \right\rVert_1
=
\frac{1-p}{2}.
\label{eq:repeater_repaired_leakage}
\end{equation}
Given a strict $\epsilon$-feasible enforcement threshold $L_{\rho_0}^{\pi_{\mathrm{ctrl}}}(O_{\mathrm{fg}}) \leq \epsilon$, the minimal masking strength analytically required by the supervisor is:
\begin{equation}
 p^\star = 1-2\epsilon.
 \label{eq:pstar_repeater}
\end{equation}
Assuming a stringent security requirement of $\epsilon=0.05$ for the base setting, 
the synthesized policy mandates $p^\star=0.9$; the resulting analytic enforcement trajectory 
is plotted in Fig.~\ref{fig:repeater_repair_curve}.

\begin{figure}[t]
    \centering
    \includegraphics[width=0.78\linewidth]{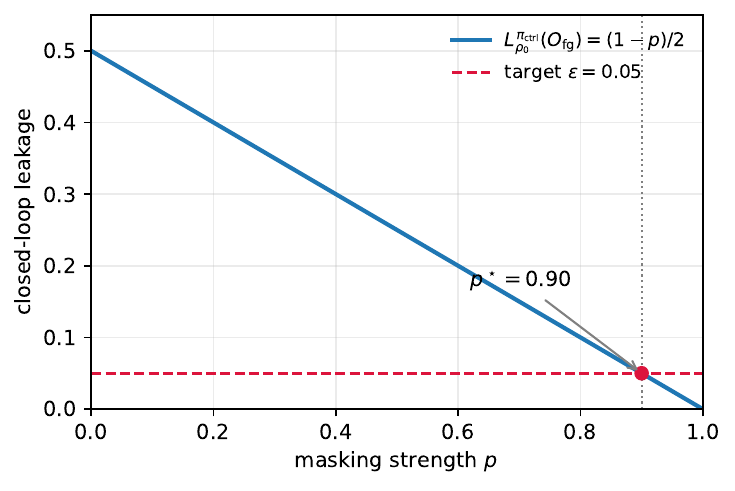}
    \caption{Analytic policy-enforcement curve for the base instance under 
    an invisible depolarizing $\mu$-update. The closed-loop leakage decreases linearly following 
    $L_{\rho_0}^{\pi_{\mathrm{ctrl}}}(O_{\mathrm{fg}})=(1-p)/2$. 
    The horizontal dashed line denotes the target security threshold $\epsilon=0.05$, 
    intercepting the minimal feasible intervention parameter at $p^\star=0.9$.}
    \label{fig:repeater_repair_curve}
\end{figure}

\subsection{Concurrency Quotient on a Bounded Service Family}
\label{subsec:repeater_scaling}

To expose the effect of the concurrency quotient in isolation, we introduce a bounded family of 
execution scenarios $\{E_m\}_{m\geq 0}$ indexed by the number $m$ of background-calibration firings 
during a single successful foreground service round. Formally, $E_m$ restricts attention to 
configurations in which $t_{\mathrm{cal}}$ fires exactly $m$ times before the first 
$t_{\mathrm{reset}}$ and the observable target is the successful completed-round pomset 
$O_{\mathrm{fg}}^{(m)}$. The full model still contains the abort label $\mathsf{fail}$, but those 
executions belong to observation classes distinct from $O_{\mathrm{fg}}^{(m)}$ and therefore do not 
enter this family.

The family is chosen so that the calibration loop remains structurally disjoint from the foreground 
service. This controlled setting cleanly isolates the semantic role of the event-structure quotient. 
The space of 
observable linear interleavings then grows as the set of valid shuffles of $m$ 
occurrences of the label $\mathsf{cal}$ with 
the three observable labels of the successful foreground chain:
\begin{equation}
N_{\mathrm{seq}}(m)\;=\;\binom{m+3}{3}.
\label{eq:word_count_repeater}
\end{equation}
We evaluate the runtime scaling across the parameterized family $E_m$ to empirically demonstrate 
the computational advantage of the event-structure quotient. 
The experiments compare a naive interleaving-based exact simulator against 
the true-concurrency quotient exploration. To strictly isolate the semantic acceleration of 
the quotient from the algebraic optimizations of the symbolic stabilizer backend utilized elsewhere, 
both approaches are evaluated over an identical dense-matrix backend.

Measured execution times are summarized in Table~\ref{tab:repeater_scaling}. 
As the number of background calibration events $m$ increases, 
the interleaving baseline suffers a severe combinatorial explosion, 
driven by the $\binom{m+3}{3}$ observable interleavings. 
In stark contrast, the quotient-aware approach collapses these redundant interleavings into 
a single observation pomset. While the evaluation depth still scales with $m$, 
the true-concurrency exploration successfully neutralizes the interleaving-induced explosion, 
achieving an $82.83\times$ speedup at $m=12$. This controlled setting explicitly confirms 
that the event-structure quotient translates directly into substantial computational savings. 
The prototype implementation and reproducibility scripts are available 
in the public repository~\cite{Ding2026SPOQPN}.

\begin{table}[t]
    \centering
    \caption{Representative runtimes for the bounded family $E_m$ on a common dense-matrix backend. 
    The table isolates the effect of the event-structure quotient on the disjoint calibration loop while keeping 
    the quantum backend fixed.}
    \label{tab:repeater_scaling}
    \resizebox{\columnwidth}{!}{%
    \begin{tabular}{ccccc}
        \toprule
        $m$ & $N_{\mathrm{seq}}(m)$ & Interleaving-Dense (ms) & Quotient-Dense (ms) & Speedup \\
        \midrule
        0  & 1 & 107.028 & 102.016 & 1.05 \\
        4  & 35 & 1120.850 & 160.754 & 6.97 \\
        8  & 165 & 8340.708 & 249.208 & 33.47 \\
        12  & 455 & 36196.499 & 436.991 & 82.83 \\
        \bottomrule
    \end{tabular}%
}
\end{table}

\subsection{Secret-Conditioned Posterior Certificates}
\label{subsec:repeater_zx}

While the bounded execution family evaluated the quotient mechanism, the base repeater instance 
also serves to illustrate the diagrammatic certification layer. To connect the algorithmic output 
with the framework of Section~\ref{sec:zx_certificates}, we synthesize the unnormalized exact 
posteriors of the base instance into restricted, secret-conditioned posterior certificates. 
The purpose is not to encode the full repeater network, but to visually isolate the 
structural distinction responsible for the leakage and to express it through compact, 
independently verifiable topological objects.

For $b=0$, the certificate reduces via standard rewrite rules to the normal form 
$\mathrm{NF}_{\mathrm{zero}}$, a pure-$\lvert 0\rangle$ diagrammatic object. Conversely, 
for $b=1$, the certificate evaluates via environmental discard rules into the mixed normal form 
$\mathrm{NF}_{\mathrm{mixed}}$. The graphical derivation for the secret execution is given in 
Fig.~\ref{fig:repeater_zx_proof}, and the final canonical normal forms are shown in 
Fig.~\ref{fig:canonical_certificate}. 

In contrast to the zero-leakage equivalence established 
in Corollary~\ref{cor:zx_zero_leakage_certificate}, 
their role here is explicitly diagnostic: they resolve to distinct canonical normal forms that 
diagrammatically isolate the structural mechanism of the algebraic leakage, 
despite originating from 
the identical foreground observation pomset.

\begin{figure*}[t]
\centering

\resizebox{\textwidth}{!}{ 
\begin{minipage}[b]{0.3\linewidth}
\centering
\begin{tikzpicture}
  \node[x spider, label=left:{$|0\rangle$}] (init) at (0, 0) {};
  \node[x spider] (target) at (1.5, 0) {};
  \coordinate (out) at (3, 0);
  \draw[thick] (init) -- (target) -- (out);
  \node[anchor=south] at (out) {$q_M$};

  \coordinate (env_in) at (0, 1.5);
  \node[z spider] (control) at (1.5, 1.5) {};
  \coordinate (discard_q2) at (3, 1.5);
  \draw[thick] (env_in) -- (control) -- (discard_q2);
  
  \draw[thick] (control) -- (target);

  \zxdiscard{env_in}
  \zxdiscard{discard_q2}
\end{tikzpicture}
\subcaption{1. Initial Circuit}
\end{minipage}
\hfill
$\xrightarrow{\text{Spider Fusion}}$
\hfill
\begin{minipage}[b]{0.3\linewidth}
\centering
\begin{tikzpicture}
  \node[x spider] (merged) at (1.5, 0) {};
  \coordinate (out) at (3, 0);
  \draw[thick] (merged) -- (out);
  \node[anchor=south] at (out) {$q_M$};

  \coordinate (env_in) at (0, 1.5);
  \node[z spider] (control) at (1.5, 1.5) {};
  \coordinate (discard_q2) at (3, 1.5);
  \draw[thick] (env_in) -- (control) -- (discard_q2);
  \draw[thick] (control) -- (merged);

  \zxdiscard{env_in}
  \zxdiscard{discard_q2}
\end{tikzpicture}
\subcaption{2. Merged Red Spiders}
\end{minipage}
\hfill
$\xrightarrow{\text{Discard Rule}}$
\hfill
\begin{minipage}[b]{0.3\linewidth}
\centering
\begin{tikzpicture}
  \coordinate (discard_final) at (1.2, 0);
  \coordinate (out) at (2.5, 0);
  \draw[thick] (discard_final) -- (out);
  \zxdiscard{discard_final}
  \node[anchor=south] at (out) {$q_M$};
\end{tikzpicture}
\subcaption{3. Maximally Mixed State}
\end{minipage}
}
\caption{Graphical algebraic derivation of the posterior state for the secret branch ($b=1$). 
Through sound rewrites in the mixed-state stabilizer ZX-calculus, the hidden CNOT coupling 
collapses to the maximally mixed state. Physically, this occurs because the control qubit ($q_2$), 
which is inherently maximally mixed due to its initial entanglement with the unobservable environment, 
is subsequently traced out via the Discard rules.}
\label{fig:repeater_zx_proof}
\end{figure*}
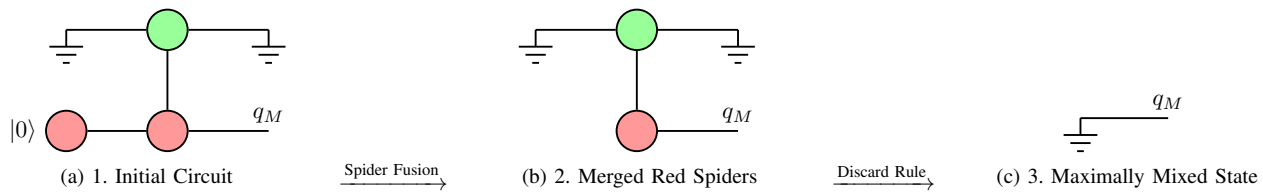

\begin{figure}[t]
\centering

\begin{minipage}[b]{0.45\linewidth}
\centering
\begin{tikzpicture}[scale=1.5]
  \node[x spider, label=left:{$|0\rangle$}] (pure_out) at (0, 0) {};
  \coordinate (out) at (1.5, 0);
  \draw[thick] (pure_out) -- (out);
  
  \node[anchor=south, font=\Large] at (out) {$q_M$};
  \node[anchor=north] at (0.75, -0.4) {\textbf{Canonical $\mathrm{NF}_{\mathrm{zero}}$ ($|0\rangle\langle0|$)}};
\end{tikzpicture}
\subcaption{Certificate A: Non-Secret Branch ($b=0$)}
\end{minipage}
\hfill
\begin{minipage}[b]{0.45\linewidth}
\centering
\begin{tikzpicture}[scale=1.5]
  \coordinate (discard_start) at (0, 0);
  \coordinate (out) at (1.5, 0);
  \draw[thick] (discard_start) -- (out);
  \zxdiscard{discard_start}
  
  \node[anchor=south, font=\Large] at (out) {$q_M$};
  \node[anchor=north] at (0.75, -0.4) {\textbf{Canonical $\mathrm{NF}_{\mathrm{mixed}}$ ($I/2$)}};
\end{tikzpicture}
\subcaption{Certificate B: Secret Branch ($b=1$)}
\end{minipage}

\caption{Consistent canonical posterior certificates (Normal Forms). The distinct topological structures of these two graphs provide a visually verifiable diagnostic of the posterior-state opacity violation at the attacker interface.}
\label{fig:canonical_certificate}
\end{figure}
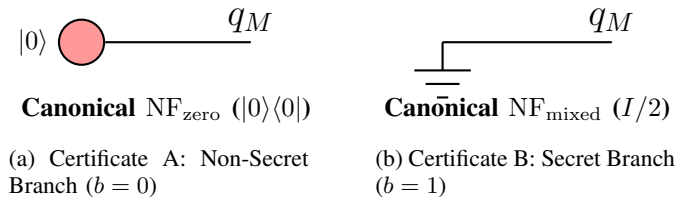

\subsection{Discussion}
\label{subsec:repeater_discussion}

This case study supports the main claims of the paper through a unified, causally coherent 
service model.
At the behavioral level, the completed successful round exposes the identical public observation 
pomset across both executions, while the corresponding attacker posteriors differ as derived 
in \eqref{eq:posterior_nonsecret} and \eqref{eq:posterior_secret}. The instance therefore 
demonstrates that observation equivalence within the observable event structure does not preclude 
quantum distinguishability at the attacker interface.

At the quantitative level, the leakage is exactly computable and admits a closed-form 
mitigation law. The base instance yields $L_{\rho_0}(O_{\mathrm{fg}})=0.5$, and the invisible 
depolarizing $\mu$-update reduces this value according to \eqref{eq:repeater_repaired_leakage}, 
with the minimal feasible intervention parameter analytically given by \eqref{eq:pstar_repeater}.

At the algorithmic level, the parameterized family $E_m$ confirms that the event-structure 
quotient fundamentally neutralizes the combinatorial explosion induced by observable interleavings. 
By collapsing these redundant interleavings into a single observation pomset, 
the evaluation strictly avoids the polynomial blow-up without altering the exact posterior aggregates.
This empirically validates the computational advantage of the true-concurrency semantics.

At the certification level, the algebraic posterior separation admits compact, secret-conditioned 
posterior certificates. These topological objects provide a visually verifiable diagnostic of 
the distinction detected by the symbolic engine, directly connecting the exact leakage computation 
with the normal-form structures developed in Section~\ref{sec:zx_certificates}.

Taken together, the repeater-service instance traces a coherent path from partially observed control 
logic to exact posterior leakage and then to independently verifiable diagrammatic certificates. 
Concurrently, the parameterized family $E_m$ demonstrates how true-concurrency semantics removes 
spurious interleaving artifacts, guaranteeing algorithmic tractability for the quantitative analysis.

\section{Conclusion}
\label{sec:conclusion}

This paper has established a concurrency-aware verification and enforcement framework for 
quantum opacity in partially observed systems, formally grounded in SPO-QPNs. 
The overarching contribution lies in demonstrating that true concurrency, 
partial observation, quantum side information, exact symbolic reasoning, diagrammatic certification, 
and closed-loop policy enforcement can be seamlessly integrated into a unified, 
mathematically rigorous architecture.

The architectural design of SPO-QPNs provides the necessary structural foundation. 
Under an event-structure semantics, the use of observation pomsets fundamentally neutralizes 
the combinatorial state-space explosion induced by observable interleavings, 
thereby eliminating the spurious security distinctions between structurally equivalent executions. 
Furthermore, the formulation of quantitative quantum opacity over normalized posterior states yields 
a rigorous, operationally meaningful leakage metric. To make this theory computationally tractable 
and analytically exact, we developed a true-concurrency symbolic aggregation engine for 
the stabilizer fragment. This algorithmic core is complemented by a ZX-calculus layer that 
resolves unnormalized algebraic posteriors into compact, 
independently verifiable diagrammatic certificates. Moving from verification to synthesis, 
the counterexample-guided enforcement loop connects the exact detection of opacity violations to 
certified closed-loop mitigation via classical flow restriction ($\delta$-updates) and 
invisible quantum state obfuscation ($\mu$-updates).

While the resulting framework is intentionally scoped to the stabilizer fragment, 
it remains structurally expressive: it successfully captures and mitigates genuinely 
concurrent quantum leakage phenomena while preserving exact computability and algorithmic efficiency. 
The empirical evaluation confirms that the true-concurrency semantics translates directly into 
substantial computational savings over interleaving-based baselines. 
To support independent reproducibility, the prototype implementation and 
associated evaluation scripts are available in the public artifact repository~\cite{Ding2026SPOQPN}. 
Extending this framework to encompass dynamic register allocation, channel-level opacity metrics, 
and decentralized quantum supervisory control forms a natural trajectory for future work. 
Ultimately, this precise balance between semantic expressive power and 
computational feasibility provides a robust foundation for the broader theory of 
information-flow security, formal verification, and automated synthesis 
in next-generation concurrent quantum systems.

\bibliographystyle{IEEEtran}

\bibliography{tmp}

\end{document}